\theoremstyle{plain}% Theorem-like structures provided by amsthm.sty
\newtheorem{theorem}{Theorem}[section]
\newtheorem{corollary}[theorem]{Corollary}
\newtheorem{proposition}[theorem]{Proposition}
\theoremstyle{definition}
\newtheorem{definition}[theorem]{Definition}
\theoremstyle{remark}
\newtheorem{remark}{Remark}
\journal{ArXiv.org}
\begin{document}

\begin{frontmatter}

\title{$\mathcal{H}_2$-optimal Model Order Reduction of Linear Quadratic Output Systems in Finite Frequency Range}

%% Group authors per affiliation:
\author[uz]{Umair~Zulfiqar}
\author[qs]{Qiu-Yan~Song\corref{mycorrespondingauthor}}
\cortext[mycorrespondingauthor]{Corresponding author}
\ead{qysong@shu.edu.cn}
\author[zx]{Zhi-Hua~Xiao}
\author[mud]{Mohammad~Monir~Uddin}
\author[vs]{Victor~Sreeram}
\address[uz]{School of Electronic Information and Electrical Engineering, Yangtze University, Jingzhou, Hubei, 434023, China}
\address[qs]{School of Mechatronic Engineering and Automation, Shanghai University, Shanghai, 200444, China}
\address[zx]{School of Statistics and Data Science, Nanjing Audit University, Nanjing, Jiangsu, 211815, China}
\address[mud]{Department of Mathematics and Physics, North South University, Dhaka, 1229, Bangladesh}
\address[vs]{Department of Electrical, Electronic, and Computer Engineering, The University of Western Australia, Perth, 6009, Australia}
\begin{abstract}
Linear quadratic output systems constitute an important class of dynamical systems with numerous practical applications. When the order of these models is exceptionally high, simulating and analyzing these systems becomes computationally prohibitive. In such instances, model order reduction offers an effective solution by approximating the original high-order system with a reduced-order model while preserving the system's essential characteristics.

In frequency-limited model order reduction, the objective is to maintain the frequency response of the original system within a specified frequency range in the reduced-order model. In this paper, a mathematical expression for the frequency-limited $\mathcal{H}_2$ norm is derived, which quantifies the error within the desired frequency interval. Subsequently, the necessary conditions for a local optimum of the frequency-limited $\mathcal{H}_2$ norm of the error are derived. The inherent difficulty in satisfying these conditions within a Petrov-Galerkin projection framework is also discussed. Using the optimality conditions and the Petrov-Galerkin projection, a stationary point iteration algorithm is proposed, which approximately satisfies these optimality conditions upon convergence. The main computational effort in the proposed algorithm involves solving sparse-dense Sylvester equations. These equations are frequently encountered in \(\mathcal{H}_2\) model order reduction algorithms and can be solved efficiently. Moreover, the algorithm bypasses the requirement of matrix logarithm computation, which is typically necessary for most frequency-limited reduction methods and can be computationally demanding for high-order systems. An illustrative example is provided to numerically validate the developed theory. The proposed algorithm's effectiveness in accurately approximating the original high-order model within the specified frequency range is demonstrated through the reduction of an advection-diffusion equation-based model, commonly used in model reduction literature for testing algorithms. Additionally, the algorithm's computational efficiency is highlighted by successfully reducing a flexible space structure model of order one million.
\end{abstract}

\begin{keyword}
$\mathcal{H}_2$-optimal\sep frequency-limited\sep model order reduction\sep projection\sep reduced-order model\sep quadratic output
\end{keyword}

\end{frontmatter}

%\linenumbers
\section{Introduction}
This study focuses on a specific class of nonlinear dynamical systems with weak nonlinearity. These systems have linear time-invariant (LTI) state equations but feature quadratic nonlinear terms in the output equation, referred to as linear quadratic output (LQO) systems \cite{montenbruck2017linear}. They naturally arise in scenarios where it is necessary to observe quantities involving products of state components in either the time or frequency domain. For example, they are used in applications that quantify energy or power, such as the internal energy functional of a system \cite{van2006port} or the objective cost function in optimal quadratic control problems \cite{picinbono1988optimal}. Additionally, they are used to measure the deviation of a state's coordinates from a reference point, such as the root mean squared displacement of spatial coordinates around an excitation point, or in stochastic modeling for calculating the variance of a random variable \cite{aumann2023structured}. These models are also used in the mathematical modeling of mechanical systems, including mass-spring-damper systems \cite{depken1974observability}, random vibration analysis \cite{haasdonk2013reduced,lutes2004random}, and electrical circuits governed by time-harmonic Maxwell's equations \cite{hammerschmidt2015reduced,hess2016output}.

Consider a LQO system described by the following state and output equations:
\begin{align}
 G:=
  \begin{cases}
   \dot{x}(t)=Ax(t)+Bu(t),\\
  y(t)=Cx(t)+\begin{bmatrix}x(t)^TM_1x(t)\\\vdots\\x(t)^TM_px(t)\end{bmatrix},
  \end{cases}\label{steq:1}
\end{align} wherein $x(t)\in\mathbb{R}^{n\times 1}$ represents the state vector, $u(t)\in\mathbb{R}^{m\times 1}$ represents inputs, and $y(t)\in\mathbb{R}^{p\times 1}$ represents outputs. $(A,B,C,M_1,\cdots,M_p)$ is the state-space realization of $G$ with $A\in\mathbb{R}^{n\times n}$, $B\in\mathbb{R}^{n\times m}$, $C\in\mathbb{R}^{p\times n}$, and $M_i\in\mathbb{R}^{n\times n}$. The state equation in (\ref{steq:1}) is identical to that of standard LTI systems. However, the output equation in (\ref{steq:1}) introduces a nonlinearity in the form of the quadratic function of states $x(t)^TM_ix(t)$. The matrix \( M_i \) is assumed to be symmetric without any loss of generality, as a symmetric \( M_i^\prime \) can always be constructed as \( M_i^\prime =\frac{M_i + M_i^T}{2} \), preserving the quadratic relation \( x(t)^T M_i^\prime x(t) \).

Let us denote $y_1(t)=Cx(t)$ and $y_{2,i}=x(t)^TM_ix(t)$. The input-output mapping between
$u(t)$ and $y_1(t)$ is represented by the following transfer function:
\begin{align}
G_1(s)&=C(sI-A)^{-1}B.\nonumber
\end{align} Additionally, the input-output mapping between $u(t)$ and $y_{2,i}(t)$ is represented by the following multivariate transfer function:
\begin{align}
G_{2,i}(s_1,s_2)&=B^T(s_1I-A)^{-*}M_i(s_2I-A)^{-1}B,\nonumber
\end{align}cf. \cite{reiter2024interpolatory}.

To ensure high fidelity in the mathematical modeling of complex physical phenomena, it is often necessary to use dynamical systems with very high orders, sometimes exceeding several thousand. This high order $n$ makes it computationally difficult or even prohibitive to simulate and analyze the model (\ref{steq:1}). Therefore, it is important to approximate (\ref{steq:1}) with a reduced-order model (ROM) of much lower order $k$ (where $k\ll n$). This approach simplifies the simulation and analysis processes. The process of constructing such a ROM while preserving the key features of the original model is known as model order reduction (MOR); refer to \cite{van2012model,benner2005dimension,antoulas2004model,antoulas2015model,antoulas2020interpolatory} for a deeper understanding of the topic.

Let us denote the $k^{th}$-order ROM of $G$ as $G_k$, characterized by the following state and output equations:
\begin{align}
 G_k:=
  \begin{cases}
   \dot{x_k}(t)=A_kx_k(t)+B_ku(t),\\
  y_k(t)=C_kx_k(t)+\begin{bmatrix}x_k(t)^TM_{k,1}x_k(t)\\\vdots\\x_k(t)^TM_{k,p}x_k(t)\end{bmatrix},
  \end{cases}\label{eq:2}
\end{align} wherein
$A_k=W_k^TAV_k\in\mathbb{R}^{k\times k}$, $B_k=W_k^TB\in\mathbb{R}^{k\times m}$, $C_k=CV_k\in\mathbb{R}^{p\times k}$, and $M_{k,i}=V_k^TM_iV_k\in\mathbb{R}^{k\times k}$, satisfying the Petrov-Galerkin projection condition $W_k^TV_k=I$. The projection matrices $V_k\in\mathbb{R}^{n\times k}$ and $W_k\in\mathbb{R}^{n\times k}$ project $G$ onto a reduced subspace to obtain the ROM $G_k$. Various MOR methods differ in how they construct $V_k$ and $W_k$. The choice of $V_k$ and $W_k$ depends on the specific characteristics of $G$ that need to be preserved in $G_k$.

Let $y_{k,1}(t)=C_kx_k(t)$ and $y_{k,2,i}=x_k(t)^TM_{k,i}x_k(t)$. The input-output relationships between $u(t)$ and $y_{k,1}(t)$, as well as $u(t)$ and $y_{k,2,i}(t)$, are described by the following transfer functions:
\begin{align}
G_{k,1}(s)&=C_k(sI-A_k)^{-1}B_k,\nonumber\\
G_{k,2,i}(s_1,s_2)&=B_k^T(s_1I-A_k)^{-*}M_{k,i}(s_2I-A_k)^{-1}B_k.\nonumber
\end{align}Throughout this paper, it is assumed that both $A$ and $A_k$ are Hurwitz.

The Balanced Truncation (BT) method, introduced in 1981, is a prominent technique for MOR \cite{moore1981principal}. This approach retains states that significantly effect energy transfer between inputs and outputs, discarding those with minimal impact as indicated by their Hankel singular values. A notable advantage of BT is its ability to estimate errors in advance of creating the ROM \cite{enns1984model}. Moreover, BT preserves the stability of the original system. Originally proposed for LTI systems, BT's application has broadened to include descriptor systems \cite{mehrmann2005balanced}, second-order systems \cite{reis2008balanced}, linear time-varying systems \cite{sandberg2004balanced}, parametric systems \cite{son2021balanced}, nonlinear systems \cite{kramer2022balanced}, and bilinear systems \cite{benner2017truncated}. Additionally, BT has been tailored to maintain specific system properties such as positive realness \cite{wong2007fast}, bounded realness \cite{guiver2013bounded}, passivity \cite{wong2004passivity}, and special structural characteristics \cite{sarkar2023structure}. For a detailed overview of the various BT algorithms, refer to the survey \cite{gugercin2004survey}. BT has been adapted for LQO systems in \cite{van2010model,pulch2019balanced,benner2021gramians}. Of these algorithms, only the one proposed in \cite{benner2021gramians} preserves the LQO structure in the ROM.

The $\mathcal{H}_2$-optimal MOR problem for standard LTI systems has been thoroughly explored in the literature. Wilson's conditions, which are necessary conditions for achieving a local optimum of the $\mathcal{H}_2$ norm of error, are outlined in \cite{wilson1970optimum}. The iterative rational Krylov algorithm (IRKA) was the first to apply interpolation theory to meet these conditions \cite{gugercin2008h_2}. Other algorithms using Sylvester equations and projection have been developed and enhanced for improved robustness in \cite{xu2011optimal} and \cite{MPIMD11-11}. Recently, the $\mathcal{H}_2$-optimal MOR problem for LQO systems has been addressed in \cite{reiter2024h2}, where a Sylvester equation-based algorithm has been proposed to achieve a local optimum upon convergence.

Many MOR problems are inherently frequency-limited, with certain frequency ranges being more important. For example, when creating a ROM for a notch filter, it is crucial to minimize the approximation error near the notch frequency \cite{aldhaheri2006frequency}. Similarly, for closed-loop stability, the ROM of a plant must accurately capture the system’s behavior in the crossover frequency region \cite{obinata2012model,ennsphd}. In interconnected power systems, low-frequency oscillations are essential for small-signal stability studies, so the ROM should accurately represent the behavior within the frequency range of inter-area and inter-plant oscillations \cite{zulfiqar2019finite,zulfiqar2020frequency}. This need has led to frequency-limited MOR, which focuses on achieving high accuracy within specific frequency intervals rather than the entire spectrum \cite{gawronski1990model}.

The frequency-limited MOR problem aims to create a ROM that ensures that
\begin{align}
||G_1(j\nu)-G_{k,1}(j\nu)||\hspace*{0.5cm}\textnormal{and}\hspace*{0.5cm}||G_{2,i}(j\nu_1,j\nu_2)-G_{k,2,i}(j\nu_1,j\nu_2)||\nonumber
\end{align} are small when $\nu$, $\nu_1$, and $\nu_2$ are within the specified frequency range of $[0,\omega]$ rad/sec.

BT typically offers an accurate approximation of the original model across the entire frequency spectrum. In \cite{gawronski1990model}, BT was adapted to address the frequency-limited MOR problem, resulting in the frequency-limited BT (FLBT) algorithm. However, FLBT does not preserve the stability and \textit{a priori} error bounds that BT does. The computational aspects of FLBT and efficient methods for handling large-scale systems are discussed in \cite{benner2016frequency}. Additionally, FLBT has been expanded to cover a wider range of systems, including descriptor systems \cite{imran2015model}, second-order systems \cite{benner2021frequency}, and bilinear systems \cite{shaker2013frequency}. FLBT has been recently generalized for LQO systems in \cite{song2024balanced}.

The frequency-limited $\mathcal{H}_2$ norm for LTI systems is defined in \cite{petersson2014model}, with Gramian-based conditions outlined for achieving a local optimum. Generalizations of the iterative rational Krylov algorithm (IRKA) have resulted in the development of the frequency-limited IRKA (FLIRKA) in \cite{vuillemin2013h2,du2021computational,zulfiqar2023frequency}, which approximately fulfills these conditions. The computation of Gramians in BT and FLBT is inherently expensive, necessitating low-rank approximations for their practical use in large-scale systems. For LQO systems, computing Gramians in FLBT is significantly more demanding than in LTI systems. As a result, various low-rank algorithms for Gramian approximation have been investigated in \cite{song2024balanced}. In $\mathcal{H}_2$-optimal MOR, the main computational burden lies in solving sparse-dense Sylvester equations, which is significantly more efficient than Gramian computations in FLBT. This computational efficiency advantage motivates the investigation of frequency-limited $\mathcal{H}_2$-optimal MOR for LQO systems, which is addressed in this paper.

This research work makes several important contributions. First, it introduces the frequency-limited $\mathcal{H}_2$ norm ($\mathcal{H}_{2,\omega}$ norm) for LQO systems and shows how to compute it using frequency-limited system Gramians defined in \cite{song2024balanced}. Second, it derives the necessary conditions for achieving a local optimum of $||G - G_k||_{\mathcal{H}_{2,\omega}}^2$. Third, it compares these conditions with those for standard $\mathcal{H}_2$-optimal MOR \cite{reiter2024h2}, highlighting that Petrov-Galerkin projection generally cannot achieve a local optimum in the frequency-limited scenario. Fourth, a stationary point algorithm based on Petrov-Galerkin projection is proposed, which approximately satisfies the necessary optimality conditions upon convergence. Fifth, an efficient method is introduced to approximate the computationally intensive matrix logarithm that appears in most frequency-limited MOR algorithms. The paper includes numerical examples demonstrating the algorithm's accuracy within the specified frequency range and showing its superiority over existing methods. The computational efficiency of the proposed algorithm is also illustrated by reducing a test system of the order of one million.
\section{Literature Review}
In this section, we will briefly explore two key MOR algorithms relevant to LQO systems within the context of the problem at hand. The first is the FLBT \cite{song2024balanced}, and the second is the $\mathcal{H}_2$-optimal MOR method \cite{reiter2024h2}.
\subsection{Frequency-limited Balanced Truncation (FLBT) \cite{song2024balanced}}
FLBT constructs the ROM by truncating states that contribute minimally to the input-output energy transfer within the desired frequency range of $[0,\omega]$ rad/sec. This is achieved by constructing a frequency-limited balanced realization using frequency-limited Gramians and then truncating the states corresponding to the smallest frequency-limited Hankel singular values.

The frequency-limited controllability Gramian $P_\omega$ within the desired frequency interval $[0,\omega]$ rad/sec is given by
\begin{align}
P_\omega=\frac{1}{2\pi}\int_{-\omega}^{\omega}(j\nu I-A)^{-1}BB^T(j\nu I -A)^{-*}d\nu.
\end{align}
In the majority of frequency-limited MOR algorithms, the following integral expression arises, which has been shown to be equivalent to a matrix-valued logarithm in \cite{gawronski1990model,petersson2013nonlinear}. This integral is defined as follows:
\begin{align}
F_\omega=\frac{1}{2\pi}\int_{-\omega}^{\omega}(j\nu I-A)^{-1}d\nu=\frac{j}{\pi}\ln(-j\omega I-A).
\end{align}Several algorithms have been investigated in \cite{higham2008functions} for computing matrix-valued logarithms of this form; however, their computational complexity makes them impractical for large-scale systems. With this, $P_\omega$ can be computed by solving the following Lyapunov equation:
\begin{align}
AP_\omega+P_\omega A^T+ F_\omega BB^T+BB^TF_\omega^*&=0.\label{eq:9}
\end{align}
The frequency-limited observability Gramian $Q_\omega = Y_\omega + Z_\omega$ within the frequency range $[0,\omega]$ rad/sec is defined as
\begin{align}
Y_\omega&=\frac{1}{2\pi}\int_{-\omega}^{\omega}(j\nu I-A)^{-*}C^TC(j\nu I -A)^{-1}d\nu,\\
Z_\omega&=\frac{1}{2\pi}\int_{-\omega}^{\omega}(j\nu_1 I-A)^{-*}\Bigg(\sum_{i=1}^{p}M_i\Big(\frac{1}{2\pi}\int_{-\omega}^{\omega}(j\nu_2 I-A)^{-1}BB^T\nonumber\\
&\hspace*{3.8cm}(j\nu_2 I-A)^{-*}d\nu_2\Big)M_i\Bigg)(j\nu_1 I-A)^{-1}d\nu_1\nonumber\\
&=\frac{1}{2\pi}\int_{-\omega}^{\omega}(j\nu_1 I-A)^{-*}\big(\sum_{i=1}^{p}M_iP_\omega M_i\big)(j\nu_1 I-A)^{-1}d\nu_1.
\end{align} $Y_\omega$, $Z_\omega$, and $Q_\omega$ can be calculated by solving the following Lyapunov equations:
\begin{align}
A^TY_\omega+Y_\omega A+F_\omega^*C^TC+C^TCF_\omega&=0,\\
A^TZ_\omega+Z_\omega A+\sum_{i=1}^{p}\big(F_\omega^*M_iP_\omega M_i+M_iP_\omega M_iF_\omega\big)&=0,\\
A^TQ_\omega+Q_\omega A+F_\omega^*C^TC+C^TCF_\omega\hspace*{2cm}&\nonumber\\
+\sum_{i=1}^{p}\big(F_\omega^*M_iP_\omega M_i+M_iP_\omega M_iF_\omega\big)&=0.\label{new_eq:10}
\end{align}
The frequency-limited Hankel singular values $\sigma_i$ are defined as
\begin{align}
\sigma_i=\sqrt{\lambda_i(P_\omega Q_\omega)}\hspace*{0.5cm}\textnormal{for} \hspace*{0.5cm}i=1,\cdots,n,\nonumber
\end{align}where $\lambda_i(\cdot)$ denotes the eigenvalues. The projection matrices in FLBT are then computed such that $W_k^T P_\omega W_k = V_k^T Q_\omega V_k = \text{diag}(\sigma_1, \cdots, \sigma_k)$, where $\sigma_1, \cdots, \sigma_k$ are the $k$ largest frequency-limited Hankel singular values of $G$. The ROM $G_k$ obtained using these projection matrices provides high accuracy within the specified frequency range.
\subsection{$\mathcal{H}_2$-optimal MOR Algorithm (HOMORA) \cite{reiter2024h2}}
HOMORA is an efficient algorithm for LQO systems that, like BT, ensures good accuracy across the entire frequency spectrum rather than targeting a specific frequency range. Unlike BT, however, it eliminates the need to compute Lyapunov equations, which are computationally expensive in large-scale settings.

Let us define the matrices $P_{12}$, $P_k$, $Y_{12}$, $Y_k$, $Z_{12}$, $Z_k$, $Q_{12}$, and $Q_k$, which satisfy the following set of linear matrix equations:
\begin{align}
AP_{12}+P_{12}A_k^T+BB_k^T&=0,\nonumber\\
A_kP_k+P_kA_k^T+B_kB_k^T&=0,\nonumber\\
A^TY_{12}+Y_{12}A_k+C^TC_k&=0,\nonumber\\
A_k^TY_k+Y_kA_k+C_k^TC_k&=0,\nonumber\\
A^TZ_{12}+Z_{12}A_k+\sum_{i=1}^{p}M_iP_{12}M_{k,i}&=0,\nonumber\\
A_k^TZ_k+Z_kA_k+\sum_{i=1}^{p}M_{k,i}P_kM_{k,i}&=0,\nonumber\\
A^TQ_{12}+Q_{12}A_k+C^TC_k+\sum_{i=1}^{p}M_iP_{12}M_{k,i}&=0,\nonumber\\
A_k^TQ_k+Q_kA_k+C_k^TC_k+\sum_{i=1}^{p}M_{k,i}P_kM_{k,i}&=0.\nonumber
\end{align}
According to \cite{reiter2024h2}, the necessary conditions for achieving a local optimum of the (squared) $\mathcal{H}_2$-norm of the error, denoted as $||G - G_k||_{\mathcal{H}_2}^2$, are described by the following set of equations:
\begin{align}
-(Y_{12}+2Z_{12})^TP_{12}+(Y_k+2Z_k)P_k&=0,\label{op01}\\
-P_{12}^TM_iP_{12}+P_kM_{k,i}P_k&=0,\label{op02}\\
-(Y_{12}+2Z_{12})^TB+(Y_k+2Z_k)B_k&=0,\label{op03}\\
-CP_{12}+C_kP_k&=0.\label{op04}
\end{align}
Furthermore, it is shown that these optimality conditions can be met by setting the projection matrices as $V_k = P_{12}$ and $W_k = (Y_{12} + 2Z_{12})\big(P_{12}^T(Y_{12} + 2Z_{12})\big)^{-1}$. Starting with an initial guess for the ROM, the projection matrices are iteratively updated until convergence, at which point the optimality conditions (\ref{op01})-(\ref{op04}) are satisfied.
\section{Main Work}
The computation of Lyapunov equations (\ref{eq:9}) and (\ref{new_eq:10}) is costly in large-scale settings. In the infinite-frequency case, HOMORA holds an advantage over BT in that it avoids similar Lyapunov equations. Inspired by HOMORA's computational efficiency compared to BT, we now explore its frequency-limited counterpart, which likewise avoids the computation of Lyapunov equations (\ref{eq:9}) and (\ref{new_eq:10}).

In this section, we define the frequency-limited $\mathcal{H}_2$ norm for LQO systems and establish its connection to the frequency-limited observability Gramian $Q_\omega$. We then derive the necessary conditions for achieving a local optimum of the (squared) frequency-limited $\mathcal{H}_2$ norm of the error. Building on these optimality conditions, we introduce a projection-based iterative algorithm that closely satisfies these conditions upon convergence. The inherent limitation of exactly satisfying these optimality conditions within the projection framework is also addressed. Finally, the computational aspects of the proposed algorithm are briefly discussed.
\subsection{$\mathcal{H}_{2,\omega}$ norm Definition}
The classical $\mathcal{H}_2$ norm for LQO systems is defined in the frequency domain as follows:
\begin{align}
||G||_{\mathcal{H}_2}&=\Bigg[\operatorname{trace}\Big(\frac{1}{2\pi}\int_{-\infty}^{\infty}G_1^*(j\nu)G_1(j\nu)d\nu\nonumber\\
&\hspace*{1cm}+\frac{1}{(2\pi)^2}\int_{-\infty}^{\infty}\int_{-\infty}^{\infty}\sum_{i=1}^{p}G_{2,i}^*(j\nu_1,j\nu_2)G_{2,i}(j\nu_1,j\nu_2)d\nu_1d\nu_2\Big)\Bigg]^{-\frac{1}{2}},\nonumber
\end{align}cf. \cite{reiter2024interpolatory,reiter2024h2}.
The $\mathcal{H}_2$ norm quantifies the output response's power to unit white noise across the entire frequency spectrum. However, for the problem at hand, we are only interested in the output response's power within a specific, limited frequency range. This leads to the definition of the frequency-limited $\mathcal{H}_2$ norm.
\begin{definition}
The frequency-limited $\mathcal{H}_2$ norm of the LQO system within the frequency interval $[0,\omega]$ rad/sec is defined as
\begin{align}
||G||_{\mathcal{H}_{2,\omega}}&=\Big[\operatorname{trace}\Big(\frac{1}{2\pi}\int_{-\omega}^{\omega}G_1^*(j\nu)G_1(j\nu)d\nu\nonumber\\
&\hspace*{1cm}+\frac{1}{(2\pi)^2}\int_{-\omega}^{\omega}\int_{-\omega}^{\omega}\sum_{i=1}^{p}G_{2,i}^*(j\nu_1,j\nu_2)G_{2,i}(j\nu_1,j\nu_2)d\nu_1d\nu_2\Big)\Big]^{-\frac{1}{2}}.\nonumber
\end{align}
\end{definition}
\begin{proposition}
The $\mathcal{H}_{2,\omega}$ norm is related to the frequency-limited observability Gramian $Q_\omega$ as follows:
\begin{align}
||G||_{\mathcal{H}_{2,\omega}}=\sqrt{\operatorname{trace}(B^TQ_\omega B)}.\nonumber
\end{align}
\end{proposition}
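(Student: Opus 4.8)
The plan is to square the norm, reduce the trace of the two frequency integrals to $\operatorname{trace}(B^{T}Q_\omega B)$ by identifying each integral with one of the frequency-limited Gramians defined earlier, and then take the square root. Since $Q_\omega=Y_\omega+Z_\omega$, it suffices to show separately that the linear (i.e., $G_1$) term contributes $\operatorname{trace}(B^{T}Y_\omega B)$ and that the quadratic (i.e., $G_{2,i}$) term contributes $\operatorname{trace}(B^{T}Z_\omega B)$; linearity of the trace then combines them.

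First I would treat the linear term. Substituting $G_1(j\nu)=C(j\nu I-A)^{-1}B$ together with $G_1^{*}(j\nu)=B^{T}(j\nu I-A)^{-*}C^{T}$, the integrand becomes $B^{T}(j\nu I-A)^{-*}C^{T}C(j\nu I-A)^{-1}B$. Since $B$ and $B^{T}$ are independent of $\nu$, they can be pulled outside the integral, so that $\frac{1}{2\pi}\int_{-\omega}^{\omega}G_1^{*}(j\nu)G_1(j\nu)\,d\nu=B^{T}Y_\omega B$ directly from the definition of $Y_\omega$.

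The quadratic term is the crux. Using the symmetry $M_i=M_i^{T}$, the conjugate transpose yields $G_{2,i}^{*}(j\nu_1,j\nu_2)=B^{T}(j\nu_2 I-A)^{-*}M_i(j\nu_1 I-A)^{-1}B$, so the product $G_{2,i}^{*}(j\nu_1,j\nu_2)G_{2,i}(j\nu_1,j\nu_2)$ contains the inner factor $(j\nu_1 I-A)^{-1}BB^{T}(j\nu_1 I-A)^{-*}$, which depends only on $\nu_1$, sandwiched between factors that depend only on $\nu_2$. The key step is to exploit this separable structure via Fubini and carry out the $\nu_1$ integration first: the inner integral $\frac{1}{2\pi}\int_{-\omega}^{\omega}(j\nu_1 I-A)^{-1}BB^{T}(j\nu_1 I-A)^{-*}\,d\nu_1$ equals $P_\omega$ by its definition, collapsing the double integral to a single integral in $\nu_2$ with middle factor $M_i P_\omega M_i$. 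Summing over $i$ and pulling $B^{T}$ and $B$ outside then gives $B^{T}Z_\omega B$ by the definition of $Z_\omega$.

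Finally I would combine the two contributions, obtaining $||G||_{\mathcal{H}_{2,\omega}}^{2}=\operatorname{trace}\big(B^{T}(Y_\omega+Z_\omega)B\big)=\operatorname{trace}(B^{T}Q_\omega B)$, and take the square root. The main obstacle is the bookkeeping in the quadratic term: carefully tracking the conjugate transposes, invoking $M_i=M_i^{T}$, and, above all, recognizing that the separable $\nu_1$–$\nu_2$ structure of the double integral permits the inner $\nu_1$ integration to be performed independently and identified with $P_\omega$. Once that factorization is seen, the remaining manipulations are routine.
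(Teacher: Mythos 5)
Your proposal is correct and follows essentially the same route as the paper's proof: both identify the single-variable integral with $Y_\omega$, use the separable $\nu_1$--$\nu_2$ structure to collapse the inner integration of the quadratic term to $P_\omega$ and hence identify the double integral with $Z_\omega$, and then combine via $Q_\omega=Y_\omega+Z_\omega$. Your explicit tracking of the conjugate transposes and the symmetry $M_i=M_i^T$ is a slightly more careful write-up of the same computation.
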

\begin{proof}
Observe that
\begin{align}
&\operatorname{trace}\Big(\frac{1}{2\pi}\int_{-\omega}^{\omega}G_1^*(j\nu)G_1(j\nu)d\nu\Big)\nonumber\\
&=\operatorname{trace}\Big(B^T\Big[\frac{1}{2\pi}\int_{-\omega}^{\omega}(j\nu I-A)^{-*}C^TC(j\nu I-A)^{-1}\Big]Bd\nu\Big)\nonumber\\
&=\operatorname{trace}(B^TY_\omega B).\nonumber
\end{align}
Additionally, note that
\begin{align}
&\operatorname{trace}\Big(\frac{1}{(2\pi)^2}\int_{-\omega}^{\omega}\int_{-\omega}^{\omega}\sum_{i=1}^{p}G_{2,i}^*(j\nu_1,j\nu_2)G_{2,i}(j\nu_1,j\nu_2)d\nu_1d\nu_2\Big)\nonumber\\
&=\operatorname{trace}\Bigg(B^T\Big[\frac{1}{2\pi}\int_{-\omega}^{\omega}(j\nu_2 I-A)^{-*}\Big(\sum_{i=1}^{p}M_i\Big(\frac{1}{2\pi}\int_{-\omega}^{\omega}(j\nu_1 I-A)^{-1}BB^T\nonumber\\
&\hspace*{4cm}(j\nu_1 I-A)^{-*}d\nu_1\Big)M_i\Big)(j\nu_2 I-A)^{-1}d\nu_2\Big]B\Bigg)\nonumber\\
&=\operatorname{trace}(B^TZ_\omega B).\nonumber
\end{align}
Therefore, we have $||G||_{\mathcal{H}_{2,\omega}}=\sqrt{\operatorname{trace}\big(B^T(Y_\omega+Z_\omega)B\big)}=\sqrt{\operatorname{trace}\big(B^T(Q_\omega)B\big)}$.
\end{proof}
\subsection{$\mathcal{H}_{2,\omega}$ Norm of the Error}
Let us define $G-G_k$ with the following state-space equations
\begin{align}
 G-G_k:=
  \begin{cases}
   \dot{x_e}(t)=\begin{bmatrix}x(t)\\x_k(t)\end{bmatrix}=A_ex_e(t)+B_eu(t),\\
  y_e(t)=y(t)-y_k(t)=C_ex_e(t)+\begin{bmatrix}x_e(t)^TM_{e,1}x_e(t)\\\vdots\\x_e(t)^TM_{e,p}x_e(t)\end{bmatrix},\nonumber
  \end{cases}
\end{align}wherein
\begin{align}
A_e&=\begin{bmatrix}A&0\\0&A_k\end{bmatrix},&B_e&=\begin{bmatrix}B\\B_k\end{bmatrix},\nonumber\\
M_{e,i}&=\begin{bmatrix}M_i&0\\0&-M_{k,i}\end{bmatrix},&C_e&=\begin{bmatrix}C&-C_k\end{bmatrix}.\label{partreal}
\end{align}
Let us define $F_{e,\omega}$ as follows
\begin{align}
F_{e,\omega}=\frac{1}{2\pi}\int_{-\omega}^{\omega}(j\nu I-A_e)^{-1}d\nu=\frac{j}{\pi}ln(-j\omega I-A_e).\nonumber
\end{align}
Then the frequency-limited controllability Gramian $P_{e,\omega}$ and the frequency-limited observability Gramian $Q_{e,\omega}=Y_{e,\omega}+Z_{e,\omega}$ of realization $(A_e$,$B_e$,$C_e$,$M_{e,1}$, $\cdots$,$M_{e,p})$ can be determined by solving the following Lyapunov equations:
\begin{align}
&\hspace*{3cm}A_eP_{e,\omega}+P_{e,\omega} A_e^T+ F_{e,\omega} B_eB_e^T+B_eB_e^TF_{e,\omega}^*=0,\nonumber\\
&\hspace*{3.1cm}A_e^TY_{e,\omega}+Y_{e,\omega} A_e+F_{e,\omega}^*C_e^TC_e+C_e^TC_eF_{e,\omega}=0,\nonumber\\
&A_e^TZ_{e,\omega}+Z_{e,\omega} A_e+\sum_{i=1}^{p}\big(F_{e,\omega}^*M_{e,i}P_{e,\omega} M_{e,i}+M_{e,i}P_{e,\omega} M_{e,i}F_{e,\omega}\big)=0,\nonumber\\
&A_e^TQ_{e,\omega}+Q_{e,\omega} A_e+F_{e,\omega}^*C_e^TC_e+C_e^TC_eF_{e,\omega}\nonumber\\
&\hspace*{2.75cm}+\sum_{i=1}^{p}\big(F_{e,\omega}^*M_{e,i}P_{e,\omega}M_{e,i}+M_{e,i}P_{e,\omega} M_{e,i}F_{e,\omega}\big)=0.\nonumber
\end{align}
Let us partition $P_{e,\omega}$, $Y_{e,\omega}$, $Z_{e,\omega}$, and $Q_{e,\omega}$ according to (\ref{partreal}) as follows:
\begin{align}
P_{e,\omega}&=\begin{bmatrix}P_\omega&P_{12,\omega}\\P_{12,\omega}^*&P_{k,\omega}\end{bmatrix},&
Y_{e,\omega}&=\begin{bmatrix}Y_\omega&-Y_{12,\omega}\\-Y_{12,\omega}^*&Y_{k,\omega}\end{bmatrix},\nonumber\\
Z_{e,\omega}&=\begin{bmatrix}Z_\omega&-Z_{12,\omega}\\-Z_{12,\omega}^*&Z_{k,\omega}\end{bmatrix},&
Q_{e,\omega}&=\begin{bmatrix}Q_\omega&-Q_{12,\omega}\\-Q_{12,\omega}^*&Q_{k,\omega}\end{bmatrix}.\nonumber
\end{align}
Additionally, define $F_{k,\omega}$ as
\begin{align}
F_{k,\omega}=\frac{1}{2\pi}\int_{-\omega}^{\omega}(j\nu I-A_k)^{-1}d\nu=\frac{j}{\pi}ln(-j\omega I-A_k).\nonumber
\end{align}
The following linear matrix equations then hold:
\begin{align}
&\hspace*{3.9cm}AP_{12,\omega}+P_{12,\omega}A_k^T+F_\omega BB_k^T+BB_k^TF_{k,\omega}^*=0,\label{eq:24}\\
&\hspace*{3.45cm}A_kP_{k,\omega}+P_{k,\omega}A_k^T+F_{k,\omega}B_kB_k^T+B_kB_k^TF_{k,\omega}^*=0,\label{eq:25}\\
&\hspace*{3.54cm}A^TY_{12,\omega}+Y_{12,\omega}A_k+F_\omega^*C^TC_k+C^TC_kF_{k,\omega}=0,\label{eq:26}\\
&\hspace*{3.55cm}A_k^TY_{k,\omega}+Y_{k,\omega}A_k+F_{k,\omega}^*C_k^TC_k+C_k^TC_kF_{k,\omega}=0,\label{eq:27}\\
&\hspace*{0.52cm}A^TZ_{12,\omega}+Z_{12,\omega}A_k+\sum_{i=1}^{p}\big(F_\omega^*M_iP_{12,\omega}M_{k,i}+M_iP_{12,\omega}M_{k,i}F_{k,\omega}\big)=0,\label{eq:28}\\
&\hspace*{0.3cm}A_k^TZ_{k,\omega}+Z_{k,\omega}A_k+\sum_{i=1}^{p}\big(F_{k,\omega}^*M_{k,i}P_{k,\omega}M_{k,i}+M_{k,i}P_{k,\omega}M_{k,i}F_{k,\omega}\big)=0,\label{eq:29}\\
&A^TQ_{12,\omega}+Q_{12,\omega}A_k+F_\omega^*C^TC_k+C^TC_kF_{k,\omega}\nonumber\\
&\hspace*{3.6cm}+\sum_{i=1}^{p}\big(F_\omega^*M_iP_{12,\omega}M_{k,i}+M_iP_{12,\omega}M_{k,i}F_{k,\omega}\big)=0,\label{eq:30}\\
&A_k^TQ_{k,\omega}+Q_{k,\omega}A_k+F_{k,\omega}^*C_k^TC_k+C_k^TC_kF_{k,\omega}\nonumber\\
&\hspace*{3.1cm}+\sum_{i=1}^{p}\big(F_{k,\omega}^*M_{k,i}P_{k,\omega}M_{k,i}+M_{k,i}P_{k,\omega}M_{k,i}F_{k,\omega}\big)=0.\label{eq:31}
\end{align}
Finally, the $\mathcal{H}_{2,\omega}$ norm of $G-G_k$ can be expressed as:
\begin{align}
||G-G_k||_{\mathcal{H}_{2,\omega}}&=\sqrt{\operatorname{trace}(B_e^TQ_{e,\omega}B_e)}\nonumber\\
&=\sqrt{\operatorname{trace}(B^TQ_\omega B-2B^TQ_{12,\omega}B_k+B_k^TQ_{k,\omega}B_k)}.\nonumber
\end{align}
\begin{corollary}
The expression $||G-G_k||_{\mathcal{H}_{2,\omega}}^2=||G||_{\mathcal{H}_{2,\omega}}^2-2\langle  G , G_k \rangle_{\mathcal{H}_{2,\omega}}+||G_k||_{\mathcal{H}_{2,\omega}}^2$ holds, where $\langle  G , G_k \rangle_{\mathcal{H}_{2,\omega}}$ denotes the $\mathcal{H}_{2,\omega}$ inner product of $G$ and $G_k$.
\end{corollary}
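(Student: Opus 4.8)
The plan is to read the decomposition directly off the closed-form expression already derived for the squared error norm, namely $||G-G_k||_{\mathcal{H}_{2,\omega}}^2=\operatorname{trace}(B^TQ_\omega B-2B^TQ_{12,\omega}B_k+B_k^TQ_{k,\omega}B_k)$, and to match each of its three trace terms with one of the three terms on the right-hand side of the claimed identity. The first term is handled immediately: by the Proposition, $\operatorname{trace}(B^TQ_\omega B)=||G||_{\mathcal{H}_{2,\omega}}^2$.

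For the third term I would observe that equations (\ref{eq:25}) and (\ref{eq:31}) governing $P_{k,\omega}$ and $Q_{k,\omega}$ are structurally identical to the full-system Gramian equations (\ref{eq:9})--(\ref{new_eq:10}), but with $(A,B,C,M_i,P_\omega,F_\omega)$ replaced throughout by $(A_k,B_k,C_k,M_{k,i},P_{k,\omega},F_{k,\omega})$. Hence $Q_{k,\omega}$ is exactly the frequency-limited observability Gramian of the reduced LQO system $G_k$ taken on its own, and applying the Proposition to $G_k$ yields $\operatorname{trace}(B_k^TQ_{k,\omega}B_k)=||G_k||_{\mathcal{H}_{2,\omega}}^2$.

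The main work lies in the cross term, i.e.\ in showing that $\langle G,G_k\rangle_{\mathcal{H}_{2,\omega}}=\operatorname{trace}(B^TQ_{12,\omega}B_k)$, where the inner product is the natural bilinear pairing obtained by polarizing the $\mathcal{H}_{2,\omega}$ norm definition,
\begin{align}
\langle G,G_k\rangle_{\mathcal{H}_{2,\omega}}&=\operatorname{trace}\Big(\frac{1}{2\pi}\int_{-\omega}^{\omega}G_1^*(j\nu)G_{k,1}(j\nu)\,d\nu\nonumber\\
&\hspace*{0.6cm}+\frac{1}{(2\pi)^2}\int_{-\omega}^{\omega}\int_{-\omega}^{\omega}\sum_{i=1}^{p}G_{2,i}^*(j\nu_1,j\nu_2)G_{k,2,i}(j\nu_1,j\nu_2)\,d\nu_1d\nu_2\Big).\nonumber
\end{align}
I would evaluate this exactly as in the proof of the Proposition, but now with mixed full/reduced factors. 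Substituting the transfer-function definitions and pulling $B$ and $B_k$ outside the trace, the linear part collapses to $\operatorname{trace}(B^TY_{12,\omega}B_k)$ once the inner integral is recognized as the cross Gramian $Y_{12,\omega}$ characterized by (\ref{eq:26}); likewise the quadratic part collapses to $\operatorname{trace}(B^TZ_{12,\omega}B_k)$ via the mixed frequency-limited Gramian $Z_{12,\omega}$ of (\ref{eq:28}). Adding the two contributions and using $Q_{12,\omega}=Y_{12,\omega}+Z_{12,\omega}$, which holds blockwise from the partitions of $Q_{e,\omega}=Y_{e,\omega}+Z_{e,\omega}$, gives $\langle G,G_k\rangle_{\mathcal{H}_{2,\omega}}=\operatorname{trace}(B^TQ_{12,\omega}B_k)$.

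With the three identifications established, substituting them into the closed form produces $||G-G_k||_{\mathcal{H}_{2,\omega}}^2=||G||_{\mathcal{H}_{2,\omega}}^2-2\langle G,G_k\rangle_{\mathcal{H}_{2,\omega}}+||G_k||_{\mathcal{H}_{2,\omega}}^2$, as claimed. The principal obstacle I anticipate is the cross-term computation: unlike the single-system case of the Proposition, the mixed integrals pair full-order and reduced-order transfer functions, so one must carefully track the integral representations of $Y_{12,\omega}$ and $Z_{12,\omega}$ and verify that they coincide with the solutions of the Sylvester-type equations (\ref{eq:26}) and (\ref{eq:28}). The bookkeeping for the quadratic term, where the inner frequency integral produces a $P_{12,\omega}$ factor sandwiched between $M_i$ and $M_{k,i}$, is expected to be the most delicate part.
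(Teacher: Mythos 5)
Your proposal is correct and follows essentially the same route as the paper: both reduce the claim to showing that the cross term $\operatorname{trace}(B^TQ_{12,\omega}B_k)$ equals the polarized $\mathcal{H}_{2,\omega}$ inner product, by substituting the transfer-function definitions and recognizing the resulting integrals as the integral representations of $Y_{12,\omega}$ and $Z_{12,\omega}$ solving (\ref{eq:26}) and (\ref{eq:28}). The only difference is that you spell out the identification of the first and last terms (via the Proposition applied to $G$ and to $G_k$), which the paper dismisses as straightforward.
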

\begin{proof}
The first and last terms in the expression for $||G-G_k||_{\mathcal{H}_{2,\omega}}^2$ are straightforward. The main objective is to demonstrate that the middle term corresponds to the $\mathcal{H}_{2,\omega}$ inner product of $G$ and $G_k$. By expanding the definition of the inner product, we can express it as:
\begin{align}
\langle  G , G_k \rangle_{\mathcal{H}_{2,\omega}}&=\frac{1}{2\pi}\int_{-\omega}^{\omega}\operatorname{trace}\big(G_1^*(j\nu)G_{k,1}(j\nu)d\nu\big)\nonumber\\
&+\frac{1}{4\pi^2}\operatorname{trace}\Big(\int_{-\omega}^{\omega}\int_{-\omega}^{\omega}\sum_{i=1}^{p}\big(G_{2,i}^*(j\nu_1,j\nu_2)G_{k,2,i}(j\nu_1,j\nu_2)\big)d\nu_1d\nu_2\Big).\nonumber
\end{align}
Furthermore,
\begin{align}
&\operatorname{trace}\Big(\frac{1}{2\pi}\int_{-\omega}^{\omega}G_1^*(j\nu)G_{k,1}(j\nu)d\nu\Big)\nonumber\\
&=\operatorname{trace}\Big(B^T\big(\frac{1}{2\pi}\int_{-\omega}^{\omega}(j\nu I-A)^{-*}C^TC_k(j\nu I-A_k)^{-1}d\nu\big)B_k\Big),\nonumber\\
&\operatorname{trace}\Big(\frac{1}{4\pi^2}\int_{-\omega}^{\omega}\int_{-\omega}^{\omega}\sum_{i=1}^{p}G_{2,i}^*(j\nu_1,j\nu_2)G_{k,2,i}(j\nu_1,j\nu_2)d\nu_1d\nu_2\Big)\nonumber\\
&=\operatorname{trace}\Bigg(B^T\Big[\frac{1}{4\pi^2}\int_{-\omega}^{\omega}(j\nu_1 I-A)^{-*}\Big(\sum_{i=1}^{p}M_i\big(\int_{-\omega}^{\omega}(j\nu_2 I-A)^{-1}B\nonumber\\
&\hspace*{4cm}B_k^T(j\nu_2 I-A_k)^{-*}d\nu_2\big)M_{k,i}\Big)(j\nu_1 I-A_k)^{-1}\Big]B_k\Bigg).\nonumber
\end{align}
The Sylvester equations (\ref{eq:26}) and (\ref{eq:28}) can be solved by evaluating the following integrals:
\begin{align}
Y_{12,\omega}&=\frac{1}{2\pi}\int_{-\omega}^{\omega}(j\nu I-A)^{-*}C^TC_k(j\nu I-A_k)^{-1}d\nu,\nonumber\\
Z_{12,\omega}&=\frac{1}{4\pi^2}\int_{-\omega}^{\omega}(j\nu_1 I-A)^{-*}\Bigg(\sum_{i=1}^{p}M_i\Big(\int_{-\omega}^{\omega}(j\nu_2 I-A)^{-1}B\nonumber\\
&\hspace*{4cm}B_k^T(j\nu_2 I-A_k)^{-*}\Big)M_{k,i}\Bigg)(j\nu_1 I-A_k)^{-1}d\nu_1;\nonumber
\end{align}cf. \cite{benner2021gramians,song2024balanced} Consequently, the $\mathcal{H}_{2,\omega}$ inner product between $G$ and $G_k$ can be written as
\begin{align}
\langle  G , G_k \rangle_{\mathcal{H}_{2,\omega}}=\operatorname{trace}\big(B^T(Y_{12,\omega}+Z_{12,\omega})B_k\big)=\operatorname{trace}(B^TQ_{12,\omega}B_k).\nonumber
\end{align}
\end{proof}
\subsection{Optimality Conditions}
In this subsection, we present the necessary conditions for achieving a local optimum of $||G-G_k||_{\mathcal{H}_{2,\omega}}^2$. These optimality conditions require the introduction of several new variables. We begin by defining $\bar{Z}_{12}$ and $\bar{Z}_k$ as the solutions to the following equations:
\begin{align}
A^T\bar{Z}_{12}+\bar{Z}_{12}A_k+\sum_{i=1}^{p}M_iP_{12,\omega}M_{k,i}=0,\nonumber\\
A_k^T\bar{Z}_k+\bar{Z}_kA_k+\sum_{i=1}^{p}M_{k,i}P_{k,\omega}M_{k,i}=0.\nonumber
\end{align}
It is important to note that $P_{12,\omega}$, $P_{k,\omega}$, $Z_{12,\omega}$ and $Z_{k,\omega}$ can be derived from $P_{12}$, $P_k$, $\bar{Z}_{12}$ and $\bar{Z}_k$, respectively, by restricting the integration limits to $[0,\omega]$ rad/sec in their integral definitions; see \cite{song2024balanced} for details. Next, we define $\tilde{P}_{12}$, $\tilde{P}_k$, $\tilde{Z}_{12}$, and $\tilde{Z}_k$ as follows:
\begin{align}
\tilde{P}_{12}&=P_{12}\Big|_{-\infty}^{\infty}-P_{12}\Big|_{0}^{\omega}=P_{12}-P_{12,\omega},\label{nst46}\\
\tilde{P}_k&=P_k\Big|_{-\infty}^{\infty}-P_k\Big|_{0}^{\omega}=P_k-P_{k,\omega},\label{nst47}\\
\tilde{Z}_{12}&=\bar{Z}_{12}\Big|_{-\infty}^{\infty}-\bar{Z}_{12}\Big|_{0}^{\omega}=\bar{Z}_{12}-Z_{12,\omega},\label{nst48}\\
\tilde{Z}_k&=\bar{Z}_k\Big|_{-\infty}^{\infty}-\bar{Z}_k\Big|_{0}^{\omega}=\bar{Z}_k-Z_{k,\omega}.\label{nst49}
\end{align}
Additionally, we define $V$, $W$ and $L_\omega$ as follows:
\begin{align}
V&=B_kB^T\bar{Z}_{12}-B_kB_k^T\bar{Z}_k+P_{12}^TC^TC_k-P_kC_kC_k^T\nonumber\\
&\hspace*{2cm}+P_{12}^T\sum_{i=1}^{p}M_iP_{12,\omega}M_{k,i}-P_k\sum_{i=1}^{p}M_{k,i}P_{k,\omega}M_{k,i},\nonumber\\
W&=\frac{j}{2\pi}\mathcal{L}(-j\nu I-A_k,V),\nonumber\\
L_\omega&=-Q_{12,\omega}^*\tilde{P}_{12}-\tilde{Z}_{12}^*P_{12,\omega}+Q_{k,\omega}\tilde{P}_k+\tilde{Z}_kP_{k,\omega}+W^*,\nonumber
\end{align}where $\mathcal{L}(-j\nu I-A_k,V)$ denotes the Fr\'{e}chet derivative of the matrix logarithm $ln(-j\nu I-A_k)$ in the direction of the matrix $V$, specifically:
\begin{align}
\mathcal{L}(-j\nu I-A_k,V)&=\int_{0}^{1}\big(\nu(-j\nu I-A_k-I)+I\big)^{-1}V\nonumber\\
&\hspace*{4cm}\big(\nu(-j\nu I-A_k-I)+I\big)^{-1}d\nu;\nonumber
\end{align}cf. \cite{higham2008functions}.

We are now ready to state the necessary conditions for a local optimum of $||G-G_k||_{\mathcal{H}_{2,\omega}}^2$.
\begin{theorem}\label{th1}
The local optimum of $||G-G_k||_{\mathcal{H}_{2,\omega}}^2$ must satisfy the following necessary conditions:
\begin{align}
-(Y_{12,\omega}+2Z_{12,\omega})^*P_{12,\omega}+(Y_{k,\omega}+2Z_{k,\omega})P_{k,\omega}+L_\omega&=0,\label{op1}\\
-P_{12,\omega}^*M_iP_{12,\omega}+P_{k,\omega}M_{k,i}P_{k,\omega}&=0,\label{op2}\\
-(Y_{12,\omega}+2Z_{12,\omega})^*B+(Y_{k,\omega}+2Z_{k,\omega})B_k&=0,\label{op3}\\
-CP_{12,\omega}+C_kP_{k,\omega}&=0.\label{op4}
\end{align}
\end{theorem}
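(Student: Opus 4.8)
The plan is to treat $||G-G_k||_{\mathcal{H}_{2,\omega}}^2$ as a smooth function of the reduced matrices $(A_k,B_k,C_k,M_{k,1},\dots,M_{k,p})$ and to read off the four conditions as the vanishing of its partial gradients. Starting from the Corollary, I would write $||G-G_k||_{\mathcal{H}_{2,\omega}}^2=\operatorname{trace}(B^TQ_\omega B)-2\operatorname{trace}(B^TQ_{12,\omega}B_k)+\operatorname{trace}(B_k^TQ_{k,\omega}B_k)$, in which the first term is independent of the ROM and only the last two are differentiated. Splitting $Q=Y+Z$ and using the integral definitions together with cyclicity of the trace, I would record the dual (controllability-Gramian) identities $\operatorname{trace}(B^TY_{12,\omega}B_k)=\operatorname{trace}(CP_{12,\omega}C_k^T)$, $\operatorname{trace}(B_k^TY_{k,\omega}B_k)=\operatorname{trace}(C_kP_{k,\omega}C_k^T)$, $\operatorname{trace}(B^TZ_{12,\omega}B_k)=\sum_{i=1}^p\operatorname{trace}(P_{12,\omega}^*M_iP_{12,\omega}M_{k,i})$, and $\operatorname{trace}(B_k^TZ_{k,\omega}B_k)=\sum_{i=1}^p\operatorname{trace}(P_{k,\omega}M_{k,i}P_{k,\omega}M_{k,i})$. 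These forms expose $C_k$ and $M_{k,i}$ explicitly and isolate the only nontrivial couplings: $P_{12,\omega},P_{k,\omega}$ depend on $B_k$ and $A_k$, while the logarithm weights $F_\omega,F_{k,\omega}$ depend only on $A,A_k$.

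The observation that organizes the whole proof is that among the reduced matrices only $A_k$ enters the matrix logarithm $F_{k,\omega}=\tfrac{j}{\pi}\ln(-j\omega I-A_k)$. Consequently the gradients with respect to $C_k$, $M_{k,i}$, and $B_k$ are structurally identical to the infinite-frequency computation of \cite{reiter2024h2}, differing only in that every Gramian is replaced by its frequency-limited counterpart. Since $C_k$ and $M_{k,i}$ enter neither $F_{k,\omega}$ nor $P_{12,\omega},P_{k,\omega}$, differentiating the dual expressions directly gives $-CP_{12,\omega}+C_kP_{k,\omega}=0$ and $-P_{12,\omega}^*M_iP_{12,\omega}+P_{k,\omega}M_{k,i}P_{k,\omega}=0$, i.e. (\ref{op4}) and (\ref{op2}). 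For $B_k$, which enters explicitly and implicitly through $P_{12,\omega},P_{k,\omega}$ but not through $F_{k,\omega}$, I would compute the explicit part directly and the implicit part by an adjoint argument on the Sylvester equations (\ref{eq:24})--(\ref{eq:25}); the adjoint variables are exactly $\bar{Z}_{12},\bar{Z}_k$, and combining them with the $F_\omega,F_{k,\omega}$ weights forced into the perturbed $P$-equations reproduces the frequency-limited $Z_{12,\omega},Z_{k,\omega}$. Adding the explicit contribution from $Q_{12,\omega}=Y_{12,\omega}+Z_{12,\omega}$ to this implicit one doubles the $Z$-terms, so the $(Y+2Z)$ pattern of (\ref{op3}) emerges.

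The remaining and hardest condition (\ref{op1}) is the $A_k$-gradient, which splits into an explicit and an implicit contribution. The explicit appearance of $A_k$ in the defining equations (\ref{eq:24})--(\ref{eq:31}) is handled by the same adjoint bookkeeping as the infinite-frequency case and produces the leading terms $-(Y_{12,\omega}+2Z_{12,\omega})^*P_{12,\omega}+(Y_{k,\omega}+2Z_{k,\omega})P_{k,\omega}$. The implicit contribution, entering through $F_{k,\omega}=\tfrac{j}{\pi}\ln(-j\omega I-A_k)$, is the crux. Here I would first aggregate the total sensitivity of the cost to $F_{k,\omega}$ into a single direction matrix---this is precisely the $V$ defined before the theorem, collecting the factors that multiply $F_{k,\omega}$ in the $B_kB^T$, $B_kB_k^T$, $C^TC_k$, $C_kC_k^T$, and $M_iP_{12,\omega}M_{k,i}$, $M_{k,i}P_{k,\omega}M_{k,i}$ terms---and then transport it through the Fr\'echet derivative $\mathcal{L}(-j\omega I-A_k,V)$ of the logarithm to obtain $W$. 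The surviving cross terms, produced because differentiating a frequency-limited Gramian with respect to $A_k$ returns the difference of a full-range and a limited-range quantity, assemble into the complementary combination $-Q_{12,\omega}^*\tilde{P}_{12}-\tilde{Z}_{12}^*P_{12,\omega}+Q_{k,\omega}\tilde{P}_k+\tilde{Z}_kP_{k,\omega}$ with $\tilde{P}_{12},\tilde{P}_k,\tilde{Z}_{12},\tilde{Z}_k$ the full-minus-limited residuals (\ref{nst46})--(\ref{nst49}). Together with $W^*$ these form $L_\omega$, and setting the full $A_k$-gradient to zero yields (\ref{op1}).

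I expect the \emph{main obstacle} to be this implicit $A_k$-dependence through the matrix logarithm. Unlike a Sylvester coefficient, $\ln(-j\omega I-A_k)$ admits no product-rule derivative, so its variation must be taken through the integral (Fr\'echet) representation, and the sensitivities of the frequency-limited Gramians do not close on themselves but split into full-range and limited-range pieces. Keeping track of this split---so that the residuals collapse exactly into $\tilde{P}_{12},\tilde{P}_k,\tilde{Z}_{12},\tilde{Z}_k$ and the logarithm-derivative term into $W$---is the delicate bookkeeping that distinguishes (\ref{op1}) from its infinite-frequency analogue (\ref{op01}) and is entirely absent from the other three conditions.
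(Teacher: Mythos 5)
Your proposal follows essentially the same route as the paper's Appendix proof: differentiate the ROM-dependent part of the cost $\operatorname{trace}(-2B^TQ_{12,\omega}B_k+B_k^TQ_{k,\omega}B_k)$ matrix-by-matrix, use adjoint manipulations of the Sylvester equations (with $\bar Z_{12},\bar Z_k$ and the full-range $P_{12},P_k$ as adjoint variables) for the implicit dependencies, transport the $F_{k,\omega}$-sensitivity through the Fr\'echet derivative of the matrix logarithm to obtain $W$, and collect the full-minus-limited residuals $\tilde P_{12},\tilde P_k,\tilde Z_{12},\tilde Z_k$ into $L_\omega$. The only cosmetic difference is that you handle $M_{k,i}$ via the dual controllability-Gramian trace identity (the paper does this only for $C_k$ and perturbs the observability-side quantities for $M_{k,i}$), which is an equivalent and equally valid computation.
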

\begin{proof}
The proof of this theorem is lengthy and complex, so it is provided in the Appendix.
\end{proof}
\subsection{Comparison with Local Optimum of $||G-G_k||_{\mathcal{H}_2}^2$}
In this subsection, we compare the necessary conditions for the local optima of $||G-G_k||_{\mathcal{H}_2}^2$ and $||G-G_k||_{\mathcal{H}_{2,\omega}}^2$.
To begin, we provide the expression for $||G-G_k||_{\mathcal{H}_2}$ as presented in \cite{benner2021gramians}. The controllability Gramian $P_e$ and the observability Gramian $Q_e=Y_e+Z_e$ of realization $(A_e,B_e,C_e,M_{e,1},\cdots,M_{e,p})$ can be computed by solving the following Lyapunov equations:
\begin{align}
A_eP_e+P_e A_e^T+ B_eB_e^T&=0,\nonumber\\
A_e^TY_e+Y_e A_e+C_e^TC_e&=0,\nonumber\\
A_e^TZ_e+Z_e A_e+\sum_{i=1}^{p}M_{e,i}P_e M_{e,i}&=0,\nonumber\\
A_e^TQ_e+Q_e A_e+C_e^TC_e+\sum_{i=1}^{p}M_{e,i}P_eM_{e,i}&=0.\nonumber
\end{align}
We then partition $P_e$, $Y_e$, $Z_e$, and $Q_e$ according to (\ref{partreal}) as follows:
\begin{align}
P_e&=\begin{bmatrix}P&P_{12}\\P_{12}^T&P_k\end{bmatrix},&
Y_e&=\begin{bmatrix}Y&-Y_{12}\\-Y_{12}^T&Y_k\end{bmatrix},\nonumber\\
Z_e&=\begin{bmatrix}Z&-Z_{12}\\-Z_{12}^T&Z_k\end{bmatrix},&
Q_e&=\begin{bmatrix}Q&-Q_{12}\\-Q_{12}^T&Q_k\end{bmatrix}.\nonumber
\end{align}
The $\mathcal{H}_2$ norm of $G-G_k$ can be expressed as:
\begin{align}
||G-G_k||_{\mathcal{H}_2}&=\sqrt{\operatorname{trace}(B_e^TQ_eB_e)}\nonumber\\
&=\sqrt{\operatorname{trace}(B^TQ B-2B^TQ_{12}B_k+B_k^TQ_kB_k)}.\nonumber
\end{align}
The optimality conditions (\ref{op1})-(\ref{op4}) and (\ref{op01})-(\ref{op04}) are similar, but there are some important differences. By restricting the integration limit of $P_e$ and $Q_e$ to $[0,\omega]$ rad/sec, the optimality conditions (\ref{op2})-(\ref{op4}) can be derived from (\ref{op02})-(\ref{op04}), respectively. However, the optimality condition (\ref{op01}) does not simplify to (\ref{op1}) by merely limiting the integration range.

Furthermore, from the optimality conditions (\ref{op02})-(\ref{op04}), we can deduce the optimal selections for $M_{k,i}$, $B_k$, and $C_k$ as:
\begin{align}
M_{k,i}&=P_k^{-1}P_{12}^TM_iP_{12}P_k^{-1},\label{ocm0}\\
B_k&=(Y_k+2Z_k)^{-1}(Y_{12}+2Z_{12})^TB,\label{obm0}\\
C_k&=CP_{12}P_k^{-1}.\label{occ0}
\end{align}
By restricting the integration limits of $P_e$ and $Q_e$ to $[0,\omega]$ rad/sec, we can derive the frequency-limited optimal choices for $M_{k,i}$, $B_k$, and $C_k$ from (\ref{ocm0})-(\ref{occ0}) as follows:
\begin{align}
M_{k,i}&=P_{k,\omega}^{-1}P_{12,\omega}^*M_iP_{12,\omega}P_{k,\omega}^{-1},\label{ocm1}\\
B_k&=(Y_{k,\omega}+2Z_{k,\omega})^{-1}(Y_{12,\omega}+2Z_{12,\omega})^*B,\label{obm1}\\
C_k&=CP_{12,\omega}P_{k,\omega}^{-1}.\label{occ1}
\end{align}
The optimal projection matrices $V_k$ and $W_k$ for computing a local optimum of $||G-G_k||_{\mathcal{H}_2}^2$ are given by:
\begin{align}
V_k&=P_{12}P_k^{-1},&W_k&=(Y_{12}+2Z_{12})(Y_k+2Z_k)^{-1}.\nonumber
\end{align}
In the frequency-limited scenario, by setting:
\begin{align}
V_k&=P_{12,\omega}P_{k,\omega}^{-1},&W_k&=(Y_{12,\omega}+2Z_{12,\omega})(Y_{k,\omega}+2Z_{k,\omega})^{-1},\nonumber
\end{align} we make the optimal choices for $M_{k,i}$, $B_k$, and $C_k$ as indicated by the optimality conditions (\ref{op2})-(\ref{op4}). However, with this choice of $V_k$ and $W_k$, determining an optimal $A_k$ remains elusive. By enforcing the Petrov–Galerkin projection condition $W_k^*V_k=I$, we ensure
\begin{align}
-(Y_{12,\omega}+2Z_{12,\omega})^*P_{12,\omega}+(Y_{k,\omega}+2Z_{k,\omega})P_{k,\omega}=0.\nonumber
\end{align}It is important to note that, generally, $L_\omega$ does not simplify to zero with this choice of projection matrices. Consequently, this selection introduces a deviation in the optimality condition (\ref{op1}) quantified by $L_\omega$. In contrast, in the classical $\mathcal{H}_2$-optimal MOR framework, enforcing the Petrov–Galerkin projection condition $W_k^TV_k=I$ satisfies the optimality condition (\ref{op01}) and achieves a local optimum of $||G-G_k||_{\mathcal{H}_{2,\omega}}^2$. In summary, it is generally impossible to attain a local optimum of $||G-G_k||_{\mathcal{H}_{2,\omega}}^2$ within the projection framework. While the optimality conditions (\ref{op2})-(\ref{op4}) can be precisely met, the optimality condition (\ref{op1}) can only be approximately satisfied.

Up to this point, we have determined the appropriate projection matrices $V_k=P_{12,\omega}P_{k,\omega}^{-1}$ and $W_k=(Y_{12,\omega}+2Z_{12,\omega})(Y_{k,\omega}+2Z_{k,\omega})^{-1}$ for the problem at hand. However, these matrices depend on the ROM $(A_k,B_k,C_k,M_{k,1},\cdots,M_{k,p})$, which is unknown. Therefore, equations (\ref{eq:2}) and (\ref{eq:24})-(\ref{eq:29}) form a coupled system of equations, expressed as:
\begin{align}
(A_k,B_k,C_k,M_{k,1},\cdots,M_{k,p})&=f(P_{12,\omega},P_{k,\omega},Y_{12,\omega},Y_{k,\omega},Z_{12,\omega},Z_{k,\omega}),\nonumber\\
(P_{12,\omega},P_{k,\omega},Y_{12,\omega},Y_{k,\omega},Z_{12,\omega},Z_{k,\omega})&=g(A_k,B_k,C_k,M_{k,1},\cdots,M_{k,p}).\nonumber
\end{align}
The stationary points of the function
\begin{align}
(A_k,B_k,C_k,M_{k,1},\cdots,M_{k,p})=f\big(g(A_k,B_k,C_k,M_{k,1},\cdots,M_{k,p})\big)\nonumber
\end{align} satisfy the optimality conditions (\ref{op2})-(\ref{op4}). Additionally, by enforcing the Petrov-Galerkin projection condition $W_k^*V_k=I$, the optimality condition (\ref{op1}) is approximately satisfied, with the deviation quantified by $L_\omega$. In the classical $\mathcal{H}_2$-optimal MOR scenario, the situation is similar; however, enforcing the Petrov–Galerkin projection condition $W_k^TV_k=I$ at the stationary points ensures that all the optimality conditions (\ref{op01})-(\ref{op04}) are fully satisfied.

In the classical $\mathcal{H}_2$-optimal MOR case, it is demonstrated that if the reduction matrices are chosen as $V_k=P_{12}$ and $W_k=Y_{12}+2Z_{12}$ instead of $V_k=P_{12}P_k^{-1}$ and $W_k=(Y_{12}+2Z_{12})(Y_k+2Z_k)^{-1}$, the stationary points satisfy:
\begin{align}
P_k&=I,\hspace*{0.5cm}\textnormal{and}\hspace*{0.5cm}Y_k+2Z_k=I.\nonumber
\end{align}Thus, the projection matrices $V_k=P_{12}$ and $W_k=Y_{12}+2Z_{12}$, along with the Petrov–Galerkin projection condition $W_k^TV_k=I$, satisfy all the optimality conditions (\ref{op01})-(\ref{op04}). However, using $V_k=P_{12,\omega}$ and $W_k=Y_{12,\omega}+2Z_{12,\omega}$ along with the Petrov–Galerkin projection condition $W_k^*V_k=I$ does not satisfy any of the optimality conditions (\ref{op1})-(\ref{op4}).
\begin{theorem}
If $W_k^*F_\omega B=F_{k,\omega}B_k$, $CF_\omega V_k=C_kF_{k,\omega}$, and $V_k^*M_iF_\omega V_k=M_{k,i}F_{k,\omega}$, then selecting $V_k=P_{12,\omega}$ and $W_k=Y_{12,\omega}+2Z_{12,\omega}$, together with the Petrov–Galerkin projection condition $W_k^*V_k=I$ ensures that:
\begin{align}
P_{k,\omega}&=I,\hspace*{0.5cm}\textnormal{and}\hspace*{0.5cm}Y_{k,\omega}+2Z_{k,\omega}=I,\label{eq:64}
\end{align} which in turn satisfies the optimality conditions (\ref{op2})-(\ref{op4}).
\end{theorem}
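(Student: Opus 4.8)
The plan is to mirror the argument used for the classical $\mathcal{H}_2$ case quoted just above the statement. With $V_k=P_{12,\omega}$, $W_k=Y_{12,\omega}+2Z_{12,\omega}$ and $W_k^*V_k=I$, the reduced realization is $A_k=W_k^*AV_k$, $B_k=W_k^*B$, $C_k=CV_k$, $M_{k,i}=V_k^*M_iV_k$. I would show that projecting the \emph{cross} Sylvester equations (\ref{eq:24}) and (\ref{eq:26})--(\ref{eq:28}) onto the reduced space turns them into the \emph{reduced} Lyapunov equations (\ref{eq:25}) and (\ref{eq:27})--(\ref{eq:29}) with the unknown Gramian replaced by the identity. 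Since $A_k$ is Hurwitz, each reduced Lyapunov operator is invertible, so matching right-hand sides forces the reduced Gramian to equal $I$. The three hypotheses are exactly what is needed to convert the projected $F_\omega$-weighted driving terms into the corresponding $F_{k,\omega}$-weighted reduced driving terms; note that in the unlimited limit $F_\omega\to\tfrac12 I$, so all three hold automatically, which explains why no analogous assumption is required in the classical case.

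First I would establish $P_{k,\omega}=I$. Pre-multiplying the cross controllability equation (\ref{eq:24}) by $W_k^*$ and using $W_k^*AV_k=A_k$, $W_k^*V_k=I$, $W_k^*B=B_k$, together with the first hypothesis $W_k^*F_\omega B=F_{k,\omega}B_k$, collapses (\ref{eq:24}) to
\begin{align}
A_k+A_k^T+F_{k,\omega}B_kB_k^T+B_kB_k^TF_{k,\omega}^*=0.\nonumber
\end{align}
This is precisely (\ref{eq:25}) evaluated at $P_{k,\omega}=I$, so by uniqueness of the solution of the reduced controllability Lyapunov equation, $P_{k,\omega}=I$.

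Next I would establish $Y_{k,\omega}+2Z_{k,\omega}=I$, which is the more delicate step and the main obstacle. Forming the combination (\ref{eq:26})$+2\times$(\ref{eq:28}) gives a single cross Sylvester equation in $W_k=Y_{12,\omega}+2Z_{12,\omega}$, which I would pre-multiply by $V_k^*=P_{12,\omega}^*$. The outer terms reduce using $V_k^*A^TW_k=A_k^T$ and $V_k^*W_k=(W_k^*V_k)^*=I$. For the driving terms I would take conjugate transposes of the hypotheses: $CF_\omega V_k=C_kF_{k,\omega}$ yields $V_k^*F_\omega^*C^T=F_{k,\omega}^*C_k^T$, and $V_k^*M_iF_\omega V_k=M_{k,i}F_{k,\omega}$ yields $V_k^*F_\omega^*M_iV_k=F_{k,\omega}^*M_{k,i}$. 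Combined with $C_k=CV_k$, $M_{k,i}=V_k^*M_iV_k$, $P_{12,\omega}=V_k$, and the already-proven $P_{k,\omega}=I$ (which turns each $M_{k,i}M_{k,i}$ into $M_{k,i}P_{k,\omega}M_{k,i}$), every projected term matches its reduced counterpart. The result is exactly (\ref{eq:27})$+2\times$(\ref{eq:29}) evaluated at $Y_{k,\omega}+2Z_{k,\omega}=I$, so uniqueness again forces $Y_{k,\omega}+2Z_{k,\omega}=I$.

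Finally, the optimality conditions (\ref{op2})--(\ref{op4}) follow by direct substitution: with $M_{k,i}=P_{12,\omega}^*M_iP_{12,\omega}$ and $P_{k,\omega}=I$, condition (\ref{op2}) reads $-P_{12,\omega}^*M_iP_{12,\omega}+P_{12,\omega}^*M_iP_{12,\omega}=0$; with $B_k=(Y_{12,\omega}+2Z_{12,\omega})^*B$ and $Y_{k,\omega}+2Z_{k,\omega}=I$, condition (\ref{op3}) cancels; and with $C_k=CP_{12,\omega}$ and $P_{k,\omega}=I$, condition (\ref{op4}) cancels. The delicate bookkeeping is entirely in the second step, where care is needed to apply the hypotheses in transposed form and to treat the reduced realization as real so that transpose and conjugate transpose coincide on $A_k$, $C_k$, and $M_{k,i}$; the quadratic $Z$-term also relies crucially on having $P_{k,\omega}=I$ available before closing the $Y+2Z$ computation, which fixes the order of the two steps.
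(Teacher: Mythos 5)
Your proposal is correct and follows essentially the same route as the paper: pre-multiplying the cross equation (\ref{eq:24}) by $W_k^*$ to get $P_{k,\omega}=I$ by uniqueness, then projecting the combined equation for $Y_{12,\omega}+2Z_{12,\omega}$ with $V_k$ (you pre-multiply by $V_k^*$, the paper takes the Hermitian and post-multiplies by $V_k$ --- equivalent since both sides vanish) and invoking uniqueness again, with the optimality conditions following by substitution. Your explicit remarks on the ordering of the two steps and on the real/complex bookkeeping are sound and, if anything, slightly more careful than the paper's own write-up.
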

\begin{proof}
By pre-multiplying (\ref{eq:24}) with $W_k^*$, we obtain:
\begin{align}
W_k^*\big(AP_{12,\omega}+P_{12,\omega}A_k^T+F_\omega BB_k^T+BB_k^TF_{k,\omega}^*\big)&=0\nonumber\\
A_k+A_k^T+F_{k,\omega}B_kB_k^T+B_kB_k^TF_{k,\omega}&=0.\nonumber
\end{align} Given the uniqueness of solution for equation (\ref{eq:25}), we have $P_{k,\omega}=I$.

Note that $Y_{12,\omega}+2Z_{12,\omega}$ and $Y_{k,\omega}+2Z_{k,\omega}$ satisfies the following equations:
\begin{align}
&A^T(Y_{12,\omega}+2Z_{12,\omega})+(Y_{12,\omega}+2Z_{12,\omega})A_k+F_\omega CC_k^T+CC_k^TF_{k,\omega}^*\nonumber\\
&\hspace*{2.95cm}+2\sum_{i=1}^{p}\big(F_\omega^*M_iP_{12,\omega}M_{k,i}+M_iP_{12,\omega}M_{k,i}F_{k,\omega}\big)=0,\label{lmeq:35}\\
&A_k^T(Y_{k,\omega}+2Z_{k,\omega})+(Y_{k,\omega}+2Z_{k,\omega})A_k+F_{k,\omega} C_kC_k^T+C_kC_k^TF_{k,\omega}^*\nonumber\\
&\hspace*{2.95cm}+2\sum_{i=1}^{p}\big(F_{k,\omega}^*M_{k,i}P_{k,\omega}M_{k,i}+M_{k,i}P_{k,\omega}M_{k,i}F_{k,\omega}\big)=0.\label{lmeq:36}
\end{align}
Taking the Hermitian of equation (\ref{lmeq:35}) and post-multiplying it by $V_k$, we get:
\begin{align}
&\Big(A_k^T(Y_{12,\omega}+2Z_{12,\omega})^*+(Y_{12,\omega}+2Z_{12,\omega})^*A+C_k^TCF_\omega+F_{k,\omega}^*C_k^TC\nonumber\\
&\hspace*{3.3cm}+\sum_{i=1}^{p}2M_{k,i}P_{12,\omega}^*M_iF_\omega+\sum_{i=1}^{p}2F_{k,\omega}^*M_{k,i}P_{12,\omega}^*M_i\Big)V_k\nonumber\\
&=A_k^T+A_k+C_k^TC_kF_{k,\omega}+F_{k,\omega}^*C_k^TC_k\nonumber\\
&\hspace*{3.3cm}+\sum_{i=1}^{p}2M_{k,i}M_{k,i}F_{k,\omega}+\sum_{i=1}^{p}2F_{k,\omega}^*M_{k,i}M_{k,i}=0.\nonumber
\end{align}With the uniqueness of solutions for equations (\ref{eq:25}) and (\ref{lmeq:36}), we have $P_{k,\omega}=I$ and $Y_{k,\omega}+2Z_{k,\omega}=I$. As a result, the optimality conditions (\ref{op2})-(\ref{op4}) are met with $V_k=P_{12,\omega}$, and $W_k=Y_{12,\omega}+2Z_{12,\omega}$.
\end{proof}
In general, choosing $V_k=P_{12,\omega}$ and $W_k=Y_{12,\omega}+2Z_{12,\omega}$ while enforcing the Petrov–Galerkin projection condition $W_k^*V_k=I$ does not meet the conditions $W_k^*F_\omega B=F_{k,\omega}B_k$, $CF_\omega V_k=C_kF_{k,\omega}$, and $V_k^TM_iF_\omega V_k=M_{k,i}F_{k,\omega}$. Consequently, the condition (\ref{eq:64}) is not satisfied, and therefore, optimality conditions (\ref{op2})-(\ref{op4}) are not fulfilled.

The rationale for using $V_k = P_{12,\omega}$ and $W_k = Y_{12,\omega} + 2Z_{12,\omega}$ instead of $V_k = P_{12,\omega} P_{k,\omega}^{-1}$ and $W_k = (Y_{12,\omega} + 2Z_{12,\omega})(Y_{k,\omega} + 2Z_{k,\omega})^{-1}$ from a Petrov-Galerkin projection perspective is that both sets span the same subspace and yield the same ROM but with different state-space realizations \cite{gallivan2004sylvester}. In the next section, it will be demonstrated that eliminating $P_{k,\omega}^{-1} \) and \( (Y_{k,\omega} + 2Z_{k,\omega})^{-1}$ does not result in any noticeable loss of accuracy, and the deviations from the optimality conditions are similarly insignificant. Therefore, numerically, $V_k = P_{12,\omega}$ and $W_k = Y_{12,\omega} + 2Z_{12,\omega}$ are valid choices, even though they are theoretically inferior to $V_k = P_{12,\omega}P_{k,\omega}^{-1}$ and $W_k = (Y_{12,\omega} + 2Z_{12,\omega})(Y_{k,\omega} + 2Z_{k,\omega})^{-1}$.
\subsection{Avoiding Complex Matrices}
So far, for simplicity, we have considered $V_k$ and $W_k$ as complex matrices in the problem under consideration. However, in practice, using complex projection matrices results in a complex ROM, which is undesirable since most practical dynamical systems are represented by real-valued mathematical models. This issue can be addressed by extending the desired frequency interval to include negative frequencies, i.e., $[-\omega,0]$ rad/sec. Additionally, we have assumed that the desired frequency interval starts from $0$ rad/sec for simplicity. For any general frequency interval $[-\omega_2,-\omega_1]\cup[\omega_1,\omega_2]$ rad/sec, the only modification needed is in the computation of $F_\omega$ and $F_{k,\omega}$ given by:
\begin{align}
F_\omega&=Re\Big(\frac{j}{\pi}ln\big((j\omega_1 I+A\big)^{-1}(j\omega_2 I+A)\big)\Big),\label{ex:66}\\
F_{k,\omega}&=Re\Big(\frac{j}{\pi}ln\big((j\omega_1 I+A_k\big)^{-1}(j\omega_2 I+A_k)\big)\Big);\label{ex:67}
\end{align}see \cite{petersson2013nonlinear} for more details.
\subsection{Approximating Matrix Logarithm $F_\omega$}
The matrix logarithm \( F_\omega \) is commonly encountered in most frequency-limited MOR algorithms. However, computing \( F_\omega \) can become computationally expensive when the order \( n \) of the original model is large. In such cases, Krylov subspace-based methods, as proposed in \cite{benner2016frequency}, can be used to approximate \( F_\omega B \) rather than computing \( F_\omega \) directly. The key idea of this method is to project \( (A, B) \) onto a reduced subspace to obtain \( (A_k, B_k) \), and then approximate \( F_\omega B \) as \( F_\omega B \approx VF_{k,\omega}B_k \). The product \( F_\omega^* C^T \) can be similarly approximated using this approach. However, this approach cannot be used to approximate \( F_\omega^* M_i \), since \( M_i \) is not a thin matrix like \( B \) and \( C \), where \( m \ll n \) and \( p \ll n \), respectively. Instead, if \( F_\omega^* M_i P_{12,\omega} \) is approximated, the method in \cite{benner2016frequency} can still be applied. However, this approximation of \( F_\omega^* M_i P_{12,\omega} \) needs to be updated in every iteration. Since the method in \cite{benner2016frequency} is also iterative, we propose a different, non-iterative approach, as explained in the sequel.

 To better understand the matrix logarithm and its computation, let us look at its origins. In frequency-weighted MOR, weights are used to emphasize the desired frequency ranges where superior accuracy is required. When these weights are ideal bandpass filters with the desired frequency range as the passband, they appear as matrix logarithms in the integral expressions for the Gramians. For more details on different ideal bandpass filter choices in frequency-weighted MOR, see \cite{wortelbore1994frequency}. A detailed discussion on the relationship between frequency-weighted BT and FLBT can be found in \cite{gugercin2004survey}. It is observed in \cite{gugercin2004survey}, \cite{zulfiqar2022frequency}, and \cite{zulfiqar2021weighted} that as the order of the bandpass filter increases, the numerical differences between frequency-weighted and frequency-limited MOR diminish, even for low-order filters. This is because the frequency-weighted Gramians become almost the same as the frequency-limited Gramians. Therefore, an ideal (and impractical) filter is not really needed. However, this idea was not applied in \cite{benner2016frequency} for approximating $F_\omega B$ and $F_\omega^* C^T$. Designing low-order analog filters is simpler and more straightforward using the analytical expressions of standard filter designs like Butterworth or Chebyshev filters, or MATLAB functions such as \textit{`butter'}, \textit{`cheby1'}, and \textit{`cheby2'}. By using a practical, low-order filter instead of an ideal one, we can avoid the need to compute the matrix logarithm, making the process simpler.

Let \( v(s) = c_v(sI - a_v)^{-1} b_v \) be an \( n_v^{th} \)-order bandpass filter with a passband in the range \( (\omega_1, \omega_2) \) rad/sec. The controllability Gramian \( p_v \) of the pair \( (a_v, b_v) \) satisfies the following Lyapunov equation:
\begin{align}
a_v p_v + p_v a_v^T + b_v b_v^T = 0. \label{smalllyap}
\end{align}
Next, define a frequency weight \( V(s) \) as
\[
V(s) = v(s)I_m = C_v(sI - A_v)^{-1} B_v,
\]
where
\[
A_v = I_m \otimes a_v, \quad B_v = I_m \otimes b_v, \quad C_v = I_m \otimes c_v.
\]
Additionally, the controllability Gramian \( P_v \) of the pair \( (A_v, B_v) \) is related to \( p_v \) by $P_v = I_m \otimes p_v$.

Next, we define the matrices $\mathscr{A}$, $\hat{\mathscr{A}}$, $\mathscr{B}$, and $\hat{\mathscr{B}}$ as follows:
 \begin{align}
 \mathscr{A}&=\begin{bmatrix}A&BC_v\\0&A_v\end{bmatrix},&\hat{\mathscr{A}}&=\begin{bmatrix}A_k&B_kC_v\\0&A_v\end{bmatrix},\nonumber\\
 \mathscr{B}&=\begin{bmatrix}0\\B_v\end{bmatrix},&\hat{\mathscr{B}}&=\begin{bmatrix}0\\B_v\end{bmatrix}.\nonumber
 \end{align}Due to triangular structure of $\mathscr{A}$ and $\hat{\mathscr{A}}$, the following expression holds:
 \begin{align}
 &\frac{1}{2\pi}\int_{-\infty}^{\infty}(j\nu I-\mathscr{A})^{-1}\mathscr{B}\hat{\mathscr{B}}^T(j\nu I-\hat{\mathscr{A}})^{-*}d\nu\label{int_exp}\\
 =&\begin{bmatrix}\frac{1}{2\pi}\int_{-\infty}^{\infty}(j\nu I-A)^{-1}BV(j\nu)V^*(\nu)B_k^T(j\nu I-A_k)^{-*}d\nu&\star\\\star&\star\end{bmatrix}\nonumber\\
 =&\begin{bmatrix}\mathscr{P}_{12,\omega}&\star\\\star&\star\end{bmatrix}.\nonumber
 \end{align}It can be observed that when \( v(s) \) (and thus \( V(s) \)) is an ideal bandpass filter, i.e., \( v(j\omega) = 1 \) within the range \( \omega_1 \) to \( \omega_2 \) rad/sec, and \( v(j\omega) = 0 \) outside this range, we have \( \mathscr{P}_{12,\omega} = P_{12,\omega} \). When \( v(s) \) is not an ideal filter, we have \( \mathscr{P}_{12,\omega} \approx P_{12,\omega} \). The accuracy of the approximation \( \mathscr{P}_{12,\omega} \approx P_{12,\omega} \) depends on the quality of the bandpass filter \( v(s) \), which can be improved by increasing the order of the filter. In the next section, it will be shown that the numerical difference between \( \mathscr{P}_{12,\omega} \) and \( P_{12,\omega} \) nearly disappears for a Butterworth filter of order $16$ to $20$. Additionally, due to the connection between the Sylvester equation and the integral expression, equation (\ref{int_exp}) solves the following Sylvester equation: 
 \begin{align}
 \mathscr{A}\begin{bmatrix}\mathscr{P}_{12,\omega}&\star\\\star&\star\end{bmatrix}+\begin{bmatrix}\mathscr{P}_{12,\omega}&\star\\\star&\star\end{bmatrix}\hat{\mathscr{A}}^T+\mathscr{B}\hat{\mathscr{B}}^T&=0.\label{sylv_1}
 \end{align}Thus, \( P_{12,\omega} \) can be closely approximated by solving the Sylvester equation (\ref{sylv_1}).

Let us define \( \hat{P}_v \) and \( \hat{P}_{k,v} \), which solve the following Sylvester equations:
 \begin{align}
 A\hat{P}_v+\hat{P}_vA_v^T+BC_vP_v&=0,\label{Bw}\\
 A_k\hat{P}_{k,v}+\hat{P}_{k,v}A_v^T+B_kC_vP_v&=0.
 \end{align} By expanding equation (\ref{sylv_1}), it can be easily observed that \( \mathscr{P}_{12,\omega} \) satisfies the following Sylvester equation:
 \begin{align}
 A\mathscr{P}_{12,\omega}+\mathscr{P}_{12,\omega}A_k^T+\hat{P}_vC_v^TB_k^T+BC_v\hat{P}_{k,v}^T&=0.
 \end{align} This is the same Sylvester equation that arises in frequency-weighted \( \mathcal{H}_2 \)-optimal MOR algorithms \cite{zulfiqar2022frequency, zulfiqar2021weighted}. As the numerical difference between \( \mathscr{P}_{12,\omega} \) and \( P_{12,\omega} \) diminishes, \( F_\omega B \) and \( B_k^T F_\omega^T \) become numerically equivalent to \( \hat{P}_v C_v^T \) and \( C_v \hat{P}_{k,v}^T \), respectively. The pseudo-code for the method presented here to approximate \( F_\omega B \) is summarized in Algorithm \ref{alg0}.
 \begin{algorithm}
\caption{Algorithm for Approximating $F_\omega B$\\$B_\omega=B_\Omega(A,B,\omega_1,\omega_2,n_v)$}
\begin{algorithmic}[1]\label{alg0}
\STATE Design an $n_v^{th}$-order bandpass filter $v(s)=c_v(sI-a_v)^{-1}b_v$ with passband $(\omega_1,\omega_2)$ rad/sec.
\STATE Solve the Lyapunov equation (\ref{smalllyap}) to compute $p_v$.
\STATE Set  $A_v=I_m\otimes a_v$, $C_v=I_m\otimes c_v$, and $P_v=I_m\otimes p_v$.\nonumber
\STATE Solve the Sylvester equation (\ref{Bw}) to compute $\hat{P}_v$.
\STATE Set $B_\omega=\hat{P}_vC_v^T\approx F_\omega B$.
\end{algorithmic}
\end{algorithm}
The product \( F_\omega^T C^T \) can also be approximated using Algorithm \ref{alg0} by substituting \( (A, B) \) with \( (A^T, C^T) \), i.e., $B_\Omega(A^T,C^T,\omega_1,\omega_2,n_v)$.

Let \( \mathscr{C}^T \) be defined as \( \mathscr{C}^T = \sum_{i=1}^{p} M_i P_{12,\omega} M_{k,i} \). The Sylvester equation (\ref{eq:28}) can then be rewritten as:

\[
A^T Z_{12,\omega} + Z_{12,\omega} A_k + F_\omega^T \mathscr{C}^T + \mathscr{C}^T F_{k,\omega} = 0.
\]
It can be observed that \( F_\omega^T \mathscr{C}^T \) can be approximated using Algorithm \ref{alg0} by replacing \( (A, B) \) with \( (A^T, \mathscr{C}^T) \). Since \( F_\omega^T \mathscr{C}^T \) depends on the ROM, it must be recomputed in each iteration, unlike \( F_\omega B \) and \( F_\omega^T C^T \), which only need to be computed once. Although \( F_{k,\omega} \) can be computed directly and inexpensively from the matrix logarithm (\ref{ex:67}), when approximations of \( F_\omega B \), \( F_\omega^T C^T \), and \( F_\omega^T \mathscr{C}^T \) are used, the theoretical connection with frequency-weighted MOR requires that \( F_{k,\omega} B_k \), \( F_{k,\omega}^T C_k^T \), and \( F_{k,\omega} I_k \) be replaced by their approximations obtained via Algorithm \ref{alg0}. In our experiments, using \( F_{k,\omega} \) exactly while replacing \( F_\omega B \), \( F_\omega^T C^T \), and \( F_\omega^T \mathscr{C}^T \) with their approximations from Algorithm \ref{alg0} resulted in inferior accuracy, which supports the theoretical findings. Therefore, it is recommended to also approximate \( F_{k,\omega} B_k \), \( F_{k,\omega}^T C_k^T \), and \( F_{k,\omega} I_k \) when approximations of \( F_\omega B \), \( F_\omega^T C^T \), and \( F_\omega^T \mathscr{C}^T \) from Algorithm \ref{alg0} are used.
\subsection{Algorithm}
We are now ready to introduce our proposed algorithm, referred to in this paper as the ``Frequency-limited \( \mathcal{H}_2 \) Near-optimal Iterative Algorithm (FLHNOIA).'' The pseudo-code for FLHNOIA is provided in Algorithm \ref{alg1}. The algorithm begins with an arbitrary initial guess for the ROM and iteratively refines it until convergence is reached. In Step \ref{step1}, \( F_\omega B \) and \( F_\omega^T C^T \) are approximated using Algorithm \ref{alg0}. The iterative refinement of the initial guess starts in Step \ref{step2}. In Step \ref{step5}, \( \mathscr{P}_{12,\omega} \) and \( \mathscr{X}_{12,\omega} \) are orthogonalized to prevent linear dependence and to enforce the Petrov-Galerkin condition \( W_k^T V_k = I \). It is important to note that the columns of \( V_k \) and \( W_k \) still span the same subspace as \( \mathscr{P}_{12,\omega} \) and \( \mathscr{X}_{12,\omega} \), respectively, and they only provide a different state-space realization of the same ROM \cite{gallivan2004sylvester}.
\begin{algorithm}
\caption{FLHNOIA}
\textbf{Input:} Full order system: $(A,B,C,M_1,\cdots,M_p)$; Desired frequency interval: $[\omega_1,\omega_2]$ rad/sec; Initial guess of ROM: $(A_k,B_k,C_k,M_{k,1},\cdots,M_{k,p})$.\\
 \textbf{Output:} ROM: $(A_k,B_k, C_k,M_{k,1}, \cdots,M_{k,p})$.
\begin{algorithmic}[1]\label{alg1}
\STATE Compute $B_\omega$ and $C_\omega^T$ using Algorithm \ref{alg0} as follows: $B_\omega=B_\Omega(A,B,\omega_1,\omega_2,n_v)$ and $C_\omega^T=B_\Omega(A^T,C^T,\omega_1,\omega_2,n_v)$. \label{step1}
\STATE \textbf{while}\big(not converged\big) \textbf{do} \label{step2}
\STATE Compute $F_{k,\omega}$ using Algorithm \ref{alg0} as follows: $B_\omega=B_\Omega(A_k,I,\omega_1,\omega_2,n_v)$.
\STATE Solve the following Sylvester equation to compute $\mathscr{P}_{12,\omega}$:

\(
A\mathscr{P}_{12,\omega}+\mathscr{P}_{12,\omega}A_k^T+B_\omega B_k^T+BB_kF_{k,\omega}^T=0.
\)
\STATE Set $\mathscr{C}^T=\sum_{i=1}^{p}M_i\mathscr{P}_{12,\omega}M_{k,i}$ and compute $\mathscr{C}_\omega^T$ using Algorithm \ref{alg0} as follows: $\mathscr{C}_\omega^T=B_\Omega(A^T,\mathscr{C}^T,\omega_1,\omega_2,n_v)$.
\STATE Solve the following Sylvester equation to compute $\mathscr{X}_{12,\omega}$:

\(
A^T\mathscr{X}_{12,\omega}+\mathscr{X}_{12,\omega}A_k+C_\omega^TC_k+C^TC_kF_{k,\omega}+2\mathscr{C}_\omega^T+2\mathscr{C}^TF_{k,\omega}=0.
\)
\STATE Set $V_k=orth(\mathscr{P}_{12,\omega})$, $R_k=orth(\mathscr{X}_{12,\omega})$, and $W_k=R_k(V_k^TR_k)^{-1}$.\label{step5}
\STATE Update the ROM as $A_k=W_k^TAV_k$, $B_k=W_k^TB$, $C_k=CV_k$, $M_{k,i}=V_k^TM_iV_k$.
\STATE \textbf{end while}
\end{algorithmic}
\end{algorithm}
 \subsection{Computational Cost Comparison with HOMORA}
A special type of Sylvester equation, known as the \textit{``sparse-dense''} Sylvester equation, appears in most \( \mathcal{H}_2 \)-optimal MOR algorithms, including HOMORA and FLHNOIA. This equation has the following structure:
\begin{align}
A\mathcal{C}+\mathcal{C}\mathcal{B}+\mathcal{D}&=0,\label{sparse_desnse_Sylv}
\end{align}where the large matrix \(A \in \mathbb{R}^{n \times n} \) is sparse, while the smaller matrix \( \mathcal{B} \in \mathbb{R}^{k \times k} \) is dense, with \( k \ll n \). The sparsity in \( A \) arises from the mathematical modeling of high-order systems. An efficient algorithm to solve sparse-dense Sylvester equations is described in \cite{MPIMD11-11}. The main reasons for its computational efficiency are summarized as follows. If the matrix \( \mathcal{B} \) is replaced by its eigenvalue decomposition \( \mathcal{B} = R S R^{-1} \), the matrix \( \mathcal{C} \) can be solved column-wise, requiring \( k \) linear solves of the form \( A x = b \). However, as noted in \cite{MPIMD11-11}, the multiplication by the inverse of \( R \) in the last step can lead to numerical issues. To mitigate this, it is recommended to replace \( \mathcal{B} \) with its Schur decomposition \( \mathcal{B} = R S R^* \), which avoids the need for inversion. The triangular structure of \( S \) still allows for column-wise computation of \( \mathcal{C} \), requiring \( k \) linear solves of \( A x = b \). By using efficient sparse linear solvers, such as those described in \cite{demmel1999asynchronous, demmel1999supernodal, davis2004algorithm}, the sparse-dense Sylvester equation can be solved efficiently. In summary, the small size of \( \mathcal{B} \) limits the number of linear solves of \( A x = b \), while the sparsity of \( A \) allows the use of computationally efficient sparse linear solvers, leading to an overall efficient solution of the sparse-dense Sylvester equation.

The primary computational burden in HOMORA lies in solving two sparse-dense Sylvester equations during each iteration. In total, this results in \( 2ki \) linear solves of the form \( Ax = b \) across \( i \) iterations. As long as the number of iterations \( i \) remains small, the computational time in HOMORA is manageable. Similarly, in FLHNOIA, the main computational load comes from solving sparse-dense Sylvester equations. In Algorithm \ref{alg0}, the filter design can be performed on-the-fly using either analytical expressions for analog filters or MATLAB’s built-in functions. Since the order \( n_v \) is small, \( p_v \) can also be computed cheaply. The primary computational burden in Algorithm \ref{alg0} is solving the sparse-dense Sylvester equation (\ref{Bw}). Algorithm \ref{alg0} is used twice in Step 1 of FLHNOIA, resulting in a total of \( (m + p)n_v \) linear solves \( Ax = b \). In each iteration of FLHNOIA, three sparse-dense Sylvester equations are solved, totaling \( (2k + n_vk)i \) linear solves over \( i \) iterations. Consequently, compared to HOMORA, FLHNOIA requires \( (m + p + ki)n_v \) additional linear solves of the form \( Ax = b \). Since \( m \ll n \), \( p \ll n \), \( k \ll n \), and \( n_v \ll n \), this added computational burden remains acceptable as long as the number of iterations \( i \) is small. Nevertheless, FLHNOIA is less efficient compared to HOMORA, and \( (m + p + ki)n_v \) more linear solves is the computational price that needs to be paid for addressing frequency-limited \( \mathcal{H}_2 \)-optimal MOR for LQO systems.
\begin{remark}
Algorithm \ref{alg0} can be used to approximate \( F_\omega \) by passing \( (A, I) \), in a similar manner to how \( F_{k,\omega} \) is computed in Algorithm \ref{alg1} using Algorithm \ref{alg0}. However, in this case, the Sylvester equation (\ref{Bw}) will be large-scale rather than the tall-skinny form seen in the sparse-dense Sylvester equation (\ref{sparse_desnse_Sylv}). Specifically, the matrix \( A_v \) will have dimensions \( n_vn \times n_vn \), which is larger than \( A \). In our experiments, we examined the decay of singular values of \( \hat{P}_v \) and found it to be numerically low-rank, although the decay was not as rapid as observed for frequency-limited Gramians in \cite{benner2016frequency, song2024balanced}. We then generated low-rank approximations of \( \hat{P}_v \) using truncated singular value decomposition and studied their impact on the quality of \( F_\omega \) approximation. We noted that beyond a certain rank, there was no further improvement in the quality of \( F_\omega \) approximation. This observation suggests that a computationally efficient low-rank approximation of this Sylvester equation can be obtained using the ADI method presented in \cite{benner2014computing}. The shifts in this method can be automatically determined using the strategy proposed in \cite{benner2014self}. If the ADI method converges and provides a good approximation of \( \hat{P}_v \), then \( F_\omega \) can be computed efficiently. The potential advantage of this approach is that, once \( F_\omega \) is approximated efficiently and accurately, FLHNOIA will require the same number of linear solves \( 2ki \) in \( i \) iterations as HOMORA. While this paper focuses on sparse-dense Sylvester equations and does not explore ADI methods, the approach outlined above offers a viable alternative, and users may consider it for approximating \( F_\omega \).
\end{remark}
\subsection{Choice of Initial Guess}
In general, the iterative refinement process in fixed-point iteration-based \( \mathcal{H}_2 \)-optimal MOR algorithms is quite robust. Even when initialized with an arbitrary guess, it quickly refines the ROM and achieves good accuracy. However, as the number of inputs and outputs increases, these algorithms begin to face difficulties, and a good initial guess becomes more important. The projection matrix \( V_k \) in FLIRKA \cite{zulfiqar2023frequency} for standard LTI systems and in the proposed FLHNOIA is common. Therefore, it is reasonable to follow the same guidelines for the initial guess as used in FLIRKA \cite{zulfiqar2023frequency}. These guidelines recommend selecting an initial guess that includes the dominant modes and zeros of \( A \). Such an initial guess can be efficiently generated using the algorithms presented in \cite{rommes2006efficient,martins2007computation}.
\subsection{Stopping Criterion and Error Monitoring}
FLHNOIA, like HOMORA, is a fixed-point iteration algorithm. Consequently, as with HOMORA, convergence is not guaranteed in FLHNOIA. As discussed in the previous subsection, the computational efficiency of the algorithm depends on the number of iterations \( i \) required to achieve the final ROM. Therefore, a stopping criterion is necessary for FLHNOIA. Since both standard LTI systems and LQO systems share the transfer function \( G_{k,1}(s) \), one possible stopping criterion could be the stagnation of the eigenvalues of \( A_k \), based on its connection to the standard \( \mathcal{H}_{2,\omega} \)-optimality conditions derived in \cite{vuillemin2014thesis}. This criterion is widely used in \( \mathcal{H}_2 \)-optimal MOR algorithms, including \cite{vuillemin2013h2,zulfiqar2023frequency}. It is important to note that the optimality condition in (\ref{op3}) is common to both standard LTI systems \cite{petersson2014model,vuillemin2013h2} and LQO systems, providing a solid theoretical foundation for this stopping criterion. While computing the \( \mathcal{H}_{2,\omega} \) norm of the error \( G-G_k \) is computationally expensive and impractical for tracking the error, an alternative approach can be used. Specifically, if \( Q_{k,\omega} \) is computed in every iteration (which is computationally inexpensive), and instead of directly computing \( \mathscr{X}_{12,\omega} \), the terms \( Y_{12,\omega} \) and \( Z_{12,\omega} \) are computed separately, then in each iteration we can evaluate both \( Q_{12,\omega} = Y_{12,\omega} + Z_{12,\omega} \) and \( \mathscr{X}_{12,\omega} = Y_{12,\omega} + 2Z_{12,\omega} \). It is important to note that the terms of \( ||G-G_k||_{\mathcal{H}_{2,\omega}} \) that change in every iteration  are \( \operatorname{trace}(-2B^T Q_{12,\omega} B_k + B_k^T Q_{k,\omega} B_k) \). Thus, by computing \( Y_{12,\omega} \) and \( Z_{12,\omega} \) separately rather than \( \mathscr{X}_{12,\omega} \), the error can be tracked, and the algorithm can be stopped if the terms \(\operatorname{trace}( -2B^T Q_{12,\omega} B_k + B_k^T Q_{k,\omega} B_k )\) stagnate, indicating that FLHNOIA's updates are no longer improving accuracy. Finally, if FLHNOIA fails to meet either of these stopping criteria, it can be terminated after the maximum allowable number of iterations, which can be set based on the permissible time and the order of the original system. By following these guidelines, the number of iterations in FLHNOIA can be controlled. In our extensive simulations, we found that, like most fixed-point iteration-based \( \mathcal{H}_2 \)-optimal MOR algorithms, FLHNOIA maintains high fidelity even when stopped after only a few iterations.
\section{Numerical Experiments}
In this section, the numerical efficiency of FLHNOIA is evaluated on three test systems. The first example uses a small illustrative system to numerically validate some of the key theoretical results presented in the paper and to facilitate the reproducibility of the mathematical results. The second example examines a single-output 1D advection-diffusion equation model taken from \cite{reiter2024h2, diaz2023interpolatory}. The third example involves a multi-output flexible space structure model from the benchmark collection of systems for testing MOR algorithms found in \cite{benner2023towards}.
\subsection{Experimental Setup}
The Gramians in BT and FLBT for LQO systems are computed using MATLAB's \textit{`lyap'} command. They are not approximated with their low-rank versions to ensure that the accuracy of BT and FLBT is not compromised by such approximations. Additionally, \( F_\omega \) in FLBT is computed exactly using MATLAB's \textit{`logm'} command to maintain the accuracy of FLBT by avoiding approximation of \( F_\omega \). Therefore, to allow a fair comparison with BT and FLBT, the first two test systems selected are of small to modest orders, ensuring that the exact computation of Gramian and the matrix logarithm \( F_\omega \) is feasible. In contrast, \( F_\omega \) in FLHNOIA is approximated as explained in Algorithm \ref{alg1} for all tests, even though exact computation of \( F_\omega \) was feasible for the examples with modest orders. This approach was adopted to demonstrate the superiority of FLHNOIA over FLBT. Furthermore, HOMORA and FLHNOIA are used to reduce a flexible space structure model of the order of one million, and the computational times are compared. The bandpass filter \( v(s) \) in Algorithm \ref{alg0} is designed using MATLAB's \textit{`butter'} command. The order \( n_v \) of the Butterworth filter is set to $16$ in the first two examples and $2$ in the third example. Both HOMORA and FLHNOIA are initialized with the same initial guess, generated arbitrarily using MATLAB's \textit{`rss'} command. The maximum number of iterations allowed in HOMORA and FLHNOIA is set to $30$ in the first two examples and $20$ in the third example. The stagnation of the eigenvalues of \( A_k \) serves as the stopping criterion for both HOMORA and FLHNOIA. The sparse-dense Sylvester equations are solved using the efficient algorithm proposed in \cite{MPIMD11-11}. The linear system of equations \( Ax = b \) in that algorithm is solved using MATLAB's backslash command `$\backslash$'. The desired frequency interval is chosen arbitrarily for demonstration purposes. All tests are performed using MATLAB R2021b on a laptop with 16 GB Random Access Memory (RAM) and a 2 GHz Intel i7 processor.
\subsection{Illustrative Example}
Consider a sixth-order LQO system defined by the following state-space representation:
\begin{align}
A &= \begin{bsmallmatrix}-9 &-29&-100&-82&-19&-2\\1&0&0&0&0&0\\0&1&0&0&0&0\\0&0&1&0&0&0\\0&0&0&1&0&0\\0&0&0&0&1&0\end{bsmallmatrix},&B &= \begin{bsmallmatrix}1\\0\\0\\0\\0\\0\end{bsmallmatrix},\nonumber\\
C &= \begin{bsmallmatrix}0&0&0&0&-1&1\end{bsmallmatrix},&M_1&=diag(\begin{smallmatrix}0.7,0.4,0.1,0.1,0.1,0.1\end{smallmatrix}).\nonumber
\end{align}
The desired frequency range for this example is set to $[5, 6]$ rad/sec. $F_\omega$, computed using (\ref{ex:66}), is the following:
\begin{align}
\begin{bsmallmatrix}
0.0551  &  0.2134  &  0.2622  &  0.1262  &  0.0248 &   0.0022\\
   -0.0011 &   0.0451  &  0.1812  &  0.1511 &  0.0351  &  0.0037\\
   -0.0019 &  -0.0178  & -0.0086 &  -0.0040 &  -0.0008 &  -0.0001\\
    0.0000 &  -0.0015  & -0.0168 &  -0.0051 &  -0.0012 &  -0.0001\\
    0.0001 &   0.0006  &  0.0003 &  -0.0105 &   0.0000 &   0.0000\\
   -0.0000 &   0.0001  &  0.0006 &   0.0002 &  -0.0106 &   0.0000
\end{bsmallmatrix}.\nonumber
\end{align}
Next, $F_\omega$ is approximated using Algorithm \ref{alg0} as $B_\omega(A, I, 5, 6, 16)$, and the following is obtained:
\begin{align}
\begin{bsmallmatrix}
 0.0555  &  0.2147  &  0.2634  &  0.1266  &  0.0249 &   0.0022\\
   -0.0011 &   0.0454 &   0.1824  &  0.1521  &  0.0354  &  0.0037\\
   -0.0019 &  -0.0179 &  -0.0086  & -0.0040  & -0.0008  & -0.0001\\
    0.0000 &  -0.0016 &  -0.0169  & -0.0052  & -0.0012  & -0.0001\\
    0.0001 &   0.0006 &   0.0003  & -0.0105  &  0.0000  &  0.0000\\
   -0.0000 &   0.0001 &   0.0006  &  0.0002  & -0.0106  &  0.0000
\end{bsmallmatrix}.\nonumber
\end{align} It can be noticed that this is a close approximation of $F_\omega$.

To initialize FLHNOIA and HOMORA, the following initial guess is used:
\begin{align}
A_k&=\begin{bsmallmatrix}-0.3918 & 0.1083  &0.0967\\
    0.1083  & -0.4423  & -0.1392\\
    0.0967  & -0.1392  & -0.2515\end{bsmallmatrix},&B_k&=\begin{bsmallmatrix}0.5022 \\ 0.3632 \\-0.0412\end{bsmallmatrix},\nonumber\\
C_k&=\begin{bsmallmatrix}0.6646  &  1.2748  &  0.3435\end{bsmallmatrix},&
M_{k,1}&=\begin{bsmallmatrix}1  &   0  &   0\\
     0   &  1  &   0\\
     0   &  0   &  1
\end{bsmallmatrix}.\nonumber
\end{align} FLHNOIA converged in $10$ iterations and produced the following ROM:
\begin{align}
A_k&=\begin{bsmallmatrix} -5.4298 & -26.3336& -102.1496\\
    0.9226  &  0.0476 &  -0.3008\\
   -0.1623  &  0.0958 &  -2.5039\end{bsmallmatrix},& B_k&=\begin{bsmallmatrix}-0.9635 \\-0.0138 \\-0.0288\end{bsmallmatrix},\nonumber\\
C_k&=\begin{bsmallmatrix}0.0012  & -0.0012  & -0.0655\end{bsmallmatrix},&
M_{k,1}&=\begin{bsmallmatrix}0.6993 &  -0.0000  &  0.0203\\
   -0.0000 &   0.3996  &  0.0003\\
    0.0203 &   0.0003  &  0.1007\end{bsmallmatrix}.\nonumber
\end{align}
The following holds for this ROM:
\begin{align}
||-P_{12,\omega}^TM_iP_{12,\omega}+P_{k,\omega}M_{k,i}P_{k,\omega}||_2&=5.0323\times 10^{-8},\nonumber\\
||-(Y_{12,\omega}+2Z_{12,\omega})^TB+(Y_{k,\omega}+2Z_{k,\omega})B_k||_2&=7.0419\times 10^{-7},\nonumber\\
||-CP_{12,\omega}+C_kP_{k,\omega}||_2&=2.9711\times 10^{-9}.\nonumber
\end{align}It can be noted that FLHNOIA satisfied the optimality conditions (\ref{op2})-(\ref{op4}), and the use of a practical bandpass filter has no meaningful impact on the satisfaction of the optimality conditions.

Next, a third-order ROM is generated using BT, HOMORA, and FLBT. HOMORA converged in 5 iterations. Figures \ref{figure1} and \ref{figure2} display the relative error on a logarithmic scale within the specified frequency range of $5$ to $6$ rad/sec. It can be noted that FLBT and FLHNOIA exhibit superior accuracy within the desired frequency range. Again, it can be noted that the use of a practical bandpass filter has no impact on accuracy as well. FLHNOIA compares well with FLBT, which is implicitly using an ideal bandpass filter.
\begin{figure}[ht]
    \centering
    \begin{subfigure}{0.48\textwidth}
        \centering
        \includegraphics[width=\linewidth]{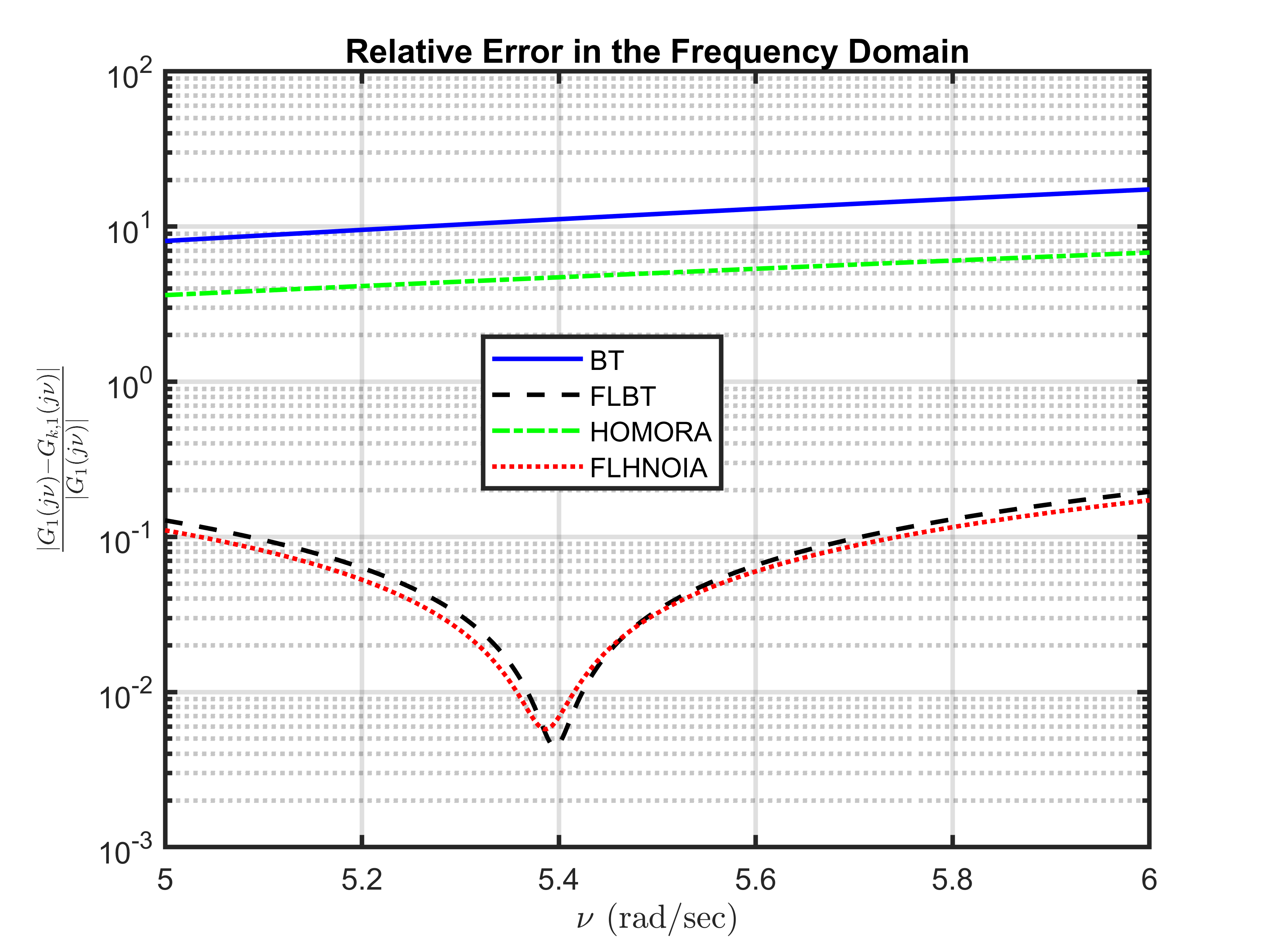}
        \caption{Relative Error $\frac{|G_1(j\nu)-G_{k,1}(j\nu)|}{|G_1(j\nu)|}$}
        \label{figure1}
    \end{subfigure}
    \hfill
    \begin{subfigure}{0.48\textwidth}
        \centering
        \includegraphics[width=\linewidth]{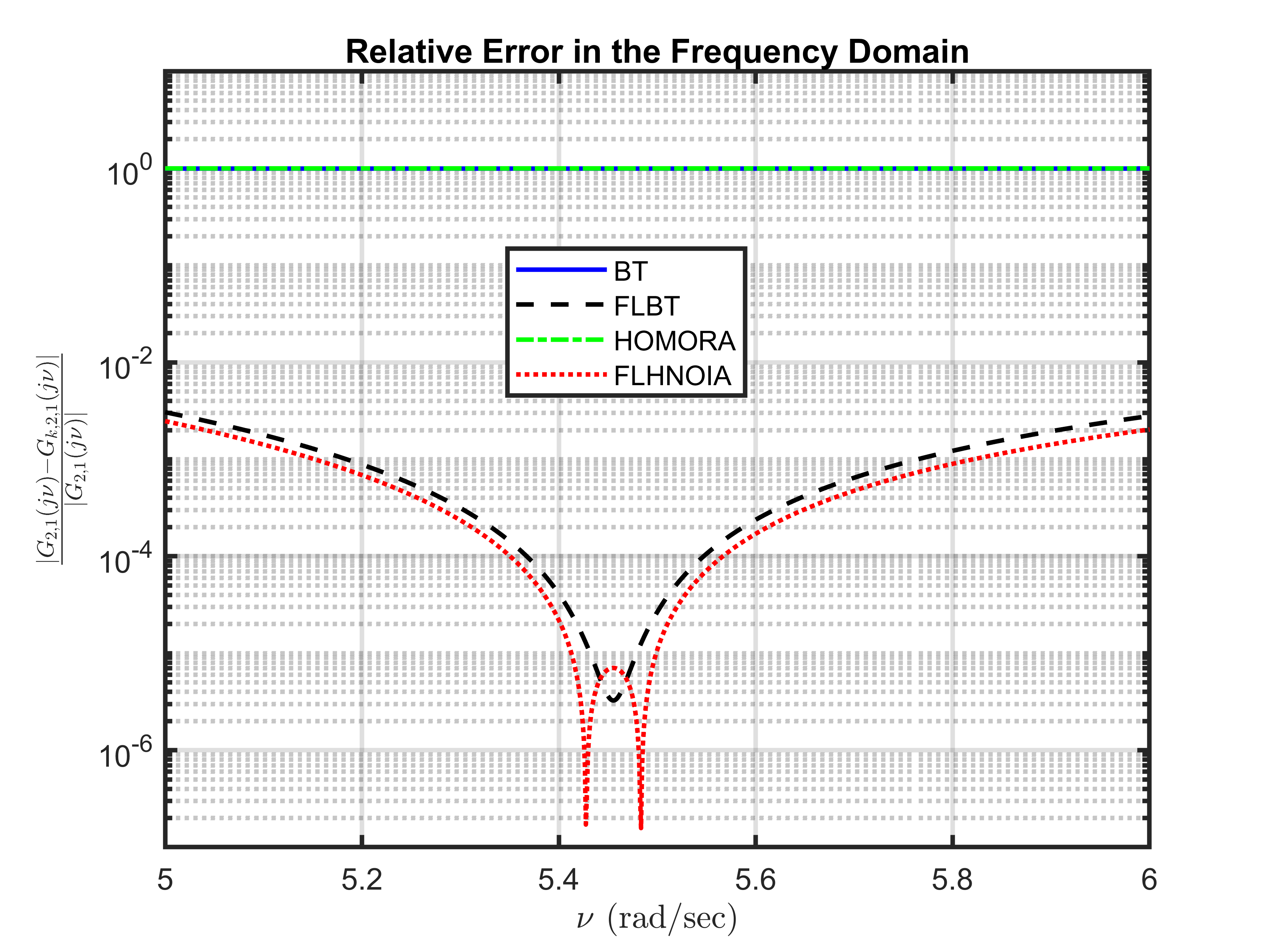}
        \caption{Relative Error $\frac{|G_{2,1}(j\nu)-G_{k,2,1}(j\nu)|}{|G_{2,1}(j\nu)|}$}
        \label{figure2}
    \end{subfigure}
    \caption{Relative Error Comparison within $[5,6]$ rad/sec}
    \label{fig1}
\end{figure}
\subsection{1D Advection-diffusion Equation}
This test system is taken from \cite{reiter2024h2, diaz2023interpolatory}. The model is briefly described as follows: The governing equations of the advection–diffusion process, characterized by diffusion \(\alpha > 0\) and advection \(\beta \geq 0\), are given by:
\begin{align} \frac{\partial}{\partial t}v(t,x)-\alpha\frac{\partial^2}{\partial x^2}v(t,x)+\beta\frac{\partial}{\partial x}v(t,x)&=0,\nonumber\\ v(t,0)=u_0(t), \hspace*{0.5cm}\alpha\frac{\partial}{\partial x}v(t,1)&=u_1(t),\nonumber\\ v(0,x)&=0,\nonumber \end{align}
where \(x \in (0,1)\), \(t \in (0,T)\), and the inputs \(u_1, u_2 \in \mathcal{L}_2(0,T)\). The output is defined by the following cost function:
\begin{align}
y(t,x) = \frac{1}{2} \int_{0}^{1} |v(t,x) - 1|^2 dx.\nonumber
\end{align}
This partial differential equation (PDE)-based model is discretized using $300$ spatial grid points, resulting in an LQO model of the form (\ref{steq:1}) with \(n = 300\), \(m = 2\), and \(p = 1\). For further details, refer to \cite{reiter2024h2, diaz2023interpolatory}.

The desired frequency interval in this example is set to \([10, 12]\) rad/sec. Given that the order of this test system is modest, \( F_\omega \) can be approximated using Algorithm \ref{alg0} as \( B_\omega(A, I, 10, 12, 16) \). The matrix \( \hat{P}_v \in \mathbb{R}^{300 \times 4800} \) can have a maximum rank of $300$. The singular values of \( \hat{P}_v \) are computed and the decay in these singular values is plotted in Figure \ref{fig3}. It can be observed from Figure \ref{fig3} that \( \hat{P}_v \) is numerically low-rank. By truncating the singular value decomposition, reduced-rank approximations of \( \hat{P}_v \) are generated and used to approximate \( F_\omega \) as \( \hat{F}_\omega \). The decay in relative error \( \frac{|\hat{F}_\omega - F_\omega|_2}{|F_\omega|_2} \) is also plotted in Figure \ref{fig3} alongside the decay in singular values of \( \hat{P}_v \). It can be seen that beyond rank $93$, there is no improvement in the accuracy of approximating \( F_\omega \). Thus, the Sylvester equation (\ref{Bw}) can be solved using low-rank ADI methods, eliminating the need to approximate \( F_\omega M_i P_{12,\omega} M_{k,i} \) in every iteration of FLHNOIA.
\begin{figure}[!h]
  \centering
  \includegraphics[width=8cm]{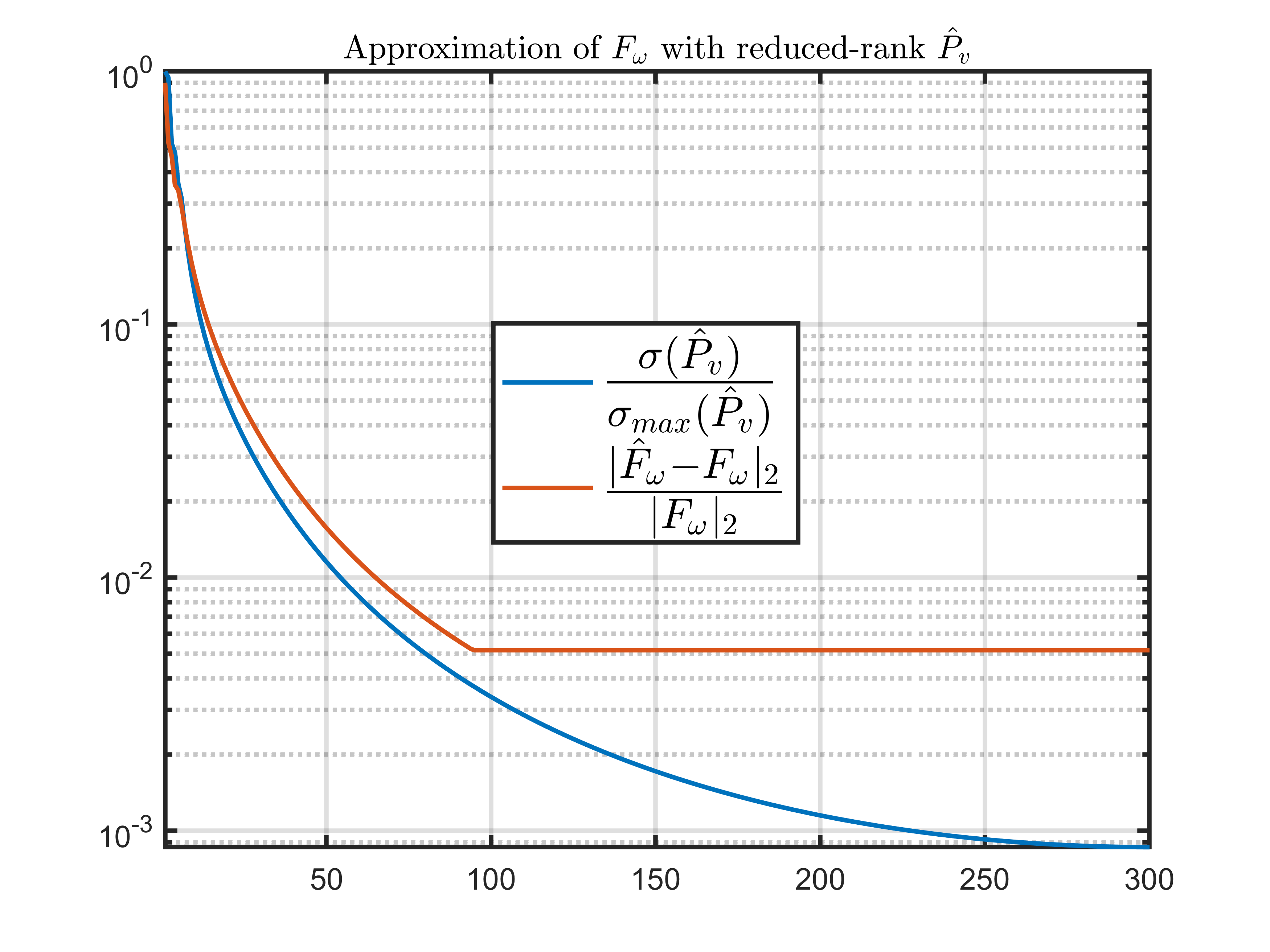}
  \caption{Decay in Singular values of $\hat{P}_v$ and Relative Error $\frac{|\hat{F}_\omega-F_\omega|_2}{|F_\omega|_2}$}\label{fig3}
\end{figure}

ROMs of orders ranging from 1 to 10 are constructed using BT, FLBT, HOMORA, and FLHNOIA. The relative errors, \(\frac{||G-G_k||_{\mathcal{H}_{2,\omega}}}{||G||_{\mathcal{H}_{2,\omega}}}\), are plotted in Figure \ref{fig4}. It is evident from the figure that FLHNOIA provides the best accuracy in terms of \(\mathcal{H}_{2,\omega}\) norm.
\begin{figure}[!h]
  \centering
  \includegraphics[width=8cm]{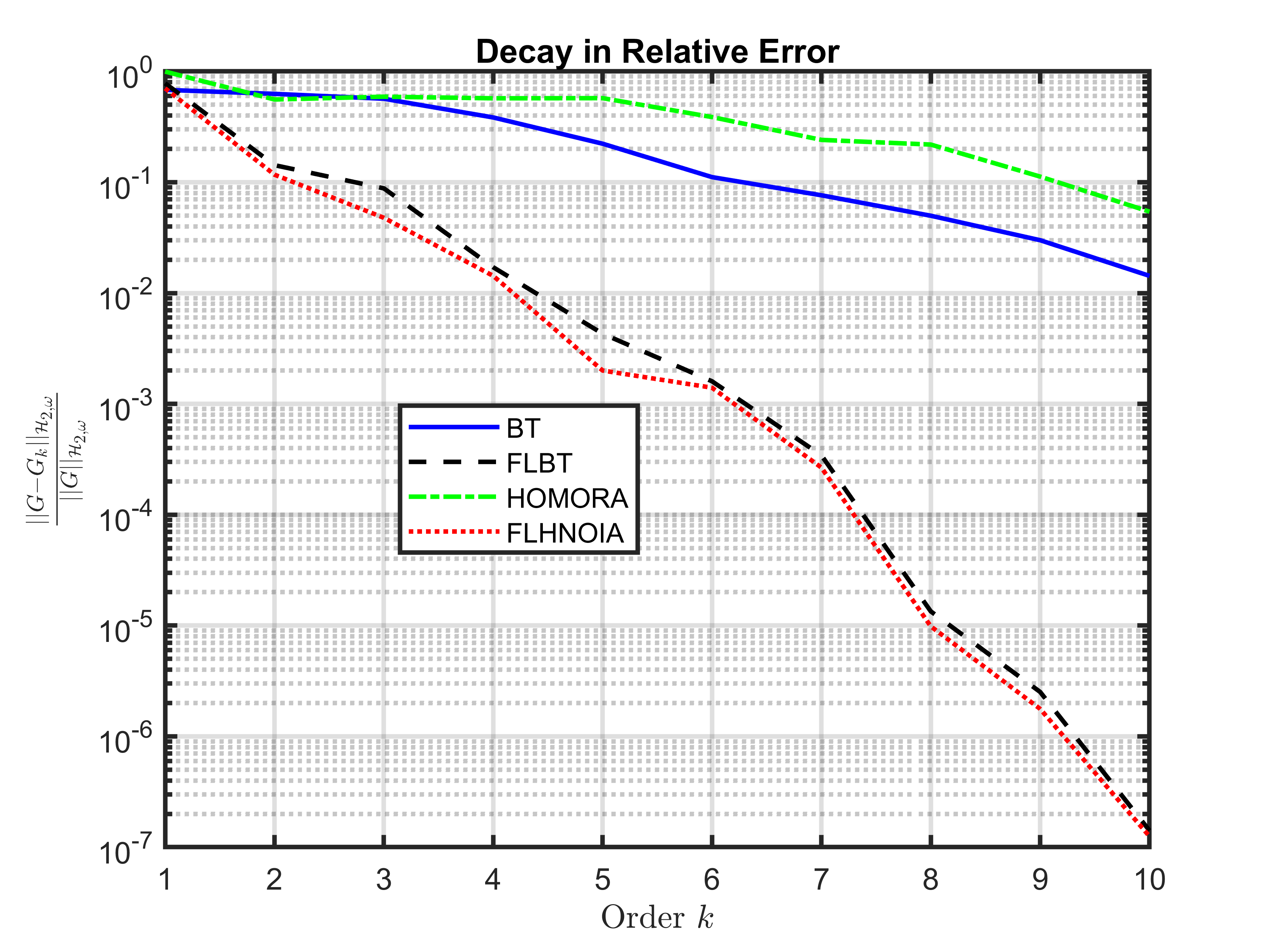}
  \caption{Relative Error $\frac{||G-G_k||_{\mathcal{H}_{2,\omega}}}{||G||_{\mathcal{H}_{2,\omega}}}$}\label{fig4}
\end{figure}
Additionally, the relative errors \(\frac{|G_1(j\nu) - G_{k,1}(j\nu)|}{|G_1(j\nu)|}\) and \(\frac{|G_{2,1}(j\nu) - G_{k,2,1}(j\nu)|}{|G_{2,1}(j\nu)|}\) within the desired frequency interval \([10,12]\) rad/sec for $10^{th}$-order reduced models are plotted in Figures \ref{figure4} and \ref{figure5}. It is evident from the figures that FLBT and FLHNOIA demonstrate superior accuracy compared to BT and HOMORA within the specified frequency range.
\begin{figure}[ht]
    \centering
    \begin{subfigure}{0.48\textwidth}
        \centering
        \includegraphics[width=\linewidth]{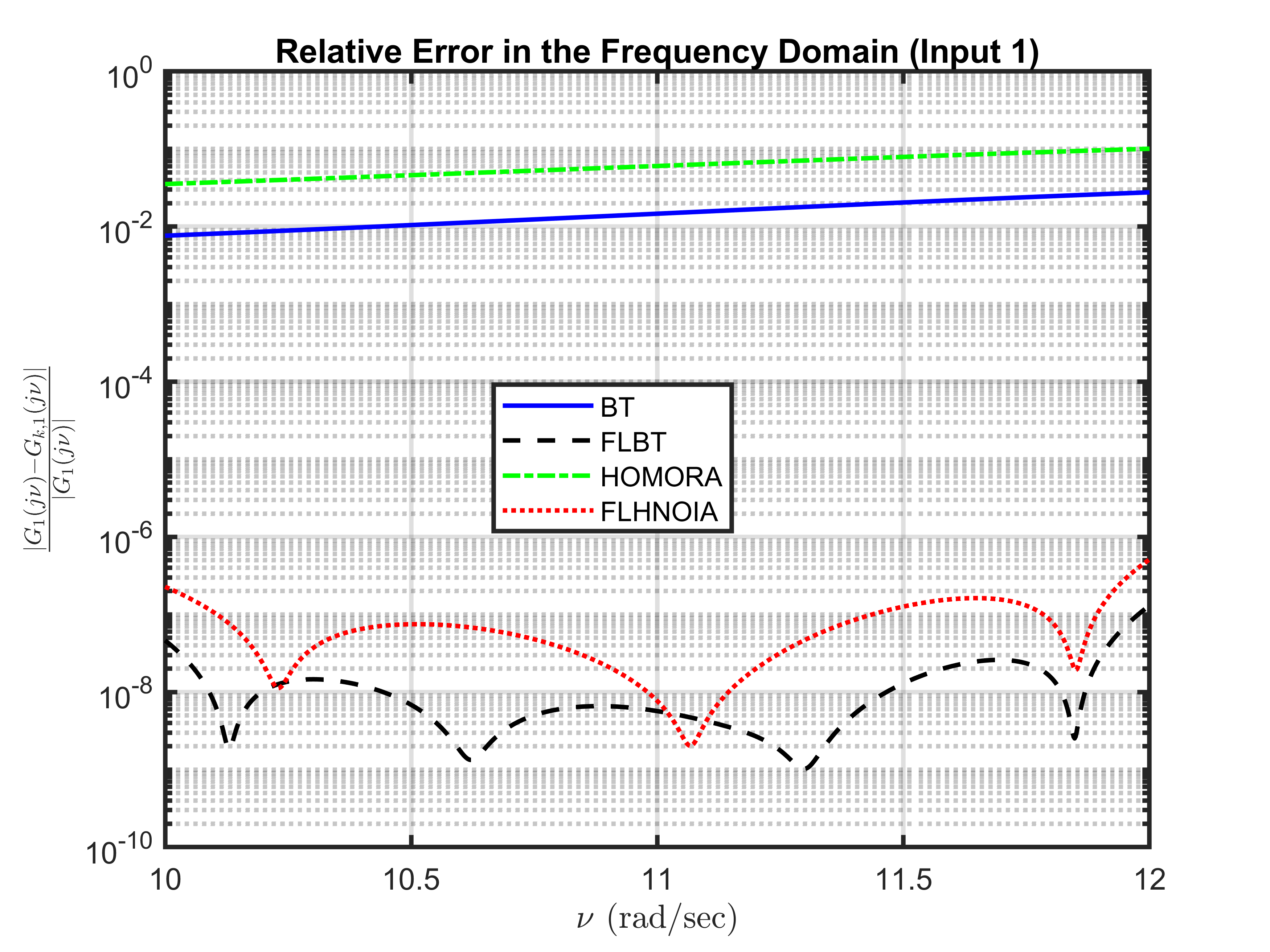}
        \caption{Relative Error $\frac{|G_1(j\nu)-G_{k,1}(j\nu)|}{|G_1(j\nu)|}$ (Input 1)}
        \label{fig5}
    \end{subfigure}
    \hfill
    \begin{subfigure}{0.48\textwidth}
        \centering
        \includegraphics[width=\linewidth]{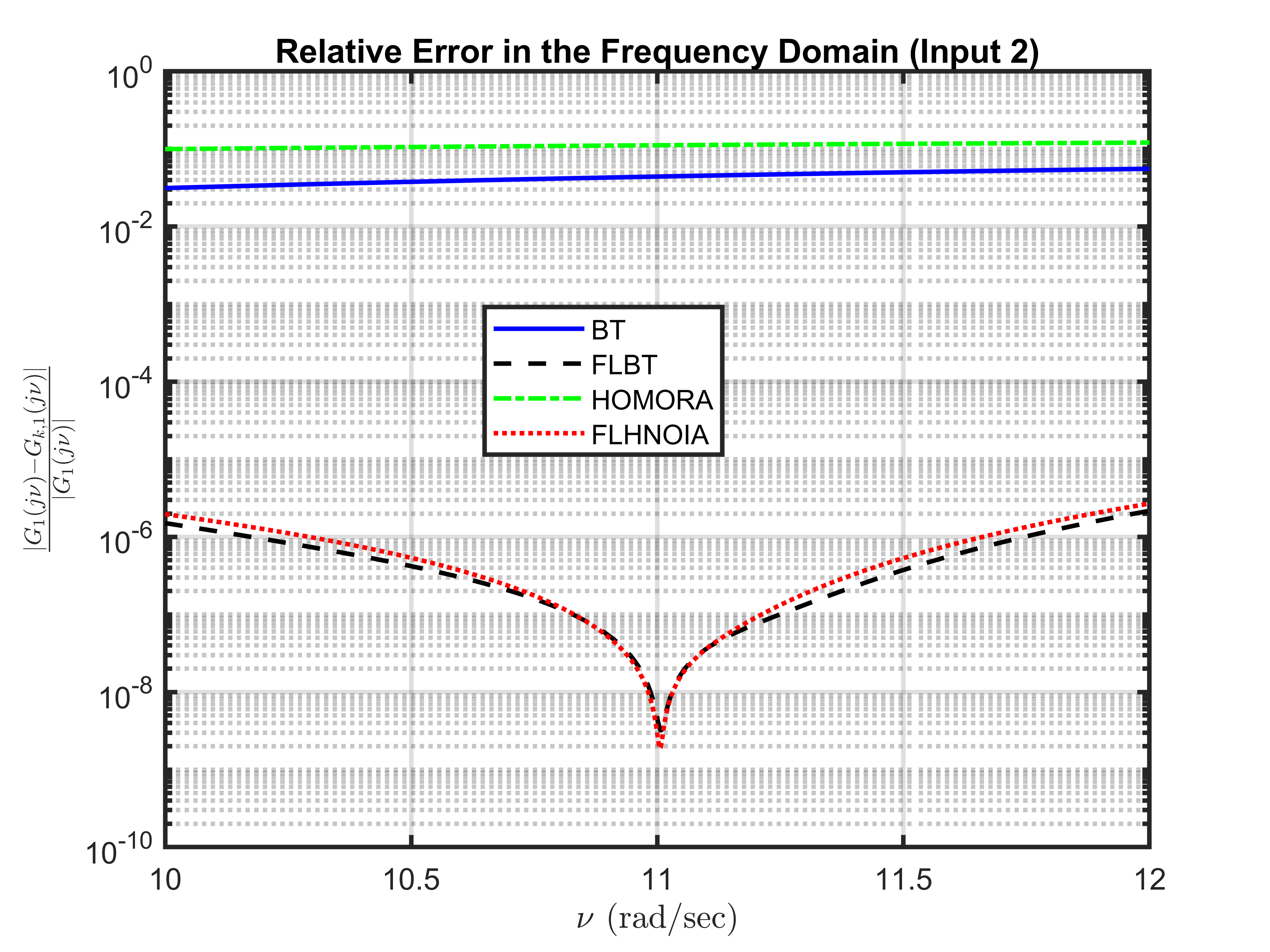}
        \caption{$\frac{|G_1(j\nu)-G_{k,1}(j\nu)|}{|G_1(j\nu)|}$ (Input 2)}
        \label{fig6}
    \end{subfigure}
    \caption{Relative Error $\frac{|G_1(j\nu)-G_{k,1}(j\nu)|}{|G_1(j\nu)|}$ Comparison within $[10,12]$ rad/sec}
    \label{figure4}
\end{figure}
\begin{figure}[ht]
    \centering
    \begin{subfigure}{0.48\textwidth}
        \centering
        \includegraphics[width=\linewidth]{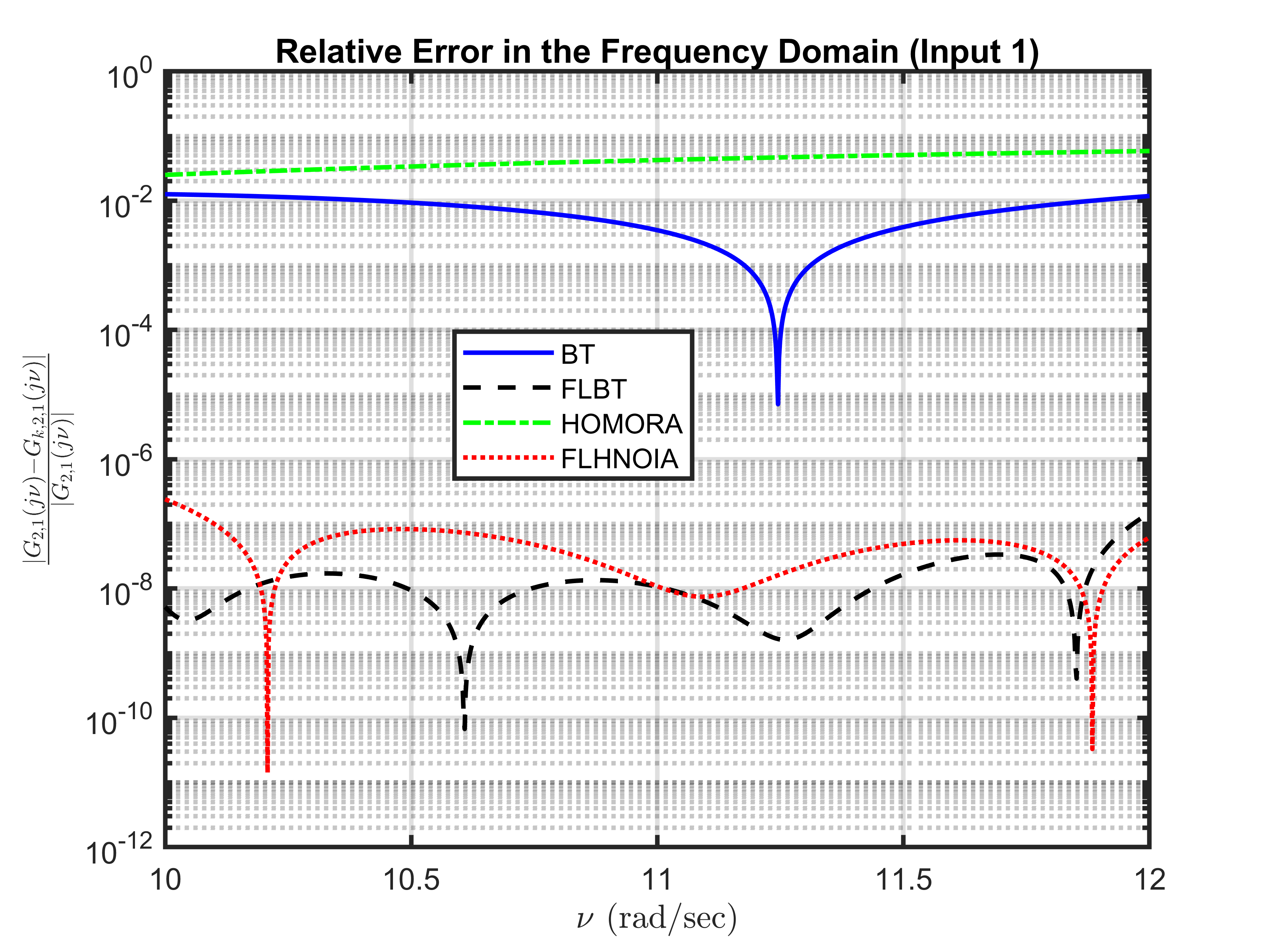}
        \caption{Relative Error $\frac{|G_2(j\nu)-G_{k,2}(j\nu)|}{|G_2(j\nu)|}$ (Input 1)}
        \label{fig5}
    \end{subfigure}
    \hfill
    \begin{subfigure}{0.48\textwidth}
        \centering
        \includegraphics[width=\linewidth]{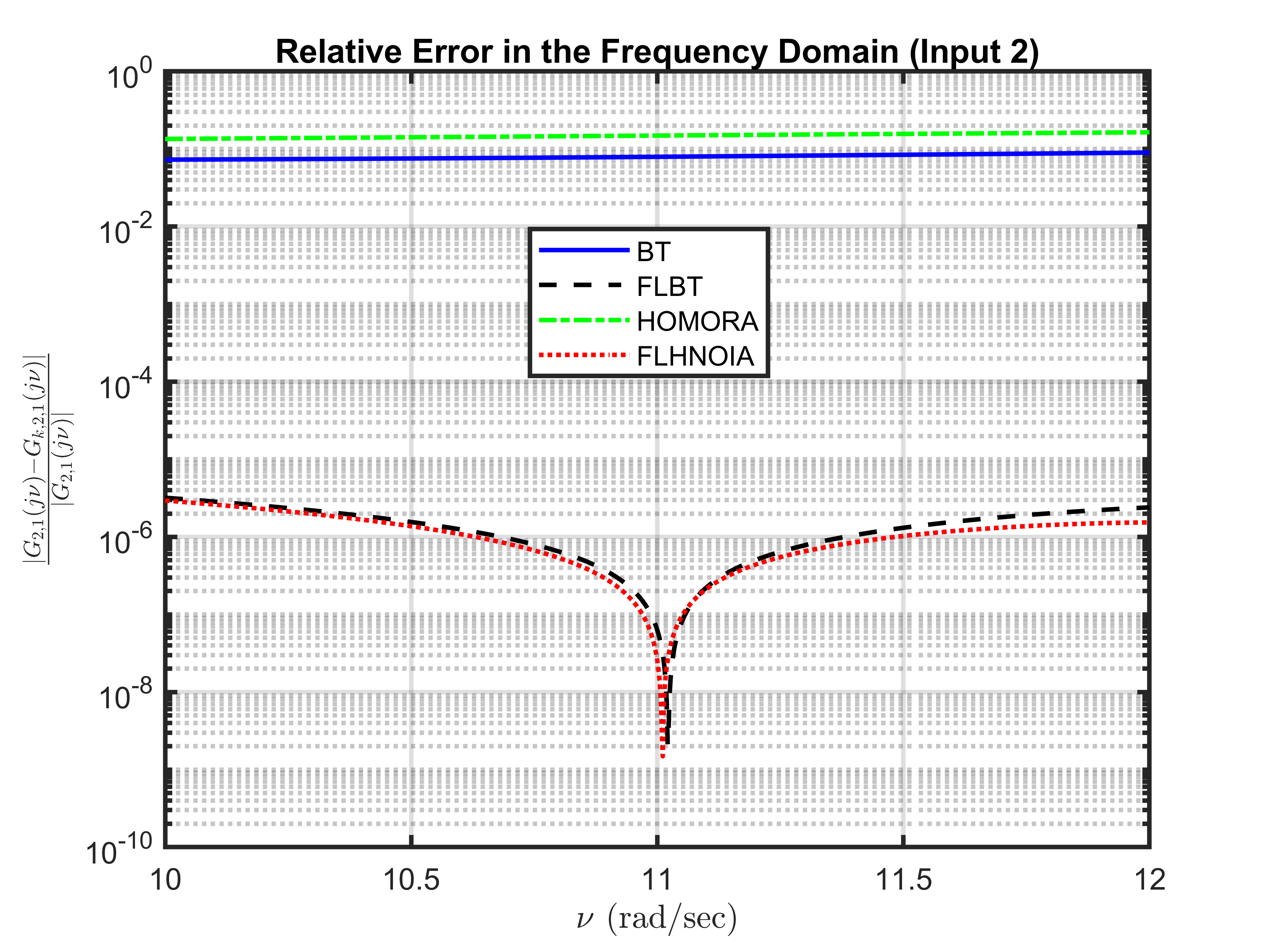}
        \caption{$\frac{|G_2(j\nu)-G_{k,2}(j\nu)|}{|G_2(j\nu)|}$ (Input 2)}
        \label{fig5}
    \end{subfigure}
    \caption{Relative Error $\frac{|G_2(j\nu)-G_{k,2}(j\nu)|}{|G_2(j\nu)|}$ Comparison within $[10,12]$ rad/sec}
    \label{figure5}
\end{figure}
\subsection{Flexible Space Structure}
The flexible space structure (FSS) benchmark \cite{benner2023towards} is a procedural modal model used for simulating structural dynamics, particularly for truss structures in space environments, such as the COFS-1 (Control of Flexible Structures) mast flight experiment \cite{horner1986cofs,horta1986analysis}. In modal form, the FSS model for $\frac{n}{2}$ modes, $m$ actuators, and $p$ sensors is described by the following second-order system:
\begin{align}
\ddot{\nu}(t)&=(2\zeta\circ \omega)\circ\dot{\nu}(t)+(\omega\circ\omega)\circ\nu(t)+Bu(t),\nonumber\\
y(t)&=C_r\dot{\nu}(t)+C_d\nu(t),\nonumber
\end{align}where \(\zeta\) and \(\omega\) represent the damping ratio and natural frequency of the modes, respectively, and \(\circ\) denotes the Hadamard product. This second-order model is converted to a first-order state-space form to obtain an $n^{th}$-order model with $m$ inputs and $p$ outputs; see \cite{benner2023towards} for details.

For this example, we selected $500,000$ modes, one actuator, and two sensors, resulting in a FSS model with 1 million states, using MATLAB codes provided in \cite{benner2023towards}. We added two quadratic outputs, \(y_{2,1}(t)\) and \(y_{2,2}(t)\), to this standard LTI system to quantify the energy of $1,000$ and $2,000$ randomly selected states \(x(t) = \begin{bmatrix}x_1(t) & \cdots & x_n(t)\end{bmatrix}^T\), respectively:
\begin{align}
y_{2,1}(t)&=\frac{1}{2}\sum_{i=1}^{n}m_{1,i}\big(x_i(t)\big)^2&\textnormal{and}&&y_{2,2}(t)&=\frac{1}{2}\sum_{i=1}^{n}m_{2,i}\big(x_i(t)\big)^2,\nonumber
\end{align}where \(m_{1,i}\) and \(m_{2,i}\) are either $1$ or $0$. The matrices \(M_1 = 0.5 \, diag(m_{1,1}, \cdots, m_{1,n})\) and \(M_2 = 0.5 \, diag(m_{2,1}, \cdots, m_{2,n})\) have $1,000$ and $2,000$ nonzero diagonal elements, respectively, selected randomly using MATLAB's \textit{`rand'} command. The state-space matrices of the resulting LQO system have the following dimensions: \(n = 1,000,000\), \(m = 1\), and \(p = 2\). The desired frequency interval in this example is set to \([28, 29]\) rad/sec. Due to the limited memory of the computer used, exact computation of the Gramians in BT and FLBT is not feasible for this example, nor is the exact computation of \(F_\omega\). We did not pursue low-rank approximations of the Gramians or Krylov-subspace-based approximations of \(F_\omega\). Instead, we compared FLHNOIA with HOMORA, as both algorithms can handle systems of this order directly. $30^{th}$-order reduced models were constructed using FLHNOIA and HOMORA. The computational time to generate the $30^{th}$-order reduced model using HOMORA was $12.6246$ minutes, while FLHNOIA took $21.2882$ minutes. There are four frequency domain error plots. For economy of space, only the relative errors \(\frac{|G_1(j\nu) - G_{k,1}(j\nu)|}{|G_1(j\nu)|}\) for output 1 are shown in Figure \ref{fig9}. It can be seen that FLHNOIA provides superior accuracy compared to HOMORA within the desired frequency interval \([28, 29]\) rad/sec.
\begin{figure}[!h]
  \centering
  \includegraphics[width=8cm]{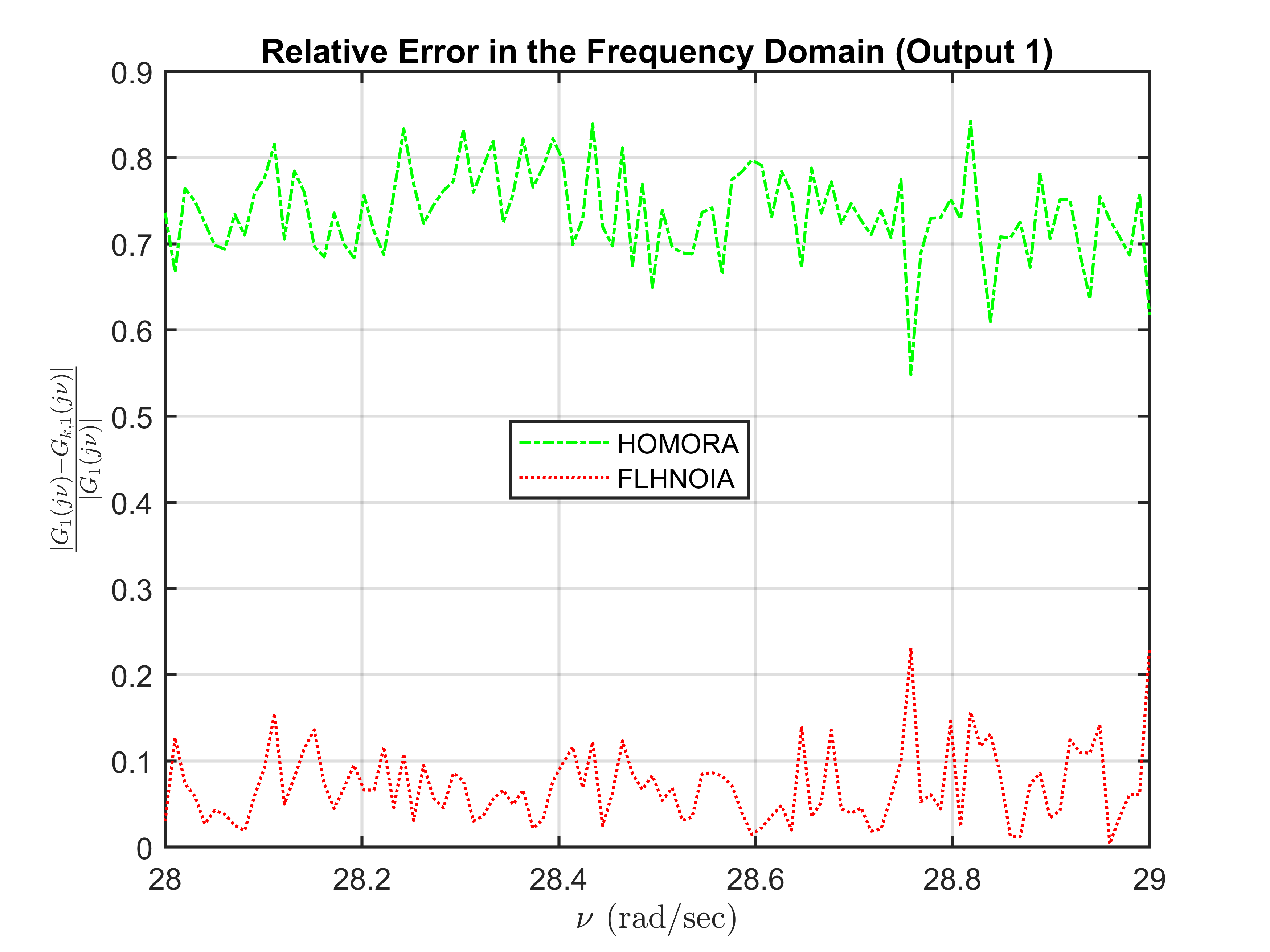}
  \caption{Relative Error $\frac{|G_1(j\nu)-G_{k,1}(j\nu)|}{|G_1(j\nu)|}$ (Output 1) within $[28,29]$ rad/sec}\label{fig9}
\end{figure}
\section{Conclusion}
This research addresses the problem of $\mathcal{H}_2$-optimal MOR within a specified finite frequency range. To measure the output strength within this range, we introduce the frequency-limited $\mathcal{H}_2$ norm for LQO systems. We derive the necessary conditions for achieving local optima of the squared frequency-limited $\mathcal{H}_2$ norm of the error and compare these conditions to those of the standard, unconstrained $\mathcal{H}_2$-optimal MOR problem. The study highlights the limitations of the Petrov-Galerkin projection method in fully satisfying all optimality conditions in the frequency-limited context. Consequently, we propose a Petrov-Galerkin projection algorithm that approximately satisfies these optimality conditions. The paper also discusses the computationally efficient implementation of the proposed algorithm. Unlike FLBT, the proposed algorithm eliminates the need for Gramian computation, making it significantly more efficient for large-scale systems. Numerical experiments are conducted to validate the theoretical results and demonstrate the algorithm's effectiveness in achieving high accuracy within the specified frequency range. The numerical results support the theory developed in the paper and show that the proposed algorithm constructs a ROM efficiently, maintaining high fidelity within the desired frequency range.
\section*{Acknowledgement}
This work is supported by the National Natural Science Foundation of China under Grants No. 62350410484 and 62273059, and in part by the High-end Foreign Expert Program No. G2023027005L granted by the State Administration of Foreign Experts Affairs (SAFEA).
\section*{Appendix}
In this appendix, we present the proof of Theorem \ref{th1}. Throughout the proof, the following properties of the trace operation are utilized repeatedly:
\begin{enumerate}
  \item Trace of Hermitian: $\operatorname{trace}(XYZ)=\operatorname{trace}(Z^*Y^*Z^*)$.
  \item Circular permutation in Trace: $\operatorname{trace}(XYZ)=\operatorname{trace}(ZXY)=\operatorname{trace}(YZX)$.
  \item Trace of addition: $\operatorname{trace}(X+Y+Z)=\operatorname{trace}(X)+\operatorname{trace}(Y)+\operatorname{trace}(Z)$;
\end{enumerate}cf. \cite{petersen2008matrix}.

Let us define the cost function  $J$ as the component of $||G-G_k||_{H_{2,\omega}}^2$ that depends on the ROM, expressed as:
\begin{align}
J=\operatorname{trace}(-2B^TQ_{12,\omega}B_k+B_k^TQ_{k,\omega}B_k).\nonumber
\end{align}When a small first-order perturbation $\Delta_{A_k}$ is added to $A_k$, $J$ changes to $J+\Delta_{J}^{A_k}$. This causes $Q_{12,\omega}$ and $Q_{k,\omega}$ to perturb to $Q_{12,\omega}+\Delta_{Q_{12,\omega}}^{A_k}$ and $Q_{k,\omega}+\Delta_{Q_{k,\omega}}^{A_k}$, respectively. Consequently, the first-order terms of $\Delta_{J}^{A_k}$ are given by:
\begin{align}
\Delta_{J}^{A_k}=\operatorname{trace}(2B^T\Delta_{Q_{12,\omega}}^{A_k}B_k+B_k^T\Delta_{Q_{k,\omega}}^{A_k}B_k).\nonumber
\end{align}
Furthermore, it is evident from (\ref{eq:30}) and (\ref{eq:31}) that $\Delta_{Q_{12,\omega}}^{A_k}$ and $\Delta_{Q_{k,\omega}}^{A_k}$ satisfy the following Lyapunov equations:
\begin{align}
&A^T\Delta_{Q_{12,\omega}}^{A_k}+\Delta_{Q_{12,\omega}}^{A_k}A_k+Q_{12,\omega}\Delta_{A_k}+C^TC_k\Delta_{F_{k,\omega}}^{A_k}+\sum_{i=1}^{p}\big(M_iP_{12,\omega}M_{k,i}\Delta_{F_{k,\omega}}^{A_k}\nonumber\\
&\hspace*{4.5cm}+F_\omega^*M_i\Delta_{P_{12,\omega}}^{A_k}M_{k,i}+M_i\Delta_{P_{12,\omega}}^{A_k}M_{k,i}F_{k,\omega}\big)=0,\nonumber\\
&\hspace*{2.85cm}A\Delta_{P_{12,\omega}}^{A_k}+\Delta_{P_{12,\omega}}^{A_k}A_k^T+P_{12,\omega}(\Delta_{A_k})^T+BB_k^T(\Delta_{F_{k,\omega}}^{A_k})^*=0,\nonumber\\
&A_k^T\Delta_{Q_{k,\omega}}^{A_k}+\Delta_{Q_{k,\omega}}^{A_k}A_k+(\Delta_{A_k})^TQ_{k,\omega}+Q_{k,\omega}\Delta_{A_k}\nonumber\\
&\hspace*{0.65cm}+(\Delta_{F_{k,\omega}}^{A_k})^*C_k^TC_k+C_k^TC_k\Delta_{F_{k,\omega}}^{A_k}+\sum_{i=1}^{p}\big((\Delta_{F_{k,\omega}}^{A_k})^*M_{k,i}P_{k,\omega}M_{k,i}\nonumber\\
&\hspace*{0.65cm}+M_{k,i}P_{k,\omega}M_{k,i}\Delta_{F_{k,\omega}}^{A_k}+F_{k,\omega}^*M_{k,i}\Delta_{P_{k,\omega}}^{A_k}M_{k,i}+M_{k,i}\Delta_{P_{k,\omega}}^{A_k}M_{k,i}F_{k,\omega}\big)=0,\nonumber\\
&A_k\Delta_{P_{k,\omega}}^{A_k}+\Delta_{P_{k,\omega}}^{A_k}A_k^T+\Delta_{A_k}P_{k,\omega}+P_{k,\omega}(\Delta_{A_k})^T\nonumber\\
&\hspace*{6cm}+\Delta_{F_{k,\omega}}^{A_k}B_kB_k^T+B_kB_k^T(\Delta_{F_{k,\omega}}^{A_k})^*=0,\nonumber
\end{align}
where
\begin{align}
\Delta_{F_{k,\omega}}^{A_k}=\frac{j}{\pi}\mathcal{L}(-j\nu I-A_k,\Delta_{A_k})+o(||\Delta_{A_k}||);\nonumber
\end{align}cf. \cite{higham2008functions}. Since we are only concerned with first-order perturbations, the term $o(||\Delta_{A_k}||)$ will be omitted for the rest of the proof.
Now,
\begin{align}
&\operatorname{trace}\Big(BB_k^T(\Delta_{Q_{12,\omega}}^{A_k})^*\Big)\nonumber\\
&=\operatorname{trace}\Big((-AP_{12}-P_{12}A_k^T)(\Delta_{Q_{12,\omega}}^{A_k})^*\Big)\nonumber\\
&=\operatorname{trace}\Big(-AP_{12}(\Delta_{Q_{12,\omega}}^{A_k})^*-P_{12}A_k^T(\Delta_{Q_{12,\omega}}^{A_k})^*\Big)\nonumber\\
&=\operatorname{trace}\Big(P_{12}^T(-A^T\Delta_{Q_{12,\omega}}^{A_k}-A_k\Delta_{Q_{12,\omega}}^{A_k})\Big)\nonumber\\
&=\operatorname{trace}\Big(P_{12}^TQ_{12,\omega}\Delta_{A_k}+P_{12}^TC^TC_k\Delta_{F_{k,\omega}}^{A_k}+\sum_{i=1}^{p}\big(P_{12}^TM_iP_{12,\omega}M_{k,i}\Delta_{F_{k,\omega}}^{A_k}\nonumber\\
&\hspace*{2cm}+P_{12}^TF_\omega^TM_i\Delta_{P_{12,\omega}}^{A_k}M_{k,i}+F_{k,\omega}P_{12}^TM_i\Delta_{P_{12,\omega}}^{A_k}M_{k,i}\big)\Big)\nonumber\\
&=\operatorname{trace}\Big(Q_{12,\omega}^*P_{12}\Delta_{A_k}^T+P_{12}^TC^TC_k\Delta_{F_{k,\omega}}^{A_k}+\sum_{i=1}^{p}\big(P_{12}^TM_iP_{12,\omega}M_{k,i}\Delta_{F_{k,\omega}}^{A_k}\nonumber\\
&\hspace*{2cm}+P_{12}^TF_\omega^*M_i\Delta_{P_{12,\omega}}^{A_k}M_{k,i}+F_{k,\omega}P_{12}^TM_i\Delta_{P_{12,\omega}}^{A_k}M_{k,i}\big)\Big).\nonumber
\end{align}
Similarly, note that:
\begin{align}
&\operatorname{trace}(B_kB_k^T\Delta_{Q_{k,\omega}}^{A_k})\nonumber\\
&=\operatorname{trace}\Big(\big(-A_kP_k-P_kA_k^T\big)\Delta_{Q_{k,\omega}}^{A_k}\Big)\nonumber\\
&=\operatorname{trace}\Big(P_k\big(-A_k^T\Delta_{Q_{k,\omega}}^{A_k}-\Delta_{Q_{k,\omega}}^{A_k}A_k\big)\Big)\nonumber\\
&=\operatorname{trace}\Bigg(P_k\Big((\Delta_{A_k})^TQ_{k,\omega}+Q_{k,\omega}\Delta_{A_k}+(\Delta_{F_{k,\omega}}^{A_k})^*C_k^TC_k+C_k^TC_k\Delta_{F_{k,\omega}}^{A_k}\nonumber\\
&\hspace*{2cm}+\sum_{i=1}^{p}\big((\Delta_{F_{k,\omega}}^{A_k})^*M_{k,i}P_{k,\omega}M_{k,i}+M_{k,i}P_{k,\omega}M_{k,i}\Delta_{F_{k,\omega}}^{A_k}\nonumber\\
&\hspace*{2cm}+F_{k,\omega}^*M_{k,i}\Delta_{P_{k,\omega}}^{A_k}M_{k,i}+M_{k,i}\Delta_{P_{k,\omega}}^{A_k}M_{k,i}F_{k,\omega}\big)\Big)\Bigg)\nonumber\\
&=\operatorname{trace}\Big(2Q_{k,\omega}P_k(\Delta_{A_k})^T+2(\Delta_{F_{k,\omega}}^{A_k})^*C_k^TC_kP_k\nonumber\\
&\hspace*{2cm}+\sum_{i=1}^{p}\big(2(\Delta_{F_{k,\omega}}^{A_k})^*M_{k,i}P_{k,\omega}M_{k,i}P_k+M_{k,i}F_{k,\omega}P_kM_k\Delta_{P_{k,\omega}}^{A_k}\nonumber\\
&\hspace*{2cm}+M_{k,i}P_kF_{k,\omega}^*M_{k,i}\Delta_{P_{k,\omega}}^{A_k}\big)\Big).\nonumber
\end{align}
Therefore:
\begin{align}
&\Delta_{J}^{A_k}=\operatorname{trace}\Big(-2(Q_{12,\omega})^*P_{12}(\Delta_{A_k})^T+2Q_{k,\omega}P_k(\Delta_{A_k})^T\nonumber\\
&\hspace*{2cm}-2(\Delta_{F_{k,\omega}}^{A_k})^*C_k^TCP_{12}+2(\Delta_{F_{k,\omega}}^{A_k})^*C_k^TC_kP_k\nonumber\\
&\hspace*{2cm}+\sum_{i=1}^{p}\big(-2M_iF_\omega P_{12}M_{k,\omega}(\Delta_{P_{12,\omega}}^{A_k})^*-2M_iP_{12}F_{k,\omega}^*M_{k,i}(\Delta_{P_{12,\omega}}^{A_k})^*\nonumber\\
&\hspace*{2cm}+M_{k,i}F_{k,\omega}P_kM_{k,i}\Delta_{P_{k,\omega}}^{A_k}+M_{k,i}P_kF_{k,\omega}^*M_{k,i}\Delta_{P_{k,\omega}}^{A_k}\nonumber\\
&\hspace*{2cm}-2(\Delta_{F_{k,\omega}}^{A_k})^*M_{k,i}(P_{12,\omega})^*M_iP_{12}+2(\Delta_{F_{k,\omega}}^{A_k})^*M_{k,i}P_{k,\omega}M_{k,i}P_k\big)\Big).\nonumber
\end{align}
Since
\begin{align}
P_{12,\omega}=F_\omega P_{12}+P_{12}F_{k,\omega}^*,\nonumber\\
P_{k,\omega}=F_{k,\omega}P_k+P_kF_{k,\omega}^*,\nonumber
\end{align}
we have:
\begin{align}
&\Delta_{J}^{A_k}=\operatorname{trace}\Big(-2(Q_{12,\omega})^*P_{12}(\Delta_{A_k})^T+2Q_{k,\omega}P_k(\Delta_{A_k})^T\nonumber\\
&\hspace*{1cm}-2(\Delta_{F_{k,\omega}}^{A_k})^*C_k^TCP_{12}+2(\Delta_{F_{k,\omega}}^{A_k})^*C_k^TC_kP_k\nonumber\\
&\hspace*{1cm}+\sum_{i=1}^{p}\big(-2M_iP_{12,\omega}M_{k,i}(\Delta_{P_{12,\omega}}^{A_k})^*+M_{k,i}P_{k,\omega}M_{k,i}\Delta_{P_{k,\omega}}^{A_k}\nonumber\\
&\hspace*{1cm}-2(\Delta_{F_{k,\omega}}^{A_k})^*M_{k,i}(P_{12,\omega})^*M_iP_{12}+2(\Delta_{F_{k,\omega}}^{A_k})^*M_{k,i}P_{k,\omega}M_{k,i}P_k\big)\Big);\nonumber
\end{align}cf. \cite{zulfiqar2022adaptive}.
Note that
\begin{align}
&\operatorname{trace}\Big(\sum_{i=1}^{p}M_iP_{12,\omega}M_{k,i}(\Delta_{P_{12,\omega}}^{A_k})^*\Big)\nonumber\\
&=\operatorname{trace}\Big(\big(-A^T\bar{Z}_{12}-\bar{Z}_{12}A_k\big)(\Delta_{P_{12,\omega}}^{A_k})^*\Big)\nonumber\\
&=\operatorname{trace}\Big(\big(-A\Delta_{P_{12,\omega}}^{A_k}-\Delta_{P_{12,\omega}}^{A_k}A_k^T\big)\bar{Z}_{12}^*\Big)\nonumber\\
&=\operatorname{trace}\big(\bar{Z}_{12}^*P_{12,\omega}(\Delta_{A_k})^T+\bar{Z}_{12}^*BB_k^T(\Delta_{F_{k,\omega}}^{A_k})^T\big).\nonumber
\end{align}
Additionally, note that
\begin{align}
&\operatorname{trace}\Big(\sum_{i=1}^{p}M_{k,i}P_{k,\omega}M_{k,i}\Delta_{P_{k,\omega}}^{A_k}\Big)\nonumber\\
&=\operatorname{trace}\Big(\big(-A_k^T\bar{Z}_k-\bar{Z}_kA_k\big)\Delta_{P_{k,\omega}}^{A_k}\Big)\nonumber\\
&=\operatorname{trace}\Big(\big(-A_k\Delta_{P_{k,\omega}}^{A_k}-\Delta_{P_{k,\omega}}^{A_k}A_k^T\big)\bar{Z}_k\Big)\nonumber\\
&=2\operatorname{trace}\big(\bar{Z}_kP_{k,\omega}(\Delta_{A_k})^T+\bar{Z}_kB_kB_k^T(\Delta_{F_{k,\omega}}^{A_k})^T\big).\nonumber
\end{align}
Thus,
\begin{align}
\Delta_{J}^{A_k}&=\operatorname{trace}\Big(-2(Q_{12,\omega})^*P_{12}(\Delta_{A_k})^T+2Q_{k,\omega}P_k(\Delta_{A_k})^T\nonumber\\
&\hspace*{2cm}-2\bar{Z}_{12}^*P_{12,\omega}(\Delta_{A_k})^T+2\bar{Z}_kP_{k,\omega}(\Delta_{A_k})^T\nonumber\\
&\hspace*{2cm}-2B_kB^T\bar{Z}_{12}\Delta_{F_{k,\omega}}^{A_k}+2B_kB_k^T\bar{Z}_k\Delta_{F_{k,\omega}}^{A_k}\nonumber\\
&\hspace*{2cm}-2P_{12}^TC^TC_k\Delta_{F_{k,\omega}}^{A_k}+2P_kC_kC_k^T\Delta_{F_{k,\omega}}^{A_k}\nonumber\\
&\hspace*{2cm}-2\sum_{i=1}^{p}P_{12}^TM_iP_{12,\omega}M_{k,i}\Delta_{F_{k,\omega}}^{A_k}\nonumber\\
&\hspace*{2cm}+2\sum_{i=1}^{p}P_kM_{k,i}P_{k,\omega}M_{k,i}\Delta_{F_{k,\omega}}^{A_k}\Big)\nonumber\\
&=\operatorname{trace}\Big(-2(Q_{12,\omega})^*P_{12}(\Delta_{A_k})^T+2Q_{k,\omega}P_k(\Delta_{A_k})^T\nonumber\\
&\hspace*{2cm}-2\bar{Z}_{12}^*P_{12,\omega}(\Delta_{A_k})^T+2\bar{Z}_kP_{k,\omega}(\Delta_{A_k})^T-2V\Delta_{F_{k,\omega}}^{A_k}\Big).\nonumber
\end{align}
Recall that
\begin{align}
\Delta_{F_{k,\omega}}^{A_k}=\frac{j}{2\pi}\mathcal{L}(-A_k-j\nu I,-\Delta_{A_k}).\nonumber
\end{align}
By exchanging the trace and integral operations, we obtain
\begin{align}
\operatorname{trace}(V\Delta_{F_{k,\omega}}^{A_k})=-\operatorname{trace}(W\Delta_{A_k});\nonumber
\end{align}cf. \cite{petersson2013nonlinear}. Hence,
\begin{align}
\Delta_{J}^{A_k}=2\operatorname{trace}\Big(\big(-Q_{12,\omega}^*P_{12}+Q_{k,\omega}P_k-\bar{Z}_{12}^*P_{12,\omega}+\bar{Z}_kP_{k,\omega}+W^*\big)(\Delta_{A_k})^T\Big).\nonumber
\end{align}
Therefore, the gradient of $J$ with respect of $A_k$ is given by
\begin{align}
\nabla_{J}^{A_k}=2\big(-Q_{12,\omega}^*P_{12}+Q_{k,\omega}P_k-\bar{Z}_{12}^*P_{12,\omega}+\bar{Z}_kP_{k,\omega}+W^*\big);\nonumber
\end{align}cf. \cite{higham2008functions}.
Consequently,
\begin{align}
-Q_{12,\omega}^*P_{12}+Q_{k,\omega}P_k-\bar{Z}_{12}^*P_{12,\omega}+\bar{Z}_kP_{k,\omega}+W^*=0\label{nst79}
\end{align} is a necessary condition for a local optimum of $||G-G_k||_{H_{2,\omega}}^2$. Moreover, substituting (\ref{nst46})-(\ref{nst49}) into (\ref{nst79}), it simplifies to
\begin{align}
-Q_{12,\omega}^*P_{12,\omega}+Q_{k,\omega}P_{k,\omega}-Z_{12,\omega}^*P_{12,\omega}+Z_{k,\omega}P_{k,\omega}+L_\omega=0.\nonumber
\end{align} Additionally, since $Q_{12,\omega}=Y_{12,\omega}+Z_{12,\omega}$ and $Q_{k,\omega}=Y_{k,\omega}+Z_{k,\omega}$, we arrive at
\begin{align}
-(Y_{12,\omega}+2Z_{12,\omega})^*P_{12,\omega}+(Y_{k,\omega}+2Z_{k,\omega})P_{k,\omega}+L_\omega&=0.\nonumber
\end{align}

By introducing a small first-order perturbation $\Delta_{M_{k,i}}$ to $M_{k,i}$, $J$ is perturbed to $J+\Delta_{J}^{M_{k,i}}$. Consequently, $Q_{12,\omega}$ and $Q_{k,\omega}$ are perturbed to $Q_{12,\omega}+\Delta_{Q_{12,\omega}}^{M_{k,i}}$ and $Q_{k,\omega}+\Delta_{Q_{k,\omega}}^{M_{k,i}}$, respectively. As a result, the first-order terms of $\Delta_{J}^{M_{k,i}}$ are given by
\begin{align}
\Delta_{J}^{M_{k,i}}=\operatorname{trace}(-2B^T\Delta_{Q_{12,\omega}}^{M_{k,i}}B_k+B_k^T\Delta_{Q_{k,\omega}}^{M_{k,i}}B_k).\nonumber
\end{align}
Furthermore, it can be easily observed from (\ref{eq:30}) and (\ref{eq:31}) that $\Delta_{Q_{12,\omega}}^{M_{k,i}}$ and $\Delta_{Q_{k,\omega}}^{M_{k,i}}$ satisfy the following Lyapunov equations:
\begin{align}
&A^T\Delta_{Q_{12,\omega}}^{M_{k,i}}+\Delta_{Q_{12,\omega}}^{M_{k,i}}A_k+F_\omega^*M_iP_{12,\omega}\Delta_{M_{k,i}}+M_iP_{12,\omega}\Delta_{M_{k,i}}F_{k,\omega}=0,\nonumber\\
&A_k^T\Delta_{Q_{k,\omega}}^{M_{k,i}}+\Delta_{Q_{k,\omega}}^{M_{k,i}}A_k+F_{k,\omega}^*\Delta_{M_{k,i}}P_{k,\omega}M_{k,i}+\Delta_{M_{k,i}}P_{k,\omega}M_{k,i}F_{k,\omega}\nonumber\\
&\hspace{3cm}+F_{k,\omega}^*M_{k,i}P_{k,\omega}\Delta_{M_{k,i}}+M_{k,i}P_{k,\omega}\Delta_{M_{k,i}}F_{k,\omega}=0.\nonumber
\end{align}
Observe that
\begin{align}
&\operatorname{trace}(B^T\Delta_{Q_{12,\omega}}^{M_{k,i}}B_k)\nonumber\\
&=\operatorname{trace}\big(BB_k^T(\Delta_{Q_{12,\omega}}^{M_{k,i}})^*\big)\nonumber\\
&=\operatorname{trace}\Big(\big(-AP_{12}-P_{12}A_k^T\big)(\Delta_{Q_{12,\omega}}^{M_{k,i}})^*\Big)\nonumber\\
&=\operatorname{trace}\Big(\big(-A^T\Delta_{Q_{12,\omega}}^{M_{k,i}}-\Delta_{Q_{12,\omega}}^{M_{k,i}}A_k\big)P_{12}^T\Big)\nonumber\\
&=\operatorname{trace}\Big(\big(F_\omega^*M_iP_{12,\omega}\Delta_{M_{k,i}}+M_iP_{12,\omega}\Delta_{M_{k,i}}F_{k,\omega}\big)P_{12}^T\Big)\nonumber\\
&=\operatorname{trace}\big(P_{12}^TF_\omega^*M_iP_{12,\omega}\Delta_{M_{k,i}}+F_{k,\omega}P_{12}^TM_iP_{12,\omega}\Delta_{M_{k,i}}\big)\nonumber\\
&=\operatorname{trace}\big((P_{12,\omega})^*M_iF_\omega P_{12}(\Delta_{M_{k,i}})^T+(P_{12,\omega})^*M_iP_{12}F_{k,\omega}^*(\Delta_{M_{k,i}})^T\big)\nonumber\\
&=\operatorname{trace}\big((P_{12,\omega})^*M_iP_{12,\omega}(\Delta_{M_{k,i}})^T\big).\nonumber
\end{align}
Additionally, note that
\begin{align}
\operatorname{trace}(B_k^T\Delta_{Q_{k,\omega}}^{M_{k,i}}B_k)&=\operatorname{trace}\big(B_kB_k^T\Delta_{Q_{k,\omega}}^{M_{k,i}}\big)\nonumber\\
&=\operatorname{trace}\Big(\big(-A_kP_k-P_kA_k^T\big)\Delta_{Q_{k,\omega}}^{M_{k,i}}\Big)\nonumber\\
&=\operatorname{trace}\Big(\big(-A_k^T\Delta_{Q_{k,\omega}}^{M_{k,i}}-\Delta_{Q_{k,\omega}}^{M_{k,i}}A_k\big)P_k\Big)\nonumber\\
&=\operatorname{trace}\Big(\big(F_{k,\omega}^*\Delta_{M_{k,i}}P_{k,\omega}M_{k,i}+\Delta_{M_{k,i}}P_{k,\omega}M_{k,i}F_{k,\omega}\nonumber\\
&+F_{k,\omega}^*M_{k,i}P_{k,\omega}\Delta_{M_{k,i}}+M_{k,i}P_{k,\omega}\Delta_{M_{k,i}}F_{k,\omega}\big)P_k\Big)\nonumber\\
&=\operatorname{trace}\big(F_{k,\omega}^*\Delta_{M_{k,i}}P_{k,\omega}M_{k,i}P_k+\Delta_{M_{k,i}}P_{k,\omega}M_{k,i}F_{k,\omega}P_k\nonumber\\
&+F_{k,\omega}^*M_{k,i}P_{k,\omega}\Delta_{M_{k,i}}P_k+M_{k,i}P_{k,\omega}\Delta_{M_{k,i}}F_{k,\omega}P_k\big)\nonumber\\
&=\operatorname{trace}\big(2P_{k,\omega}M_{k,i}P_{k,\omega}\big).\nonumber
\end{align}
Thus, $\Delta_{J}^{M_{k,i}}$ becomes
\begin{align}
\Delta_{J}^{M_{k,i}}=2\operatorname{trace}\big((-P_{12,\omega}^*M_iP_{12,\omega}+P_{k,\omega}M_{k,i}P_{k,\omega})(\Delta_{M_{k,i}})^T\big).\nonumber
\end{align}
Hence, the gradient of $J$ with respect to $M_{k,i}$ is given by
\begin{align}
\nabla_{J}^{M_{k,i}}=2(-P_{12,\omega}^*M_iP_{12,\omega}+P_{k,\omega}M_{k,i}P_{k,\omega}).\nonumber
\end{align}
Therefore,
\begin{align}
-P_{12,\omega}^*M_iP_{12,\omega}+P_{k,\omega}M_{k,i}P_{k,\omega}=0\nonumber
\end{align}is a necessary condition for the local optimum of $||G-G_k||_{\mathcal{H}_{2,\omega}}^2$.

By introducing a small first-order perturbation $\Delta_{B_k}$ to $B_k$, $J$ is perturbed to $J+\Delta_{J}^{B_k}$. Consequently, $P_{12,\omega}$, $P_{k,\omega}$, $Q_{12,\omega}$ and $Q_{k,\omega}$ are perturbed to $P_{12,\omega}+\Delta_{P_{12,\omega}}^{B_k}$, $P_{k,\omega}+\Delta_{P_{k,\omega}}^{B_k}$, $Q_{12,\omega}+\Delta_{Q_{12,\omega}}^{B_k}$, and $Q_{k,\omega}+\Delta_{Q_{k,\omega}}^{B_k}$, respectively. As a result, the first-order terms of $\Delta_{J}^{B_k}$ are given by
\begin{align}
\Delta_{J}^{B_k}&=\operatorname{trace}\big(-2Q_{12,\omega}^*B(\Delta_{B_k})^T+2Q_{k,\omega}B_k(\Delta_{B_k})^T\nonumber\\
&\hspace{3cm}-2BB_k(\Delta_{Q_{12,\omega}}^{B_k})^*+B_kB_k^T\Delta_{Q_{k,\omega}}^{B_k}\big).\nonumber
\end{align}
It follows from (\ref{eq:24})-(\ref{eq:31}) that $\Delta_{P_{12,\omega}}^{B_k}$, $\Delta_{P_{k,\omega}}^{B_k}$, $\Delta_{Q_{12,\omega}}^{B_k}$, and $\Delta_{Q_{k,\omega}}^{B_k}$ satisfy the following equations:
\begin{align}
&A\Delta_{P_{12,\omega}}^{B_k}+\Delta_{P_{12,\omega}}^{B_k}A_k^T+F_\omega B(\Delta_{B_k})^T+B(\Delta_{B_k})^TF_{k,\omega}^*=0,\nonumber\\
&A_k\Delta_{P_{k,\omega}}^{B_k}+\Delta_{P_{k,\omega}}^{B_k}A_k^T+F_{k,\omega}\Delta_{B_k}B_k^T+\Delta_{B_k}B_k^TF_{k,\omega}^*\nonumber\\
&\hspace*{3cm}+F_{k,\omega}B_k(\Delta_{B_k})^T+B_k(\Delta_{B_k})^TF_{k,\omega}^*=0\nonumber\\
&A^T\Delta_{Q_{12,\omega}}^{B_k}+\Delta_{Q_{12,\omega}}^{B_k}A_k+F_\omega^*\sum_{i=1}^{p}M_i\Delta_{P_{12,\omega}}^{B_k}M_{k,i}+\sum_{i=1}^{p}M_i\Delta_{P_{12,\omega}}^{B_k}M_{k,i}F_{k,\omega}=0\nonumber\\
&A_k^T\Delta_{Q_{k,\omega}}^{B_k}+\Delta_{Q_{k,\omega}}^{B_k}A_k+F_{k,\omega}^*\sum_{i=1}^{p}M_{k,i}\Delta_{P_{k,\omega}}^{B_k}M_{k,i}+\sum_{i=1}^{p}M_{k,i}\Delta_{P_{k,\omega}}^{B_k}M_{k,i}F_{k,\omega}=0.\nonumber
\end{align}
Note that
\begin{align}
&\operatorname{trace}\big(BB_k^T(\Delta_{Q_{12,\omega}}^{B_k})^*\big)\nonumber\\
&=\operatorname{trace}\Big(\big(-AP_{12}-P_{12}A_k^T\big)(\Delta_{Q_{12,\omega}}^{B_k})^*\Big)\nonumber\\
&=\operatorname{trace}\Big(\big(-A^T\Delta_{Q_{12,\omega}}^{B_k}-\Delta_{Q_{12,\omega}}^{B_k}A_k\big)P_{12}^T\Big)\nonumber\\
&=\operatorname{trace}\Big(\big(F_\omega^*\sum_{i=1}^{p}M_i\Delta_{P_{12,\omega}}^{B_k}M_{k,i}+\sum_{i=1}^{p}M_i\Delta_{P_{12,\omega}}^{B_k}M_{k,i}F_{k,\omega}\big)P_{12}^T\Big)\nonumber\\
&=\operatorname{trace}\Big(\sum_{i=1}^{p}M_{k,i}P_{12}^TF_\omega^*M_i\Delta_{P_{12,\omega}}^{B_k}+\sum_{i=1}^{p}M_{k,i}F_{k,\omega}P_{12}^TM_i\Delta_{P_{12,\omega}^{B_k}}\Big)\nonumber\\
&=\operatorname{trace}\Big(\sum_{i=1}^{p}M_iP_{12,\omega}M_{k,i}(\Delta_{P_{12,\omega}}^{B_k})^*\Big).\nonumber
\end{align}
Furthermore,
\begin{align}
&\operatorname{trace}\big(B_kB_k^T\Delta_{Q_{k,\omega}}^{B_k}\big)\nonumber\\
&=\operatorname{trace}\Big(\big(-A_kP_k-P_kA_k^T\big)\Delta_{Q_{k,\omega}}^{B_k}\Big)\nonumber\\
&=\operatorname{trace}\Big(\big(-A_k^T\Delta_{Q_{k,\omega}}^{B_k}-\Delta_{Q_{k,\omega}}^{B_k}A_k\big)P_k\Big)\nonumber\\
&=\operatorname{trace}\Big(\big(\sum_{i=1}^{p}F_{k,\omega}^*M_{k,i}\Delta_{P_{k,\omega}}^{B_k}M_{k,i}+\sum_{i=1}^{p}M_{k,i}\Delta_{P_{k,\omega}}^{B_k}M_{k,i}F_{k,\omega}\big)P_k\Big)\nonumber\\
&=\operatorname{trace}\Big(\sum_{i=1}^{p}M_{k,i}P_kF_{k,\omega}^*M_{k,i}\Delta_{P_{k,\omega}}^{B_k}+\sum_{i=1}^{p}M_{k,i}F_{k,\omega}P_kM_{k,i}\Delta_{P_{k,\omega}}^{B_k}\Big)\nonumber\\
&=\operatorname{trace}\Big(\sum_{i=1}^{p}M_{k,i}P_{k,\omega}M_{k,i}\Delta_{P_{k,\omega}}^{B_k}\Big).\nonumber
\end{align}
Thus,
\begin{align}
\Delta_{J}^{B_k}&=\operatorname{trace}\Big(-2Q_{12,\omega}^*B(\Delta_{B_k})^T+2Q_{k,\omega}B_k(\Delta_{B_k})^T\nonumber\\
&\hspace*{2cm}-2\sum_{i=1}^{p}M_iP_{12,\omega}M_{k,i}(\Delta_{P_{12,\omega}}^{B_k})^T+\sum_{i=1}^{p}M_{k,i}P_{k,\omega}M_{k,i}\Delta_{P_{k,\omega}}^{B_k}\Big).\nonumber
\end{align}
Additionally,
\begin{align}
&\operatorname{trace}\Big(\sum_{i=1}^{p}M_iP_{12,\omega}M_{k,i}(\Delta_{P_{12,\omega}}^{B_k})^*\Big)\nonumber\\
&=\operatorname{trace}\Big(\big(-A^T\bar{Z}_{12}-\bar{Z}_{12}A_k\big)(\Delta_{P_{12,\omega}}^{B_k})^*\Big)\nonumber\\
&=\operatorname{trace}\Big(\big(-A\Delta_{P_{12,\omega}}^{B_k}-\Delta_{P_{12,\omega}}^{B_k}A_k^T\big)\bar{Z}_{12}^*\Big)\nonumber\\
&=\operatorname{trace}\Big(\big(F_\omega B(\Delta_{B_k})^T+B(\Delta_{B_k})^TF_{k,\omega}^*\big)\bar{Z}_{12}^*\Big)\nonumber\\
&=\operatorname{trace}\big(\bar{Z}_{12}^*F_\omega B(\Delta_{B_k})^T+F_{k,\omega}^*\bar{Z}_{12}^*B(\Delta_{B_k})^T\big)\nonumber\\
&=\operatorname{trace}\big(Z_{12,\omega}^*B(\Delta_{B_k})^T\big).\nonumber
\end{align}
Similarly,
\begin{align}
&\operatorname{trace}\Big(\sum_{i=1}^{p}M_{k,i}P_{k,\omega}M_{k,i}\Delta_{P_{k,\omega}}^{B_k}\Big)\nonumber\\
&=\operatorname{trace}\Big(\big(-A_k^T\bar{Z}_k-\bar{Z}_kA_k\big)\Delta_{P_{k,\omega}}^{B_k}\Big)\nonumber\\
&=\operatorname{trace}\Big(\big(-A_k\Delta_{P_{k,\omega}}^{B_k}-\Delta_{P_{k,\omega}}^{B_k}A_k^T\big)\bar{Z}_k\Big)\nonumber\\
&=\operatorname{trace}\Big(2\bar{Z}_kF_{k,\omega}B_k(\Delta_{B_k})^T+2F_{k,\omega}^*\bar{Z}_kB_k(\Delta_{B_k})^T\Big)\nonumber\\
&=2\operatorname{trace}\big(Z_{k,\omega}^*B_k(\Delta_{B_k})^T\big).\nonumber
\end{align}
Thus,
\begin{align}
\Delta_{J}^{B_k}&=\operatorname{trace}\big(-2Q_{12,\omega}^*B(\Delta_{B_k})^T+2Q_{k,\omega}B_k(\Delta_{B_k})^T\nonumber\\
&\hspace*{2cm}-2Z_{12,\omega}^*B(\Delta_{B_k})^T+2Z_{k,\omega}B_k(\Delta_{B_k})^T\big).\nonumber
\end{align}
Therefore, the gradient of $J$ with respect to $B_k$ is
\begin{align}
\nabla_{J}^{B_k}=2(-Q_{12,\omega}^*B+Q_{k,\omega}B_k-Z_{12,\omega}^*B+Z_{k,\omega}B_k).\nonumber
\end{align}
Hence,
\begin{align}
&-Q_{12,\omega}^*B+Q_{k,\omega}B_k-Z_{12,\omega}^*B+Z_{k,\omega}B_k\nonumber\\
&=-(Y_{12,\omega}+2Z_{12,\omega})^*B+(Y_{k,\omega}+2Z_{k,\omega})B_k\nonumber\\
&=0\nonumber
\end{align} is a necessary condition for the local optimum of $||G-G_k||_{\mathcal{H}_{2,\omega}}^2$.

First, we reformat the cost function $J$ as follows:
\begin{align}
J&=\operatorname{trace}(-2B^T(Y_{12,\omega}+Z_{12,\omega})B_k+B_k^T(Y_{k,\omega}+Z_{k,\omega})B_k)\nonumber\\
&=\operatorname{trace}(-2B^TY_{12,\omega}B_k-2B^TZ_{12,\omega}B_k+B_k^TY_{k,\omega}B_k+B_k^TZ_{k,\omega}B_k).\nonumber
\end{align}
Since
\begin{align}
\operatorname{trace}(-2B^TY_{12,\omega}B_k+B_k^TY_{k,\omega}B_k)=\operatorname{trace}(2CP_{12,\omega}C_k^T+C_kP_{k,\omega}C_k^T),\nonumber
\end{align}
we can write
\begin{align}
J=\operatorname{trace}(2CP_{12,\omega}C_k^T+C_kP_{k,\omega}C_k^T-2B^TZ_{12,\omega}B_k+B_k^TZ_{k,\omega}B_k);\nonumber
\end{align}cf. \cite{zulfiqar2022adaptive}. By adding a small first-order perturbation $\Delta_{C_k}$ to $C_k$, the cost function $J$ is perturbed to $J+\Delta_{J}^{C_k}$. The first-order terms of $\Delta_{J}^{C_k}$ are given by:
\begin{align}
\Delta_{J}^{C_k}=\operatorname{trace}(2CP_{12,\omega}(\Delta_{C_k})^T+2C_kP_{k,\omega}(\Delta_{C_k})^T)\nonumber
\end{align}
Thus, the gradient of $J$ with respect to $C_k$ is:
\begin{align}
\nabla_{J}^{C_k}=2(CP_{12,\omega}+C_kP_{k,\omega}).\nonumber
\end{align}
Therefore, the necessary condition for the local optimum of $||G-G_k||_{\mathcal{H}_{2,\omega}}^2$ is:
\begin{align}
CP_{12,\omega}+C_kP_{k,\omega}=0.\nonumber
\end{align}This concludes the proof.
%\bibliography{mybibfile}

\begin{thebibliography}{10}
\expandafter\ifx\csname url\endcsname\relax
  \def\url#1{\texttt{#1}}\fi
\expandafter\ifx\csname urlprefix\endcsname\relax\def\urlprefix{URL }\fi
\expandafter\ifx\csname href\endcsname\relax
  \def\href#1#2{#2} \def\path#1{#1}\fi

\bibitem{montenbruck2017linear}
J.~M. Montenbruck, S.~Zeng, F.~Allg{\"o}wer, Linear systems with quadratic
  outputs, in: 2017 American Control Conference (ACC), IEEE, 2017, pp.
  1030--1034.

\bibitem{van2006port}
A.~Van Der~Schaft, Port-Hamiltonian systems: An introductory survey, in:
  International congress of mathematicians, European Mathematical Society
  Publishing House (EMS Ph), 2006, pp. 1339--1365.

\bibitem{picinbono1988optimal}
B.~Picinbono, P.~Devaut, Optimal linear-quadratic systems for detection and
  estimation, IEEE Transactions on Information Theory 34~(2) (1988) 304--311.

\bibitem{aumann2023structured}
Q.~Aumann, S.~W. Werner, Structured model order reduction for vibro-acoustic
  problems using interpolation and balancing methods, Journal of Sound and
  Vibration 543 (2023) 117363.

\bibitem{depken1974observability}
C.~A. Depken, The observability of systems with linear dynamics and quadratic
  output., Ph.D. thesis, Georgia Institute of Technology (1974).

\bibitem{haasdonk2013reduced}
B.~Haasdonk, K.~Urban, B.~Wieland, Reduced basis methods for parameterized
  partial differential equations with stochastic influences using the
  Karhunen--Lo{\`e}ve expansion, SIAM/ASA Journal on Uncertainty Quantification
  1~(1) (2013) 79--105.

\bibitem{lutes2004random}
L.~D. Lutes, S.~Sarkani, Random vibrations: Analysis of structural and
  mechanical systems, Butterworth-Heinemann, 2004.

\bibitem{hammerschmidt2015reduced}
M.~Hammerschmidt, S.~Herrmann, J.~Pomplun, L.~Zschiedrich, S.~Burger,
  F.~Schmidt, Reduced basis method for Maxwell's equations with resonance
  phenomena, in: Optical Systems Design 2015: Computational Optics, Vol. 9630,
  SPIE, 2015, pp. 138--151.

\bibitem{hess2016output}
M.~W. Hess, P.~Benner, Output error estimates in reduced basis methods for
  time-harmonic Maxwell’s equations, in: Numerical Mathematics and Advanced
  Applications ENUMATH 2015, Springer, 2016, pp. 351--358.

\bibitem{reiter2024interpolatory}
S.~Reiter, S.~W. Werner, Interpolatory model order reduction of large-scale
  dynamical systems with root mean squared error measures, arXiv preprint
  arXiv:2403.08894 (2024).

\bibitem{van2012model}
R.~Van~Beeumen, K.~Van~Nimmen, G.~Lombaert, K.~Meerbergen, Model reduction for
  dynamical systems with quadratic output, International journal for numerical
  methods in engineering 91~(3) (2012) 229--248.

\bibitem{benner2005dimension}
P.~Benner, V.~Mehrmann, D.~C. Sorensen, Dimension reduction of large-scale
  systems, Vol.~45, Springer, 2005.

\bibitem{antoulas2004model}
A.~Antoulas, D.~Sorensen, K.~A. Gallivan, P.~Van~Dooren, A.~Grama, C.~Hoffmann,
  A.~Sameh, Model reduction of large-scale dynamical systems, in: Computational
  Science-ICCS 2004: 4th International Conference, Krak{\'o}w, Poland, June
  6-9, 2004, Proceedings, Part III 4, Springer, 2004, pp. 740--747.

\bibitem{antoulas2015model}
A.~C. Antoulas, R.~Ionutiu, N.~Martins, E.~J.~W. ter Maten, K.~Mohaghegh,
  R.~Pulch, J.~Rommes, M.~Saadvandi, M.~Striebel, Model order reduction:
  Methods, concepts and properties, Coupled multiscale simulation and
  optimization in nanoelectronics (2015) 159--265.

\bibitem{antoulas2020interpolatory}
A.~C. Antoulas, C.~A. Beattie, S.~G{\"u}{\u{g}}ercin, Interpolatory methods for
  model reduction, SIAM, 2020.

\bibitem{moore1981principal}
B.~Moore, Principal component analysis in linear systems: Controllability,
  observability, and model reduction, IEEE Transactions on Automatic Control
  26~(1) (1981) 17--32.

\bibitem{enns1984model}
D.~F. Enns, Model reduction with balanced realizations: An error bound and a
  frequency weighted generalization, in: The 23rd IEEE Conference on Decision
  and Control, IEEE, 1984, pp. 127--132.

\bibitem{mehrmann2005balanced}
V.~Mehrmann, T.~Stykel, Balanced truncation model reduction for large-scale
  systems in descriptor form, in: Dimension Reduction of Large-Scale Systems:
  Proceedings of a Workshop held in Oberwolfach, Germany, October 19--25, 2003,
  Springer, 2005, pp. 83--115.

\bibitem{reis2008balanced}
T.~Reis, T.~Stykel, Balanced truncation model reduction of second-order
  systems, Mathematical and Computer Modelling of Dynamical Systems 14~(5)
  (2008) 391--406.

\bibitem{sandberg2004balanced}
H.~Sandberg, A.~Rantzer, Balanced truncation of linear time-varying systems,
  IEEE Transactions on Automatic Control 49~(2) (2004) 217--229.

\bibitem{son2021balanced}
N.~T. Son, P.-Y. Gousenbourger, E.~Massart, T.~Stykel, Balanced truncation for
  parametric linear systems using interpolation of Gramians: A comparison of
  algebraic and geometric approaches, Model reduction of complex dynamical
  systems (2021) 31--51.

\bibitem{kramer2022balanced}
B.~Kramer, K.~Willcox, Balanced truncation model reduction for lifted nonlinear
  systems, in: Realization and Model Reduction of Dynamical Systems: A
  Festschrift in Honor of the 70th Birthday of Thanos Antoulas, Springer, 2022,
  pp. 157--174.

\bibitem{benner2017truncated}
P.~Benner, P.~Goyal, M.~Redmann, Truncated gramians for bilinear systems and
  their advantages in model order reduction, Model reduction of parametrized
  systems (2017) 285--300.

\bibitem{wong2007fast}
N.~Wong, V.~Balakrishnan, Fast positive-real balanced truncation via quadratic
  alternating direction implicit iteration, IEEE Transactions on Computer-Aided
  Design of Integrated Circuits and Systems 26~(9) (2007) 1725--1731.

\bibitem{guiver2013bounded}
C.~Guiver, M.~R. Opmeer, Bounded real and positive real balanced truncation for
  infinite-dimensional systems, Mathematical Control and Related Fields 3~(1)
  (2013) 83--119.

\bibitem{wong2004passivity}
N.~Wong, V.~Balakrishnan, C.-K. Koh, Passivity-preserving model reduction via a
  computationally efficient project-and-balance scheme, in: Proceedings of the
  41st annual Design Automation Conference, 2004, pp. 369--374.

\bibitem{sarkar2023structure}
A.~Sarkar, J.~M. Scherpen, Structure-preserving generalized balanced truncation
  for nonlinear Port-Hamiltonian systems, Systems \& Control Letters 174 (2023)
  105501.

\bibitem{gugercin2004survey}
S.~Gugercin, A.~C. Antoulas, A survey of model reduction by balanced truncation
  and some new results, International Journal of Control 77~(8) (2004)
  748--766.

\bibitem{van2010model}
R.~Van~Beeumen, K.~Meerbergen, Model reduction by balanced truncation of linear
  systems with a quadratic output, in: AIP Conference Proceedings, Vol. 1281,
  American Institute of Physics, 2010, pp. 2033--2036.

\bibitem{pulch2019balanced}
R.~Pulch, A.~Narayan, Balanced truncation for model order reduction of linear
  dynamical systems with quadratic outputs, SIAM Journal on Scientific
  Computing 41~(4) (2019) A2270--A2295.

\bibitem{benner2021gramians}
P.~Benner, P.~Goyal, I.~P. Duff, Gramians, energy functionals, and balanced
  truncation for linear dynamical systems with quadratic outputs, IEEE
  Transactions on Automatic Control 67~(2) (2021) 886--893.

\bibitem{wilson1970optimum}
D.~Wilson, Optimum solution of model-reduction problem, in: Proceedings of the
  Institution of Electrical Engineers, Vol. 117, IET, 1970, pp. 1161--1165.

\bibitem{gugercin2008h_2}
S.~Gugercin, A.~C. Antoulas, C.~Beattie, $\mathcal{H}_2$ model reduction for
  large-scale linear dynamical systems, SIAM journal on matrix analysis and
  applications 30~(2) (2008) 609--638.

\bibitem{xu2011optimal}
Y.~Xu, T.~Zeng, Optimal $\mathcal{H}_2$ model reduction for large scale MIMO
  systems via tangential interpolation, International Journal of Numerical
  Analysis \& Modeling 8~(1) (2011).

\bibitem{MPIMD11-11}
P.~Benner, M.~Köhler, J.~Saak, Sparse-dense sylvester equations in
  $\mathcal{H}_2$-model order reduction, Preprint MPIMD/11-11, Max Planck
  Institute Magdeburg, available from
  \url{http://www.mpi-magdeburg.mpg.de/preprints/} (Dec. 2011).

\bibitem{reiter2024h2}
S.~Reiter, I.~Pontes~Duff, I.~V. Gosea, S.~Gugercin, $\mathcal{H}_2$ optimal
  model reduction of linear systems with multiple quadratic outputs, arXiv
  preprint arXiv:2405.05951 (2024).

\bibitem{aldhaheri2006frequency}
R.~W. Aldhaheri, Frequency-domain model reduction approach to design IIR
  digital filters using orthonormal bases, AEU-International Journal of
  Electronics and Communications 60~(6) (2006) 413--420.

\bibitem{obinata2012model}
G.~Obinata, B.~D. Anderson, Model reduction for control system design, Springer
  Science \& Business Media, 2012.

\bibitem{ennsphd}
D.~F. Enns, Model reduction for control system design, Tech. Rep.
  NASA-CR-170417, NASA, CA (1985).

\bibitem{zulfiqar2019finite}
U.~Zulfiqar, V.~Sreeram, X.~Du, Finite-frequency power system reduction,
  International Journal of Electrical Power \& Energy Systems 113 (2019)
  35--44.

\bibitem{zulfiqar2020frequency}
U.~Zulfiqar, V.~Sreeram, X.~Du, Frequency-limited pseudo-optimal rational
  Krylov algorithm for power system reduction, International Journal of
  Electrical Power \& Energy Systems 118 (2020) 105798.

\bibitem{gawronski1990model}
W.~Gawronski, J.-N. Juang, Model reduction in limited time and frequency
  intervals, International Journal of Systems Science 21~(2) (1990) 349--376.

\bibitem{benner2016frequency}
P.~Benner, P.~K{\"u}rschnerrschner, J.~Saak, Frequency-limited balanced
  truncation with low-rank approximations, SIAM Journal on Scientific Computing
  38~(1) (2016) A471--A499.

\bibitem{imran2015model}
M.~Imran, A.~Ghafoor, Model reduction of descriptor systems using frequency
  limited gramians, Journal of the Franklin Institute 352~(1) (2015) 33--51.

\bibitem{benner2021frequency}
P.~Benner, S.~W. Werner, Frequency-and time-limited balanced truncation for
  large-scale second-order systems, Linear Algebra and its Applications 623
  (2021) 68--103.

\bibitem{shaker2013frequency}
H.~R. Shaker, M.~Tahavori, Frequency-interval model reduction of bilinear
  systems, IEEE Transactions on Automatic Control 59~(7) (2013) 1948--1953.

\bibitem{song2024balanced}
Q.-Y. Song, U.~Zulfiqar, Z.-H. Xiao, M.~M. Uddin, V.~Sreeram, Balanced
  truncation of linear systems with quadratic outputs in limited time and
  frequency intervals, arXiv preprint arXiv:2402.11445 (2024).

\bibitem{petersson2014model}
D.~Petersson, J.~L{\"o}fberg, Model reduction using a frequency-limited
  $\mathcal{H}_2$-cost, Systems \& Control Letters 67 (2014) 32--39.

\bibitem{vuillemin2013h2}
P.~Vuillemin, C.~Poussot-Vassal, D.~Alazard, $\mathcal{H}_2$ optimal and
  frequency limited approximation methods for large-scale LTI dynamical
  systems, IFAC Proceedings Volumes 46~(2) (2013) 719--724.

\bibitem{du2021computational}
X.~Du, K.~I.~B. Iqbal, M.~M. Uddin, A.~M. Fony, M.~T. Hossain, M.~I. Ahmad,
  M.~S. Hossain, Computational techniques for $\mathcal{H}_2$ optimal
  frequency-limited model order reduction of large-scale sparse linear systems,
  Journal of Computational Science 55 (2021) 101473.

\bibitem{zulfiqar2023frequency}
U.~Zulfiqar, X.~Du, Q.-Y. Song, V.~Sreeram, On frequency-and time-limited
  $\mathcal{H}_2$-optimal model order reduction, Automatica 153 (2023) 111012.

\bibitem{petersson2013nonlinear}
D.~Petersson, A nonlinear optimization approach to $\mathcal{H}_2$-optimal
  modeling and control, Ph.D. thesis, Link{\"o}ping University Electronic Press
  (2013).

\bibitem{higham2008functions}
N.~J. Higham, Functions of matrices: theory and computation, SIAM, 2008.
  
\bibitem{gallivan2004sylvester}
K.~Gallivan, A.~Vandendorpe, P.~Van~Dooren, Sylvester equations and
  projection-based model reduction, Journal of Computational and Applied
  Mathematics 162~(1) (2004) 213--229.

\bibitem{wortelbore1994frequency}
P.~Wortelbore, Frequency weighted balanced reduction of closed-loop mechanical
  servo-system, Ph. D. thesis, Delft University of Technology, Delft (1994).

\bibitem{zulfiqar2022frequency}
U.~Zulfiqar, V.~Sreeram, M.~I. Ahmad, X.~Du, Frequency-weighted
  $\mathcal{H}_2$-optimal model order reduction via oblique projection,
  International Journal of Systems Science 53~(1) (2022) 182--198.

\bibitem{zulfiqar2021weighted}
U.~Zulfiqar, V.~Sreeram, M.~Ilyas~Ahmad, X.~Du, Frequency-weighted
  $\mathcal{H}_2$-pseudo-optimal model order reduction, IMA Journal of
  Mathematical Control and Information 38~(2) (2021) 622--653.

\bibitem{demmel1999asynchronous}
J.~W. Demmel, J.~R. Gilbert, X.~S. Li, An asynchronous parallel supernodal
  algorithm for sparse gaussian elimination, SIAM Journal on Matrix Analysis
  and Applications 20~(4) (1999) 915--952.

\bibitem{demmel1999supernodal}
J.~W. Demmel, S.~C. Eisenstat, J.~R. Gilbert, X.~S. Li, J.~W. Liu, A supernodal
  approach to sparse partial pivoting, SIAM Journal on Matrix Analysis and
  Applications 20~(3) (1999) 720--755.

\bibitem{davis2004algorithm}
T.~A. Davis, Algorithm 832: UMFPACK v4. 3---An unsymmetric-pattern multifrontal
  method, ACM Transactions on Mathematical Software (TOMS) 30~(2) (2004)
  196--199.

\bibitem{benner2014computing}
P.~Benner, P.~K{\"u}rschner, Computing real low-rank solutions of Sylvester
  equations by the factored ADI method, Computers \& Mathematics with
  Applications 67~(9) (2014) 1656--1672.

\bibitem{benner2014self}
P.~Benner, P.~K{\"u}rschner, J.~Saak, Self-generating and efficient shift
  parameters in ADI methods for large Lyapunov and Sylvester equations,
  Electronic Transactions on Numerical Analysis 43 (2014) 142--162.

\bibitem{rommes2006efficient}
J.~Rommes, N.~Martins, Efficient computation of multivariable transfer function
  dominant poles using subspace acceleration, IEEE transactions on power
  systems 21~(4) (2006) 1471--1483.

\bibitem{martins2007computation}
N.~Martins, P.~C. Pellanda, J.~Rommes, Computation of transfer function
  dominant zeros with applications to oscillation damping control of large
  power systems, IEEE transactions on power systems 22~(4) (2007) 1657--1664.

\bibitem{vuillemin2014thesis}
P.~Vuillemin, Frequency-limited model approximation of large-scale dynamical
  models, Ph.D. thesis, ISAE-Institut Sup{\'e}rieur de l'A{\'e}ronautique et de
  l'Espace (2014).

\bibitem{benner2023towards}
P.~Benner, K.~Lund, J.~Saak, Towards a benchmark framework for model order
  reduction in the mathematical research data initiative (MARDI), PAMM 23~(3)
  (2023) e202300147.

\bibitem{diaz2023interpolatory}
A.~N. Diaz, M.~Heinkenschloss, I.~V. Gosea, A.~C. Antoulas, Interpolatory model
  reduction of quadratic-bilinear dynamical systems with quadratic-bilinear
  outputs, Advances in Computational Mathematics 49~(6) (2023) 95.

\bibitem{horner1986cofs}
G.~Horner, COFS I research overview, NASA/DOD Control/Structures Interaction
  Technology (1986) 233--251.

\bibitem{horta1986analysis}
L.~G. Horta, J.~L. Walsh, G.~C. Horner, J.~P. Bailey, Analysis and simulation
  of the mast (COFS-1 flight hardware), NASA (DOD Control) Structures
  Interaction Technology, 1986 (1986).

\bibitem{petersen2008matrix}
K.~B. Petersen, M.~S. Pedersen, et~al., The matrix cookbook, Technical
  University of Denmark 7~(15) (2008) 510.

\bibitem{zulfiqar2022adaptive}
U.~Zulfiqar, V.~Sreeram, X.~Du, Adaptive frequency-limited
  $\mathcal{H}_2$-model order reduction, Asian Journal of Control 24~(6) (2022)
  2807--2823.

\end{thebibliography}

\end{document}